\documentclass[11pt]{article}

\usepackage[utf8]{inputenc}     
\usepackage[T1]{fontenc}        
\usepackage{hyperref}           
\usepackage{url}                
\usepackage{booktabs}           
\usepackage{amsfonts}           
\usepackage{mathrsfs}
\usepackage{nicefrac}           
\usepackage{microtype}          
\usepackage{lipsum}             
\usepackage{fancyhdr}           
\usepackage{graphicx}           
\graphicspath{{media/}}         
\usepackage{xcolor}
\usepackage{soul}
\usepackage{multirow}
\usepackage{multicol}
\usepackage{array}
\usepackage{tikz}
\usepackage{amssymb}
\usepackage{amsmath}
\usepackage{algorithm}
\usepackage{algpseudocode}
\usepackage{enumitem}
\usepackage{todonotes}
\usepackage{etoolbox}
\usepackage{longtable}
\usepackage{pdflscape}
\usepackage{colortbl}
\usepackage[showonlyrefs]{mathtools}
\usepackage{amsthm}
\usepackage[dvipsnames]{xcolor}
\usepackage{tcolorbox}
\usepackage{comment}
\usepackage{bbm}
\usepackage{dsfont}
\usepackage{hyperref}
\tcbuselibrary{theorems}
\usepackage{lmodern}

\usepackage{geometry}
\definecolor{chestnut}{rgb}{0.8, 0.36, 0.36}
\definecolor{isabelline}{rgb}{0.96, 0.94, 0.93}

\newtcbtheorem[number within=section]{importantthm}{Theorem}%
{colback=isabelline,colframe=chestnut,fonttitle=\bfseries}{th}

\newcommand{\var}{\operatorname{var}}
\newcommand{\cov}{\operatorname{cov}}

\newcommand{\corr}{\operatorname{corr}}
\newcommand{\spanning}{\operatorname{span}}
\newcommand{\diag}{\operatorname{diag}}
\newcommand{\covop}{\operatorname{\mathcal{C}}}
\newcommand{\e}{\mathbb{E}}

\newcommand{\la}{\langle}
\newcommand{\ra}{\rangle}
\newcommand{\ts}{\hspace*{0.1em}}

\definecolor{internationalorange}{rgb}{1.0, 0.4, 0.05}
\definecolor{darkmagenta}{rgb}{0.55, 0.0, 0.55}
\definecolor{teal}{rgb}{0.0, 0.5, 0.5}

\newcommand{\ba}{\mathbf{a}}
\newcommand{\bA}{\mathbf{A}}
\newcommand{\bb}{\mathbf{b}}
\newcommand{\bB}{\mathbf{B}}
\newcommand{\bC}{\mathbf{C}}

\newcommand{\bbf}{\mathbf{f}}
\newcommand{\bg}{\mathbf{g}}
\newcommand{\bG}{\mathbf{G}}

\newcommand{\bPi}{\mathbf{\Pi}}

\newcommand{\bu}{\mathbf{u}}
\newcommand{\bw}{\mathbf{w}}
\newcommand{\bW}{\mathbf{W}}
\newcommand{\cW}{\mathcal{W}}

\newcommand{\bone}{\mathbf{1}}
\newcommand{\bmu}{\boldsymbol{\mu}}
\newcommand{\bnu}{\boldsymbol{\nu}}
\newcommand{\bpi}{\boldsymbol{\pi}}

\newtheorem{theorem}{Theorem}[section]

\newtheorem{lemma}[theorem]{Lemma}

\newtheorem{definition}[theorem]{Definition}
\newtheorem{example}[theorem]{Example}
\newtheorem{remark}[theorem]{Remark}

\title{Spectral clustering of time-evolving networks using spatio-temporal random walks}

\author{
  Filip Bla\v skovi\' c\thanks{Zuse Institute Berlin, Berlin, Germany. \texttt{blaskovic@zib.de}} \and
  Tim O.F. Conrad\thanks{Zuse Institute Berlin, Berlin, Germany. \texttt{conrad@zib.de}} \and
  Stefan Klus\thanks{Heriot--Watt University, Edinburgh, UK. \texttt{S.Klus@hw.ac.uk}} \and
  Nata\v sa Djurdjevac Conrad\thanks{Zuse Institute Berlin, Berlin, Germany. \texttt{natasa.conrad@zib.de}}
}

\date{}

\begin{document}
\maketitle

\begin{abstract}
Temporal (or time-evolving) networks provide a natural framework for modeling complex systems with time-dependent interactions, where understanding the evolution of community structures is a central challenge. While random walk-based approaches to community detection in static networks are well established through the spectral analysis of associated transfer operators, extending these ideas to temporal networks is nontrivial due to the inherent time-dependence of the underlying dynamics. In this work, we develop a general framework for community detection in temporal networks that is based on \emph{multi-view canonical correlation analysis} (mCCA). We show that the proposed formulation admits a spectral characterization via a time-reversible random walk on an augmented space--time network, providing a clear dynamical interpretation of temporal communities as metastable structures of the process. Furthermore, we analyze key spectral properties of the resulting transfer operators and the interplay between spatial and temporal effects, which allows us to distinguish between structural features and artifacts induced by the snapshot coupling. Finally, we derive a reduced-order model, which preserves the essential spectral properties while significantly improving computational efficiency. We show that the proposed approach effectively detects communities in temporal networks and captures their evolution.
\end{abstract}

\noindent\textbf{Keywords:} temporal networks, spectral clustering, community detection, random walks, transfer operators

\section{Introduction}

Community detection is one of the central problems in network science. In static networks, spectral methods based on random walks provide a powerful framework for identifying communities as metastable regions of an underlying stochastic process. The key idea is that communities correspond to subsets of nodes between which transitions occur only rarely, resulting in slowly decaying dynamical modes that can be characterized through the spectral properties of associated transfer operators \cite{Huisinga_2006, Djurdjevac_2010, Schuette_2013, Sarich_2014, Klus_2023, Klus_2024_1}.

Many real-world networks, however, are inherently dynamic. Social interactions, mobility patterns, communication networks, and contact networks relevant for epidemic spreading evolve over time, giving rise to temporal (or time-evolving) networks \cite{Holme_2012, Masuda_2013, Mancastroppa_2020}. In epidemiological applications, identifying persistent and evolving communities is particularly important, as such structures can strongly influence transmission pathways and the effectiveness of intervention strategies. More generally, however, one seeks to identify communities that persist over extended periods while allowing for structural changes such as community splits, merges, or changes in membership. Extending the dynamical perspective on community detection from static to temporal networks is challenging, since the underlying dynamics are no longer governed by a single transfer operator but by a family of time-dependent operators.

While the notion of a community is well understood in static networks, its extension to temporal networks is considerably more subtle. Time-evolving systems are commonly represented as temporal networks, that is, sequences of graph snapshots that share a common vertex set but exhibit evolving edge structure. From a mathematical perspective, this setting raises several fundamental challenges: how to compare network structure across time, how to define notions of coherence and persistence, how to detect phases of stability during the network evolution, and how to extend spectral methods for community detection from static to temporal networks \cite{Rossetti_2019, Blaskovic_2025}. The goal is to cluster nodes across network snapshots into coherent groups, that is, subsets of nodes that remain densely connected over extended periods of time, see figure~\ref{fig:problem_illustration}.

\begin{figure}
    \centering
    \includegraphics[width=1\linewidth]{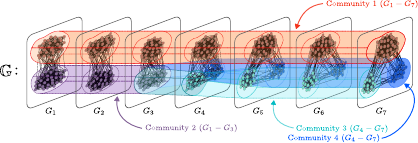}
    \caption{Our goal is to cluster nodes across network snapshots into subsets that remain densely connected over extended time intervals. The nodes highlighted in red remain densely connected internally and only sparsely connected to the rest of the network, forming a persistent community in $\mathbb G$ during the whole evolution of the network. In the first three snapshots, the remaining nodes form a second (purple) community. At time $t=4$ (snapshot $G_4$), this community splits into two smaller communities (light blue and dark blue), which then persist from snapshots $G_4$ to $G_7$. Our goal is to recover these coherent groups of nodes (red, purple, light blue, and dark blue).}
    \label{fig:problem_illustration}
\end{figure}

In this work, we adopt a dynamical perspective on community detection and interpret communities as coherent structures of an underlying random-walk process~\cite{Huisinga_2006, Froyland_2010, Djurdjevac_2012_phd, Schuette_2013, Sarich_2014, Padberg-Gehle_2017}. From a network-theoretic perspective, communities are subsets of nodes that are densely connected internally and only sparsely connected to the remainder of the network. Dynamically, they correspond to regions in which a random walker becomes temporarily trapped. Starting from such a region, the probability of remaining within it is high, while transitions to other parts of the network occur only rarely. This behavior is commonly referred to as \emph{metastability}. More precisely, a random process is metastable if the state space can be partitioned into subsets such that mixing within each subset occurs on a much shorter timescale than transitions between different subsets.

In static networks, this separation of timescales is reflected in the spectral properties of transfer operators associated with the random-walk process~\cite{von_Luxburg_2007, Djurdjevac_2010, Klus_2023, Klus_2024_1}. In particular, for time-reversible random walks, the Koopman operator, which describes the evolution of real-valued functions of the nodes (network observables), is self-adjoint with respect to a weighted inner product. This implies a real spectrum and an orthogonal basis of eigenvectors. The dominant eigenvalues close to one correspond to slowly decaying modes of the dynamics. The associated eigenvectors can be interpreted as network observables that are approximately invariant under the action of the Koopman operator and whose level sets partition the nodes into metastable subsets of the underlying random walk, i.e., into communities~\cite{Huisinga_2006, Budisic_2012, Djurdjevac_2012_phd, Schuette_2013, Klus_2016, Klus_2024}. This operator-theoretic viewpoint provides a principled way to detect and interpret community structures via the dynamics of the underlying random-walk process.

Here, we study the evolution of communities in temporal networks in the same manner, where the process now propagates across snapshots according to the underlying network structure. However, extending the dynamical perspective on community detection from static to temporal networks is non-trivial. Dynamical processes on networks whose structure changes over time are time-dependent and are therefore described by a whole family of Koopman operators. As a consequence, the rich set of tools developed for community detection in static networks is not directly applicable in the temporal setting. Early approaches to community detection in temporal networks proceed by detecting communities independently within each snapshot and matching them across time \cite{Aynaud_2013, Rossetti_2019}. However, such methods face several fundamental challenges, including the ambiguity of community labels, the instability of static community detection algorithms, which may produce significantly different partitions even under minor or no changes in the network \cite{Aynaud_2010}, as well as the dependence on ad hoc matching procedures and parameter tuning. 

In this paper, we turn to \emph{Canonical Correlation Analysis} (CCA), a procedure that aims to find linear transformations of random variables that maximize the correlation between them~\cite{Hotelling_1936}. We note that the computation of dominant eigenvectors of transfer operators associated with static networks can be reformulated as an instance of CCA, in which one seeks to maximize the correlation between observables at different time steps of the random process~\cite{Klus_2024}. We build on the extension of this approach proposed in~\cite{Trower_2025}, where a generalization of CCA, known as multi-view CCA~\cite{Shawe-Taylor_2004}, is employed. A temporal network is represented as a sequence of static networks (snapshots), for example obtained by ``freezing'' the structure of the temporal network at equally spaced time intervals with a desired frequency. Multi-view CCA is then used to maximize a weighted correlation between observables across adjacent snapshots whose level sets are later used to cluster nodes from different snapshots into communities that persist over time. We extend this framework by introducing a general coupling scheme that allows for incorporating correlations between arbitrary pairs of snapshots with different relative weights into the overall objective function. This leads to a more robust method that is more resilient to noisy fluctuations and reduces the emergence of highly correlated observables across snapshots whose level sets do not correspond to meaningful temporal communities. Moreover, this approach enables flexible adaptation to different application scenarios, while preserving the underlying framework. For instance, one can explore whether the data exhibits periodic behavior through a cyclic coupling of the last snapshots of the temporal network to the first snapshots.

The multi-view CCA approach enables us not only to detect sets of nodes that remain densely interconnected over longer time periods, but also to capture changes in community structure throughout the evolution. In contrast to classical CCA, which correlates only two selected time points, this method incorporates the full temporal information and thus retains insight into intermediate structural changes that would otherwise remain invisible when considering only particular time points, such as the beginning and the end of the network evolution.

We show that the multi-view CCA approach with a general coupling scheme reduces to the analysis of spectral properties of a time-reversible random walk on a static network defined on an augmented node set, whose structure encodes both the spatial organization of the temporal network at different time points and the temporal ordering of snapshots. In this representation, a random walker can move not only in space (between nodes) but also through time (between snapshots), which naturally connects our approach to community detection methods for static networks and provides a clear and intuitive interpretation of temporal communities as coherent sets of the associated spatio-temporal random walk.

Our perspective is related to a common approach in the literature, where temporal networks are represented as space--time (multilayer) networks by introducing inter-snapshot edges~\cite{Kivela_2014}, and random processes on such augmented networks are analyzed. In many constructions, only identical nodes are connected across snapshots, typically with uniform inter-snapshot weights~\cite{Kivela_2014, Bazzi_2016, Froyland_2024}, and connections are often restricted to consecutive snapshots~\cite{Taylor_2017}. As a result, the chain-like geometry of the snapshot coupling scheme gives rise to slowly decaying modes, which may be misinterpreted as metastable behavior even in the absence of any intrinsic community structure. Some models allow connections between different nodes across snapshots, but they either rely entirely on user-defined parameters \cite{De_Domenico_2013} or disregard the temporal ordering of snapshots, thereby identifying communities in an aggregated network rather than over contiguous time intervals \cite{Aslak_2018}. In contrast, the space--time network induced by the snapshot coupling and the structure of the temporal network provides a more principled construction. The weights of inter-snapshot connections are determined by the structure of the snapshots themselves and modulated by a flexible notion of temporal proximity. This allows us to simultaneously capture structural and temporal relationships, while at the same time controlling and reducing spurious metastability effects arising from the geometry of the coupling scheme.

The key observation of this paper is that a broad class of multi-view CCA formulations for temporal community detection is equivalent to the spectral analysis of a reversible Markov process on an augmented space-time network. This equivalence enables a rigorous analysis of temporal communities using transfer-operator techniques and provides a principled framework for distinguishing genuine structural features from artifacts induced by temporal coupling.

Our main contributions are threefold:
\begin{itemize}
    \item[(i)] First, we derive a general multi-view CCA model for temporal networks based on Perron--Frobenius and Koopman operators and show that this formulation reduces to a generalized eigenvalue problem. Moreover, we establish a connection to a random walk on an augmented static network representing the original temporal network. This perspective complements existing notions of spatio-temporal random walks on temporal networks.
    \item[(ii)] Second, we analyze the spectral properties of the resulting spatio-temporal dynamics and investigate how temporal effects induced by the snapshot coupling interact with spatial effects arising from the network structure. A principled interpretation of the associated eigenvalues and eigenvectors has largely been missing in the literature, yet is essential for separating spatial and temporal contributions, that is, for distinguishing between effects that arise purely from the coupling mechanism and those that encode meaningful structural events such as community splits, merges, or changes in community membership.
    \item[(iii)] Lastly, we derive a reduced model of the spatio-temporal random walk for community structure analysis. We prove that the aforementioned spectral properties transfer to the projected setting under mild conditions and show that the resulting reduced model remains effective in practice. Since the dimensionality of the full model grows quadratically with both the number of nodes and the number of snapshots, computations quickly become expensive even for moderately large datasets, making this reduction a natural and useful extension. In addition, we provide spectral error bounds that quantify the accuracy of the reduced representation.
\end{itemize}

We first introduce the necessary theoretical background on random walks and transfer operators in section~\ref{sec:theoretical_background}. We then develop the multi-view CCA framework and analyze its spectral properties in section~\ref{sec:community_detection_in_temporal_networks}. We present the reduced-order modeling approach and numerical results in sections~\ref{sec:model_reduction} and~\ref{sec:numerical_examples}. Finally, we provide a brief conclusion and suggest directions for future research in section~\ref{sec:conclusion}.

\section{Theoretical background}
\label{sec:theoretical_background}

In this section, we provide a brief overview of the required concepts. Let $\mathbb{G} = (G_1, \dots, G_M)$ be a temporal network defined over an observation period partitioned into $M$ discrete time steps. Each index $t \in \{1, \dots, M\}$ corresponds to a static snapshot $G_t = (V, E_t, \omega_t)$. In this formulation, the node set $V = \{v_1, \dots, v_N\}$ is fixed across all snapshots, the edge set $E_t \subseteq V\times V$ represents the undirected interactions occurring during time step $t$, and $\omega_t: E_t\rightarrow\mathbb R$ denotes a symmetric function assigning weights to the edges of $G_t$. The (weighted) degree of node $v_i$ of $G_t$ is given by
\begin{equation*}
\deg_t(v_i) = \sum_{j=1}^N \omega_t(v_i,v_j),
\end{equation*}
and the corresponding diagonal degree matrix is defined by
\begin{equation*}
D_t = \operatorname{diag}(\deg_t(v_1), \dots, \deg_t(v_N)).
\end{equation*}

\subsection{Random walks on temporal networks}
\label{sec:random_walks_on_temporal_networks}

On a static network, a standard random walk process is defined as a discrete-time Markov chain that moves between adjacent nodes, where the transition probabilities are proportional to the edge weights. Since the edge set is fixed, the transition rules do not depend on time but only on the current state of the process. We extend this notion from static to temporal networks and consider a random process $\{X_t\}_{t=1}^M$ evolving across the snapshots of a temporal network $\mathbb G$.

Let $A_t\in\mathbb R^{N\times N}$ denote the weighted adjacency matrix of $G_t$ where $(A_t)_{ij}=\omega_t(v_i,v_j)$. Each snapshot $G_t$ induces its own transition matrix
\begin{equation}
\label{eq:transition_matrix}
    S_t = D_t^{-1} A_t,
\end{equation}
which governs the random walk dynamics within that snapshot. Starting in snapshot $G_1$, the process evolves forward in time by moving to the next snapshot at each time step and following the transition rules determined by the transition matrix $S_t$ at time $t$. As a consequence, the resulting process is in general time-inhomogeneous and may no longer be reversible. Therefore, many classical methods for community detection developed for static networks are not directly applicable in this setting. Nevertheless, the underlying intuition remains similar. Although edges in a temporal network may appear or disappear over different snapshots, if a subset of nodes remains densely connected (coherent) throughout the network's evolution, then a random walk process starting in this region will with high probability remain confined to it for as long as this structure persists. This observation is consistent with the interpretation of communities as metastable sets of random processes in static networks.

In what follows, we adopt a transfer operator perspective on network dynamics and extend it from static to temporal networks.

\begin{definition}[Observables and probability distributions]
\label{def:observables}
Let 
\[
\mathbb U = \{ f \mid f \colon V \to \mathbb{R} \}
\]
denote the space of real-valued functions defined on the node set $V$. 
A function $\rho \in \mathbb U$ is called a \emph{probability distribution} if $ \rho(v_i) \ge 0 
$, for all $i = 1, \dots, N $, and
\[
\sum_{i=1}^{N} \rho(v_i) = 1.
\]
In general, functions $f\in\mathbb U$ are referred to as \emph{network observables}.
\end{definition}

\begin{definition}[Standard and weighted inner products]
Let $ f,g \in \mathbb U $ be two network observables. The standard inner product on $\mathbb U$ is given by
\[
\langle f, g \rangle 
=
\sum_{i=1}^{N} f(v_i)\, g(v_i).
\]
Given a probability distribution $\mu$, the corresponding $\mu$-weighted inner product on $\mathbb U$ is defined by
\[
\langle f, g \rangle_{\mu}
=
\sum_{i=1}^{N} f(v_i)\, g(v_i)\, \mu(v_i).
\]
\end{definition}

Analogously to random processes on static networks, the dynamics of the random process $\{X_t\}_{t=1}^M$ on a temporal network can be described from two dual perspectives: through the evolution of its probability distributions and through the evolution of network observables across time snapshots \cite{Otto_2021}. For distinct time points $t$ and $s$ with $t < s$, we now define two transfer operators.

\begin{definition}[Multi step Perron--Frobenius and Koopman operators] 
\label{def:multi-step_transfer_operators}
    Let $\rho \in \mathbb{U}$ be a probability distribution and let $f\in\mathbb U$ be a network observable. Let $S_{ts}=S_tS_{t+1}\cdots S_{s-1}$ for $t<s$, where the entry $(S_{ts})_{ij}$ is the probability that a random walker starting in node $v_i$ at time $t$ will end up in node $v_j$ at time~$s$.
    \begin{itemize}
    \item[i)] The \emph{multi step Perron--Frobenius operator} between times $t$ and $s$, denoted by $\mathcal{P}_{ts} \colon \mathbb{U} \rightarrow \mathbb{U}$, is defined by
    \[
    \mathcal{P}_{ts} \rho(v_i)=\sum_{j=1}^N (S_{ts})_{ji} \, \rho(v_j).
    \]
    \item[ii)] The \emph{multi step Koopman operator} between times $t$ and $s$, denoted by $\mathcal{K}_{ts} \colon \mathbb{U} \rightarrow \mathbb{U}$, is defined by
    \[
    \mathcal{K}_{ts} f(v_i) = \sum_{j=1}^N (S_{ts})_{ij} f(v_j)=\e[f(X_{s})\mid X_t=v_i].
    \]
\end{itemize}
\end{definition}

Matrix representations $P_{ts} \in \mathbb{R}^{N\times N}$ and $K_{ts} \in \mathbb{R}^{N\times N}$ of the operators $ \mathcal{P}_{ts} $ and $ \mathcal{K}_{ts} $, respectively, are given by
\begin{equation}
\label{eq:koopman_mat_rep}
P_{ts} = S_{ts}^{\top} \quad\text{and}\quad K_{ts} = S_{ts}.
\end{equation}
Suppose the ensemble of random walkers is initialized on the first snapshot $G_1$ with probability distribution $\mu_1$. The evolution of probability distributions is then given by
\begin{equation}
\label{eq:evolution_of_prob_distr}
\mu_{s} = \mathcal{P}_{ts} \ts \mu_t,
\end{equation}
where $\mu_t$ and $\mu_s$ denote the probability distributions of the process at times $t$ and $s$ (on snapshots $G_t$ and $G_s$). On the level of observables, the dynamics are described by the family of Koopman operators $\mathcal{K}_{ts}$. Applied to an observable $f$ at time $t$, the operator $\mathcal{K}_{ts}$ propagates it $s-t$ steps forward according to the random walk dynamics between times $t$ and $s$ and computes its expected values at the network nodes at time $s$. We now introduce a reweighted Perron--Frobenius operator that propagates observables from time $t$ to time $s$ with respect to the reference distributions $\mu_t$ and $\mu_s$.

\begin{definition}[Multi step Reweighted Perron--Frobenius operator]
\label{def:multi_step_reweighted_PF}
Let $u \in \mathbb{U}$ and let $S_{ts} = S_t S_{t+1} \cdots S_{s-1}$ for $t < s$.  
Let $\mu_t$, $1 \le t \le M$, denote the probability distributions of the random process on the temporal network at time $t$. The \emph{multi step reweighted Perron--Frobenius operator} between times $t$ and $s$, denoted by $\mathcal{T}_{ts} : \mathbb{U} \to \mathbb{U}$, is defined by
\[
\mathcal{T}_{ts} u(v_i)
=
\frac{1}{\mu_s(v_i)}
\sum_{j=1}^{N} (S_{ts})_{ji} \, \mu_t(v_j) \, u(v_j)
=
\mathbb{E}[u(X_t) \mid X_s = v_i].
\]
\end{definition}

Let $\bmu_t=[\mu_t(v_1),\dots,\mu_t(v_N)]^{\top}$ denote the vector representation of the probability distribution $\mu_t$, and let $D_{\mu_t} = \diag(\bmu_t)$. The matrix representation of the reweighted Perron--Frobenius operator $T_{ts} \in \mathbb{R}^{N \times N}$ is given by
\begin{equation}
\label{eq:reweighted_perron_frobenius_mat_rep}
T_{ts} = D_{\mu_s}^{-1} (S_{ts})^\top D_{\mu_t}.
\end{equation}

The operators $\mathcal{K}_{ts}$ and $\mathcal{T}_{ts}$ (and their respective matrix representations $K_{ts}$ and $T_{ts}$) are adjoint with respect to the inner products induced by the probability distributions $\mu_t$ and $\mu_s$, that is,
\[
\la f, \mathcal K_{ts} g\ra_{\mu_t}=\la \mathcal T_{ts} f,g\ra_{\mu_{s}}.
\]

\begin{remark}
\label{rem:single_vs_multi_step_tr_op}
 Definitions~\ref{def:multi-step_transfer_operators} and~\ref{def:multi_step_reweighted_PF} generalize the standard Perron--Frobenius and Koopman operators for discrete-time dynamics \cite{Klus_2023}, which describe the one-step propagation of probability distributions and observables and can be recovered from our definition by $s=t+1$. In \cite{Trower_2025}, such one-step operators were used to model propagation only between consecutive snapshots. Here, we additionally introduce operators that characterize propagation across multiple snapshots.
\end{remark}

\section{Community detection in temporal networks}
\label{sec:community_detection_in_temporal_networks}

Guided by the intuition from the static network case, we seek observables for each snapshot that remain highly correlated under the underlying random-walk process. More precisely, the goal is to identify observables that preserve information about the coherent behavior of the time-inhomogeneous random walk, such that their level sets partition the nodes across all snapshots into communities that persist over time. This perspective naturally leads to a multi-view extension of the CCA formulation~\cite{Shawe-Taylor_2004}, where we look for a collection of observables $f_t$ for $1 \le t \le M$ (i.e., one observable for each snapshot), that maximize a weighted correlation between the random variables $f_1(X_1), \dots, f_M(X_M)$. In this section, we formalize this idea and begin by defining correlations between network observables at different snapshots.

\subsection{Multi-view canonical correlation analysis (mCCA)}
\label{sec:covariance_operators_and_weighted_correlation_optimization}

To quantify and maximize temporal correlations, we introduce covariance and cross-covariance operators associated with the underlying random process. Building on \cite{Klus_2024, Trower_2025}, we generalize the notion of cross-covariance by defining it between arbitrary time points $t$ and $s$.

\begin{definition}[Covariance and cross-covariance operators]
\label{def:covariance_operators}
Let $ f_t \in \mathbb{U} $, with $ 1\le t\le M $, denote a network observable for the snapshot $G_t$. Let $\mu_t$ be the probability distribution at time $t$ and let $S_{ts}$ denote the transition probability matrix from time $t$ to time $s$ as given in definition~\ref{def:multi-step_transfer_operators}.
\begin{itemize}
    \item[i)] The \emph{covariance operator} at time $t$, $\mathcal{C}_{tt} \colon \mathbb{U} \rightarrow \mathbb{U}$, is defined as
    \[
    \mathcal{C}_{tt} f_t(v_i) = \mu_t(v_i) f_t(v_i).
    \]
    \item[ii)] The \emph{cross-covariance operator} between times $t$ and $s$ for $t<s$, $\mathcal{C}_{ts} \colon \mathbb{U} \rightarrow \mathbb{U}$, is defined as
    \[
    \mathcal{C}_{ts} f_s(v_i) = \sum_{j=1}^N \mu_t(v_i) (S_{ts})_{ij} f_s(v_j).
    \]
\end{itemize}
\end{definition}

We can now express the covariance of an observable $f_t$ at time $t$ as 
\[
\langle f_t, \covop_{tt} f_t \rangle
=
\sum_{i=1}^N \mu_t(v_i) f_t(v_i)^2
=
\mathbb{E}_{\mu_t}[f_t(X_t)^2]
=
\var(f_t)
\]
and the cross-covariance of the observables $f_t$ and $f_s$ at times $t$ and $s$ as
\[
\langle f_t, \covop_{ts} f_s \rangle
=
\sum_{i=1}^N \sum_{j=1}^N 
f_t(v_i)\,\mu_t(v_i)\,(S_{ts})_{ij}\,f_s(v_j)
=
\mathbb{E}_{\mu_{ts}}
\big[ f_t(X_t) f_s(X_{s}) \big]
=
\cov(f_t,f_s),
\]
where 
$
\mu_{ts}(v_i,v_j)
=
\mathbb{P}(X_t = v_i,\, X_{s} = v_j)
=
\mu_t(v_i)\, (S_{ts})_{ij}
$
is the joint probability distribution of the random process at times $t$ and $s$. 

Let $\bbf_t$ and $\bbf_s$ denote the vector representations of the observables $f_t$ and $f_s$ for $t<s$ and let $C_{tt}$ and $C_{ts}$ be the matrix representations of the operators $\mathcal C_{tt}$ and $\mathcal C_{ts}$. Then
\[
\var(f_t)
=
\langle \bbf_t, C_{tt}\bbf_t\rangle
=
\bbf_t^\top C_{tt}\bbf_t
\quad
\text{and}
\quad
\cov(f_t,f_s)
=
\langle \bbf_t, C_{ts}\bbf_s\rangle
=
\bbf_t^\top C_{ts}\bbf_s.
\]
The correlation between observables $f_t$ and $f_s$ is therefore given by
\begin{equation}
\label{eq:correlation}
\corr(f_t,f_s)
=
\frac{\bbf_t^\top C_{ts}\bbf_s}
{\sqrt{\bbf_t^\top C_{tt}\bbf_t}\,
 \sqrt{\bbf_s^\top C_{ss}\bbf_s}}.
\end{equation}
Our objective is to maximize the weighted correlation across all snapshots
\begin{equation}
\label{eq:weighted_correlation}
\max_{f_t\in\mathbb U}
\sum_{\substack{t,s=1 \\ t < s}}^{M}
w_{ts} \, \corr(f_t, f_{s}),
\end{equation}
where $w_{ts}\ge0$, for $1\le t<s\le M$, are coupling weights that quantify the importance of the correlation between observables at times $t$ and $s$. Although only the weights with $t<s$ enter the objective function \eqref{eq:weighted_correlation}, we extend them for convenience to all pairs of indices by defining $w_{st}=w_{ts}$ and $w_{tt}=0$. This convention simplifies several expressions and derivations below. Since our aim is to identify temporal communities and detect events corresponding to structural changes, the coupling weights should encode temporal proximity: snapshots that are close in time should be coupled with higher weights, while those farther apart receive progressively lower weights. 
As persistent communities are characterized by stability over longer periods, coupling only consecutive snapshots may therefore make the solutions of \eqref{eq:weighted_correlation} sensitive to short-term fluctuations and noise. By coupling observables across longer time intervals, we obtain more robust and temporally consistent solutions. Furthermore, depending on the application, alternative snapshot coupling schemes may be more appropriate, e.g. couplings between temporally distant snapshots can be used to reveal recurring or periodic community patterns. We will illustrate these effects through numerical examples in section~\ref{sec:numerical_examples}.

Since the correlation \eqref{eq:correlation} is invariant under scalar multiplication of observables, i.e.,
\[
\corr(f_t,f_s)=\corr(a f_t,b f_s)
\quad \text{for all } a,b\in\mathbb{R},
\]
we may assume, without loss of generality, that all observables $f_t$, with $1\le t\le M$, satisfy $\var(f_t)=1$. We can then rewrite \eqref{eq:weighted_correlation} in the form
\begin{equation}
\label{eq:maximization_problem}
\max_{\bbf_t}\sum_{\substack{t,s=1 \\ t < s}}^{M}w_{ts}\bbf_t^\top C_{ts}\bbf_{s}
\hspace{4mm}\text{ s.t.}\hspace{1mm}\bbf_t^\top C_{tt}\bbf_t=1, \text{ for all } t=1,...,M.
\end{equation}
Following the approach from \cite{Trower_2025}, we use Lagrange multipliers to solve this optimization problem, resulting in
\begin{equation*}
    \mathcal{L}(\bbf_1,\lambda_1,...,\bbf_M,\lambda_M)=\sum_{\substack{t,s=1 \\ t < s}}^{M}w_{ts}\bbf_t^\top C_{ts}\bbf_s -\sum_{t=1}^{M}\frac{r_t}{2}\lambda_t(\bbf_t^\top C_{tt}\bbf_t-1), 
\end{equation*}
where we rescale each Lagrange multiplier $\lambda_t$ by the factor $ \frac{r_t}{2} $, with $r_t=\sum_{s=1}^{M} w_{ts}$. Here, $r_t$ serves as a normalization factor that accounts for the total weight assigned to snapshot $G_t$, while the factor $\frac{1}{2}$ cancels out when differentiating the quadratic term, which simplifies the subsequent computations.
 
For any observables $\bbf_1,\dots,\bbf_M$ that solve \eqref{eq:maximization_problem}, the condition $\partial_{\bbf_t}\mathcal{L} = 0$ must hold for all $1\le t\le M$. Taking derivatives of the Lagrangian function with respect to $\bbf_1,\dots,\bbf_M$, and using $w_{ts}=w_{st}$ we obtain the equations
\begin{equation}
\label{eq:derivatives}
    \begin{split}
        \partial_{\bbf_1}\mathcal L&= w_{12} C_{12}\bbf_2+w_{13}C_{13}\bbf_3+\dots+w_{1M}C_{1M}\bbf_M-r_1\lambda_1C_{11}\bbf_1=0,\\
        \partial_{\bbf_2}\mathcal L&= w_{21} C_{12}^\top\bbf_1+ w_{23} C_{23}\bbf_3+\dots+w_{2M}C_{2M}\bbf_M-r_2\lambda_2C_{22}\bbf_2=0,\\
        \vdots\\
        \partial_{\bbf_M}\mathcal L&= w_{M1}C_{1M}^\top\bbf_1 + w_{M2}C_{2M}^\top \bbf_2+\cdots+w_{M(M-1)}C_{(M-1)M}^\top\bbf_{M-1}-r_M\lambda_MC_{MM}\bbf_M=0.
    \end{split}
\end{equation}

Allowing independent multipliers $\lambda_t$ would lead to a multiparameter eigenvalue problem \cite{Atkinson_1972, Eisenmann_2025}. Such problems fall outside classical spectral theory and typically require more advanced analytical and numerical techniques. While this would be an interesting direction for future research, it is beyond the scope of the present work. 

In what follows, we  assume that $\lambda_t = \lambda$ for all $t=1,\dots,M$, which allows us to rewrite the system in the form of a generalized eigenvalue problem. We introduce the \emph{snapshot coupling matrix} $H=[h_{ts}]\in\mathbb R^{M\times M}$ containing the normalized coupling weights between snapshots 
\begin{equation}
\label{eq:compute_H}
h_{ts} = \frac{w_{ts}}{r_t},
\qquad 1 \le t,s \le M .
\end{equation}
For each $t$, dividing the corresponding equation in \eqref{eq:derivatives} by $r_t$ and rewriting the system in matrix form, we obtain 
\begin{equation}
\label{eq:generalized_eigval_problem}
\bA_H \bbf = \lambda \bB \bbf, 
\end{equation}
where
\begin{equation*}
\bA_H =
\begin{bmatrix}
0 & h_{12}C_{12} & \cdots & h_{1M}C_{1M} \\
h_{21}C_{12}^{\top} & 0 & \ddots & \vdots\\
\vdots & \ddots & \ddots & h_{(M-1)M}C_{(M-1)M} \\
h_{M1}C_{1M}^\top & \cdots & h_{M(M-1)}C_{(M-1)M}^{\top} & 0
\end{bmatrix},
\end{equation*}
\begin{equation*}
\bB =
\begin{bmatrix}
C_{11} & 0 & \cdots & 0 \\
0 & C_{22} & \ddots & \vdots \\
\vdots & \ddots & \ddots & 0 \\
0 & \cdots & 0 & C_{MM}
\end{bmatrix} \,\,\, \text{and} \,\,\, 
\bbf=
\begin{bmatrix}
    \bbf_1\\ 
    \bbf_2\\
    \vdots\\
    \bbf_M
\end{bmatrix}.
\end{equation*}
From definition~\ref{def:covariance_operators}, the matrix representations of the covariance and cross-covariance operators are given by
\[
C_{tt} = D_{\mu_t}
\quad \text{and} \quad
C_{ts} = D_{\mu_t} S_{ts}.
\]
Defining $\bC_H=\bB^{-1}\bA_H$, and using~\eqref{eq:koopman_mat_rep} and~\eqref{eq:reweighted_perron_frobenius_mat_rep}, we obtain

\begin{equation}
\label{eq:k_pf_notation}
    \bC_H=
    \begin{bmatrix}
        0 & h_{12}K_{12} & \cdots & h_{1M}K_{1M}\\
        h_{21}T_{12} & 0 & \ddots & \vdots \\
        \vdots & \ddots & \ddots & h_{(M-1)M}K_{(M-1)M}\\
        h_{M1}T_{1M} & \dots & h_{M(M-1)}T_{(M-1)M} & 0
    \end{bmatrix}
\end{equation}

so that \eqref{eq:generalized_eigval_problem} reduces to the standard eigenvalue problem
\[
\bC_H \bbf = \lambda \bbf.
\]

We can interpret the matrix $\bC_H$ from a transfer-operator perspective. In the standard theory of (one-step) transfer operators in discrete-time dynamics (see remark~\ref{rem:single_vs_multi_step_tr_op}), the Koopman operator evaluates an observable at the current state by taking the conditional expectation of its value at the next state. In this sense, it effectively pulls future information back to the present and is therefore often referred to as a \emph{pull-back} operator. In contrast, the reweighted Perron--Frobenius operator propagates information forward in time by determining the values of an observable at the next state from its values at the current state. It thus transports information forward along the dynamics and is commonly referred to as a \emph{push-forward} operator. These interpretations naturally extend to the multi-step transfer operators introduced here. Acting blockwise with $\bC_H$ on $\bbf$, the $t$th segment of size $N$ of $\bC_H\bbf$ is given by
\begin{equation*}
(\bC_H \bbf)_t
=
\sum_{s=1}^{t-1} h_{ts}\, T_{st}\bbf_s
+
\sum_{s=t+1}^{M} h_{ts}\, K_{ts}\bbf_s,
\end{equation*}
i.e., the observable $\bbf_t$ is transformed into a weighted average of pushed-forward and pulled-back observables from different times, transported to time $t$. Therefore, eigenvectors of $\bC_H$ corresponding to large eigenvalues represent observables that are closest to being invariant under the time-inhomogeneous random process described in section~\ref{sec:theoretical_background}. The relative contribution of past and future snapshots to $\bbf_t$ is given by the coupling matrix $H$.

\subsection{Spatio-temporal random walk}
\label{sec:Spatio-temporal_random_walk}

In this section, we study properties of the matrices $H$ and $\bC_H$ and relate them to random walks between snapshots and on the augmented space--time network. Proofs of all the lemmas stated in this section are provided in appendix~\ref{sec:appendix_1}.

\begin{definition}
We define the \emph{snapshot coupling network} to be an undirected weighted network $G_H = (V_H, E_H, \omega_H)$, with the set of nodes $V_H = \{1,\dots,M\}$, the set of edges $E_H \subset V_H \times V_H$ such that $(t,s)\in E_H$ if and only if $w_{ts} \neq 0$ for any $1 \le t < s \le M$ and the symmetric edge-weight function is given by $\omega_H(t,s)=w_{ts}$. 
\end{definition}

\begin{lemma}
\label{lem:H_random_walk}
Let $H$ be the coupling matrix defined in \eqref{eq:compute_H}. Then, there exists a probability distribution $\bpi \in \mathbb{R}^M$ such that the detailed balance condition
\[
D_{\pi} H = H^{\top} D_{\pi}
\]
holds, where $D_{\pi} = \diag(\bpi)$. Consequently, $H$ is the transition matrix of a time-reversible random walk on the coupling network $G_H$. In particular, if $G_H$ is connected, then $\bpi$ is unique.
\end{lemma}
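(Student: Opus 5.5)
The natural candidate for the stationary distribution is the normalized row sums of the weight matrix, i.e.\ the degree distribution of the snapshot coupling network $G_H$. Set $W=[w_{ts}]\in\mathbb R^{M\times M}$ with the conventions $w_{st}=w_{ts}$ and $w_{tt}=0$, so $W$ is symmetric with nonnegative entries. By definition $r_t=\sum_{s}w_{ts}$ is the weighted degree of node $t$ in $G_H$. We set $R=\diag(r_1,\dots,r_M)$, so that $H=R^{-1}W$ by \eqref{eq:compute_H}. We implicitly assume $r_t>0$ for all $t$; otherwise $h_{ts}$ is undefined. This is the standing assumption under which $H$ is defined at all, and it means no snapshot is isolated in $G_H$.

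\emph{Step 1: $H$ is stochastic.} Entries of $H$ are nonnegative, and each row sums to $\sum_s w_{ts}/r_t=1$. Hence $H$ is the transition matrix of the random walk on $G_H$ that moves from $t$ to $s$ with probability proportional to $\omega_H(t,s)$. This is exactly the static random walk \eqref{eq:transition_matrix} on $G_H$.

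\emph{Step 2: detailed balance.} Define $\pi_t=r_t/\sum_{u} r_u$, i.e.\ $\bpi=R\bone/(\bone^\top R\bone)$. This is a probability vector since all $r_t>0$. Then
\[
D_\pi H=\tfrac{1}{\bone^\top R\bone}\,R R^{-1}W=\tfrac{1}{\bone^\top R\bone}\,W .
\]
This matrix is symmetric, so it equals its transpose $H^\top D_\pi$. Entrywise this reads $\pi_t h_{ts}=w_{ts}/\sum_u r_u=\pi_s h_{st}$. Summing the detailed balance identity over $t$ gives $\bpi^\top H=\bpi^\top$, so $\bpi$ is stationary. Together with detailed balance, this means the walk is time-reversible.

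\emph{Step 3: uniqueness.} If $G_H$ is connected, then $H$ is irreducible, because $h_{ts}>0$ exactly when $(t,s)\in E_H$. By Perron--Frobenius theory for irreducible stochastic matrices, the stationary distribution is unique. Any $\bpi'$ satisfying detailed balance is stationary, so $\bpi'=\bpi$.

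I expect no real obstacle. The only point needing care is the positivity of $r_t$, which the construction of $H$ already presupposes. A second point is to note that uniqueness is claimed only among stationary distributions satisfying detailed balance, and these are in particular stationary, so Perron--Frobenius applies.
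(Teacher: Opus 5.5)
Your proof is correct and follows essentially the same route as the paper. Both take $\pi_t = r_t/\sum_u r_u$, verify $\pi_t h_{ts} = \pi_s h_{st}$, and identify $H$ as the random walk on $G_H$ via $\deg_H(t)=r_t$; your explicit assumption $r_t>0$ and your irreducibility/Perron--Frobenius argument for uniqueness supply details the paper leaves implicit, since it simply asserts $\pi_t>0$ and that uniqueness follows from connectedness.
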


The network $G_H$ can be interpreted as a network defined on the snapshots of the temporal network $\mathbb{G}$, with edge weights given by the snapshot coupling weights $w_{ts}$. Based on the previous lemma, $H$ is thus the transition matrix of the associated random walk on the network $G_H$. 

\begin{lemma}
\label{lem:row_stochasticity}
The matrix $\bC_H$ is row-stochastic. In particular, its spectral radius satisfies
\[
\rho(\bC_H) = 1.
\]
\end{lemma}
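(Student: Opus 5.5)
To prove Lemma~\ref{lem:row_stochasticity} we will check row-stochasticity block by block. It suffices to show two things: all entries of $\bC_H$ are nonnegative, and $\bC_H\bone=\bone$, where $\bone\in\mathbb R^{NM}$ is the all-ones vector. By \eqref{eq:k_pf_notation}, the $t$th block row of $\bC_H\bone$ equals
\[
\sum_{s<t} h_{ts}\,T_{st}\bone+\sum_{s>t} h_{ts}\,K_{ts}\bone .
\]
So we need $K_{ts}\bone=\bone$ and $T_{st}\bone=\bone$ for the relevant pairs, together with $\sum_s h_{ts}=1$.

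\textbf{Koopman blocks.} Since each $S_t=D_t^{-1}A_t$ is row-stochastic with nonnegative entries, the product $K_{ts}=S_{ts}=S_t\cdots S_{s-1}$ is also row-stochastic and nonnegative.

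\textbf{Reweighted Perron--Frobenius blocks.} For $T_{st}=D_{\mu_t}^{-1}S_{st}^\top D_{\mu_s}$ with $s<t$, we have
\[
T_{st}\bone=D_{\mu_t}^{-1}S_{st}^\top\bmu_s .
\]
This requires the consistency relation $S_{st}^\top\bmu_s=\bmu_t$, i.e.\ \eqref{eq:evolution_of_prob_distr} with $\mathcal P_{st}$ having matrix $S_{st}^\top$. Since $\mu_t$ is, by construction, the distribution of the process at time $t$ obtained by propagating $\mu_1$, this relation holds. It gives $T_{st}\bone=D_{\mu_t}^{-1}\bmu_t=\bone$. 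Nonnegativity of $T_{st}$ is immediate, since all three factors are nonnegative. We must assume $\mu_t(v_i)>0$ for all $i$, which is already implicit in the invertibility of $\bB$.

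\textbf{Coupling weights.} By \eqref{eq:compute_H}, $h_{ts}=w_{ts}/r_t\ge 0$ and $\sum_s h_{ts}=r_t/r_t=1$, using $h_{tt}=0$. Here we assume $r_t>0$, i.e.\ no isolated snapshot, which is needed for $H$ to be defined at all. Combining the three observations, each block row of $\bC_H\bone$ equals $\sum_s h_{ts}\bone=\bone$, so $\bC_H$ is row-stochastic.

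\textbf{Spectral radius.} For a nonnegative matrix the spectral radius is bounded by the induced $\infty$-norm, which equals the maximal row sum, here $1$. Hence $\rho(\bC_H)\le 1$. Conversely, $\bC_H\bone=\bone$ shows that $1$ is an eigenvalue, so $\rho(\bC_H)=1$.

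The only step that needs care is the consistency $S_{st}^\top\bmu_s=\bmu_t$. I would state explicitly that the $\mu_t$ are the marginals of the process, in the sense of \eqref{eq:evolution_of_prob_distr}, and that they are strictly positive.
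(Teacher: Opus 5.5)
Your proposal is correct and follows essentially the same route as the paper's proof: $K_{ts}\bone=\bone$ from row-stochasticity of the $S_t$, $T_{ts}\bone=\bone$ via the consistency relation $K_{ts}^{\top}\bmu_t=\bmu_s$, row-stochasticity of $H$, and then the standard spectral-radius fact for row-stochastic matrices. You also make explicit the nonnegativity of the blocks, the positivity of $\mu_t$ and $r_t$, and the $\infty$-norm bound, which the paper leaves implicit.
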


Using this result, we can interpret the matrix $\bC_H$ as the transition matrix of a random walk on an augmented state space, where transitions between snapshots both forward and backward in time are possible. Since this process simultaneously captures transitions across nodes (space) and across snapshots (time), we refer to it as a \emph{spatio-temporal random walk}, defined on the augmented network described in definition~\ref{def:space_time_network}. Similar space--time formulations have been used in transfer-operator approaches to coherent-set detection, where coherent sets are identified through the spectral analysis of a process on an augmented space--time domain~\cite{Fackeldey_2019}.

\begin{definition}
\label{def:space_time_network}
The \emph{space--time network} $\mathscr G_H$ is defined on the augmented node set $V \times \{1,\dots,M\}$, so that it has $MN$ nodes in total. Each node is given by a pair $(v_i,t)$, where $1 \le i \le N$ corresponds to a node of $G_t$. Edges connect nodes $(v_i,t)$ and $(v_j,s)$ between snapshots $G_t$ and $G_s$ with nonzero weights if and only if the transition probability between them, given by $\bC_H$, is positive. 
\end{definition}

We will use the terms spatio-temporal random walk on a temporal network and on its associated space--time network interchangeably. A formal definition of $\mathscr G_H$ and further details are given in lemma~\ref{lem:multilayer_network} in appendix~\ref{sec:appendix_2}. An illustration describing the dynamics of the spatio-temporal random walk for a simple temporal network is shown in figure~\ref{fig:spatio_temporal_rw}, following the structure of $\bC_H$ given in \eqref{eq:k_pf_notation}. Starting from a node $v_i$ at snapshot $G_t$ (marked in red in figure~\ref{fig:spatio_temporal_rw}\textbf{a} for $t=2$), the walker first selects a target snapshot $G_s$ (figure~\ref{fig:spatio_temporal_rw}\textbf{b}) according to the coupling matrix $H$, that is, according to a random-walk process on the coupling network $G_H$ (figure~\ref{fig:spatio_temporal_rw}\textbf{c}). Then, it moves to a node within the chosen snapshot according to the transition probabilities given by $K_{ts}$ or $T_{ts}$ depending on the choice of $s$ 
(figure~\ref{fig:spatio_temporal_rw}\textbf{d}). If $t<s$, the spatial transition is governed by the multi-step transition matrices $K_{ts}$, corresponding to the forward propagation of the network observables. 
If $t > s$, it is governed by $T_{ts}$, the adjoint of the former, which also admits the interpretation of backward propagation of observables with respect to the time-inhomogeneous process $\{X_t\}_{t=1}^M$ introduced in section~\ref{sec:random_walks_on_temporal_networks}. The spatio-temporal random walk can therefore be decomposed into two components: a temporal component governed by the transition matrix $H$ describing transitions between snapshots, and a spatial component governed by the transition matrices $K_{ts}$ or $T_{ts}$ describing transitions between nodes. We formalize this decomposition in lemma~\ref{lem:spatial_temporal_part} in appendix~\ref{sec:appendix_1}.

The following lemma additionally shows that the spatio-temporal random walk is time-reversible, allowing us to apply existing results to further investigate its properties in section~\ref{sec:spectral_clustering}.

\begin{figure}
    \centering
    \includegraphics[width=0.9\linewidth]{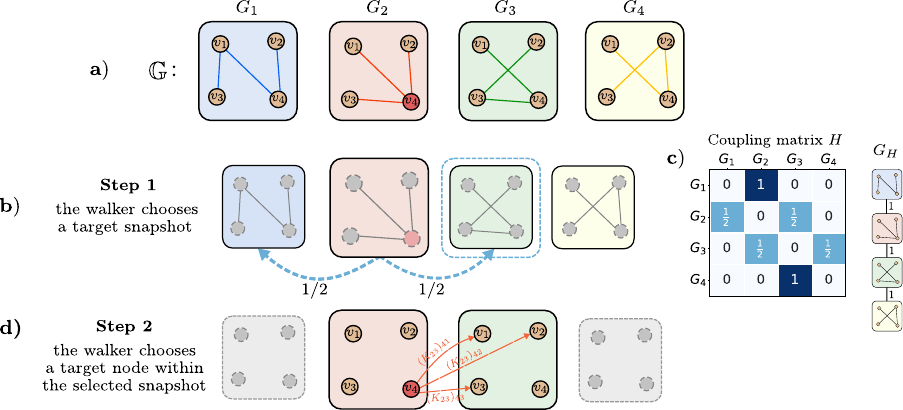}
    \caption{An illustration of a spatio-temporal random walk on the temporal network $\mathbb G$. \textbf{a)} A temporal network $\mathbb G$ consisting of four nodes and four snapshots. The current position of the random walker is indicated in red at node $v_4$ in snapshot $G_2$. \textbf{b)} Each snapshot is coupled to its immediate neighbors in time, and the walker first chooses a target snapshot with equal probability. In this example, the walker selects snapshot $G_3$. \textbf{c)} Coupling network $G_H$ and coupling matrix $H$. \textbf{d)} Having selected $G_3$ as a target snapshot, the walker jumps to a node in $G_3$ according to the transition probabilities given by the matrix representation of the Koopman operator $K_{23}$. If the chosen snapshot were $G_1$ (i.e., the walker performs a jump backward in time), the transition probabilities would instead be given by the matrix representation of the reweighted Perron--Frobenius operator $T_{12}$.}
    \label{fig:spatio_temporal_rw}
\end{figure}

\begin{lemma}
\label{lem:spatio_temporal_time_reversibility}
There exists a stationary distribution $\bnu \in \mathbb{R}^{MN}$ of the random-walk process on $\mathscr G_H$ induced by $\bC_H$. Moreover, the process is time-reversible with respect to $\bnu$, that is, the detailed balance condition
\[
D_{\nu}\bC_H = \bC_H^{\top} D_{\nu}
\]
holds, where $D_{\nu}=\diag(\bnu)$. Consequently, $\bC_H$ is self-adjoint with respect to the $\bnu$-weighted inner product. All eigenvalues of $\bC_H$ are real and lie in the interval $[-1,1]$.
\end{lemma}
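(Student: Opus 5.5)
We construct $\bnu$ explicitly by combining the reversible measure of the snapshot coupling walk with the snapshot distributions. Let $\bpi$ be the distribution from lemma~\ref{lem:H_random_walk}, so that $\pi_t h_{ts} = \pi_s h_{st}$ for all $t,s$. We set
\[
\bnu = \begin{bmatrix} \pi_1 \bmu_1 \\ \vdots \\ \pi_M \bmu_M \end{bmatrix}, \qquad D_\nu = \diag(\pi_1 D_{\mu_1},\dots,\pi_M D_{\mu_M}).
\]
Since each $\bmu_t$ is a probability vector and $\sum_t \pi_t = 1$, $\bnu$ is a probability distribution on $V\times\{1,\dots,M\}$. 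Concretely, one may take $\pi_t = r_t/\sum_s r_s$, because $r_t h_{ts} = w_{ts}$ is symmetric.

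Next we verify detailed balance blockwise. For $t<s$, the $(t,s)$ block of $D_\nu \bC_H$ is $\pi_t h_{ts} D_{\mu_t} K_{ts} = \pi_t h_{ts} D_{\mu_t} S_{ts}$. The $(t,s)$ block of $\bC_H^\top D_\nu$ is $(h_{st} T_{ts})^\top \pi_s D_{\mu_s}$. Using \eqref{eq:reweighted_perron_frobenius_mat_rep}, $T_{ts}^\top = D_{\mu_t} S_{ts} D_{\mu_s}^{-1}$, so this block equals $\pi_s h_{st} D_{\mu_t} S_{ts}$. The two blocks agree by $\pi_t h_{ts}=\pi_s h_{st}$. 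The case $t>s$ is the transpose of this computation, and the diagonal blocks are zero. Hence $D_\nu \bC_H = \bC_H^\top D_\nu$.

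Stationarity then follows from row-stochasticity (lemma~\ref{lem:row_stochasticity}): $\bnu^\top \bC_H = \bone^\top D_\nu \bC_H = \bone^\top \bC_H^\top D_\nu = (\bC_H\bone)^\top D_\nu = \bnu^\top$.

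For self-adjointness we need $\la \bbf, \bC_H \bg\ra_\nu = \bbf^\top D_\nu \bC_H \bg = \bbf^\top \bC_H^\top D_\nu \bg = \la \bC_H\bbf,\bg\ra_\nu$. The operator is therefore symmetric, equivalently $D_\nu^{1/2}\bC_H D_\nu^{-1/2}$ is a symmetric matrix. Its eigenvalues are real, and lemma~\ref{lem:row_stochasticity} places them in $[-1,1]$.

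The main obstacle is that $D_\nu$ must be invertible, which requires $\mu_t(v_i)>0$ for every node and time. This positivity is already implicit in the definition of $T_{ts}$ through $D_{\mu_s}^{-1}$, and in the reduction $\bC_H=\bB^{-1}\bA_H$. It also requires $\pi_t>0$, i.e.\ $r_t>0$, which is needed anyway for $H$ to be defined. I would state these standing assumptions explicitly. Under them, the symmetrization step and the eigenvalue bound are routine.
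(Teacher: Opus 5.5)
Your proof is correct and follows essentially the same route as the paper. You use the same $\bnu=[\pi_1\bmu_1^\top,\dots,\pi_M\bmu_M^\top]^\top$, obtain detailed balance from $T_{ts}^\top D_{\mu_s}=D_{\mu_t}K_{ts}$ together with $\pi_t h_{ts}=\pi_s h_{st}$, and get the eigenvalue bound from $\rho(\bC_H)=1$ in lemma~\ref{lem:row_stochasticity}. The only difference is the order of the steps: the paper checks $\bnu^\top\bC_H=\bnu^\top$ directly via $\bmu_t^\top K_{ts}=\bmu_s^\top$ and $\bmu_s^\top T_{ts}=\bmu_t^\top$, whereas you deduce stationarity from detailed balance and $\bC_H\bone=\bone$, which is equally valid. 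Your explicit blockwise tracking of the $\pi_t h_{ts}$ factors, and of the positivity of $\mu_t$ and $r_t$, is also more careful than the paper's one-line detailed-balance claim.
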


\begin{example} \label{ex:guiding_example}
Let us consider a temporal network $\mathbb{G} = (G_1, \dots, G_{20})$ consisting of 60 nodes observed over 20 snapshots, see figure~\ref{fig:guiding_example_part_1}. Each snapshot is generated independently from a stochastic block model (SBM), where the within-community and between-community connection probabilities are $p_{\mathrm{in}} = 0.7$ and $ p_{\mathrm{out}} = 0.05$, respectively. In snapshots $G_1, \dots, G_{10}$, the network exhibits two communities of equal size (30 nodes each). At snapshot $G_{11}$, one of these communities splits into two, and this three-community structure persists for the remaining time. The adjacency matrices of the snapshots are shown in figure~\ref{fig:guiding_example_part_1}\textbf{a}. In this example, the coupling network $G_H$ (figure~\ref{fig:guiding_example_part_1}\textbf{c}) is chosen such that the edge weights between the snapshots $G_t$ and $G_s$ are given by
\begin{equation}
\label{eq:decay_coupling}
w_{ts} = e^{-\alpha (s-t)^2},
\end{equation}
where $\alpha > 0$ is a decay parameter that controls how fast the strength of the connection between snapshots decreases with their temporal distance. We set $\alpha = 0.03$. The corresponding coupling matrix $H$ is defined as the transition matrix of a random walk on $G_H$ (figure~\ref{fig:guiding_example_part_1}\textbf{b}). We then construct the associated space--time network $\mathscr G_H$ (figure~\ref{fig:guiding_example_part_1}\textbf{d}). A random walker on this network moves between nodes in different snapshots according to the spatio-temporal transition matrix $\bC_H$ (figure~\ref{fig:guiding_example_part_1}\textbf{e}).
\end{example}

\begin{figure}[!ht]
    \centering
    \includegraphics[width=1\linewidth]{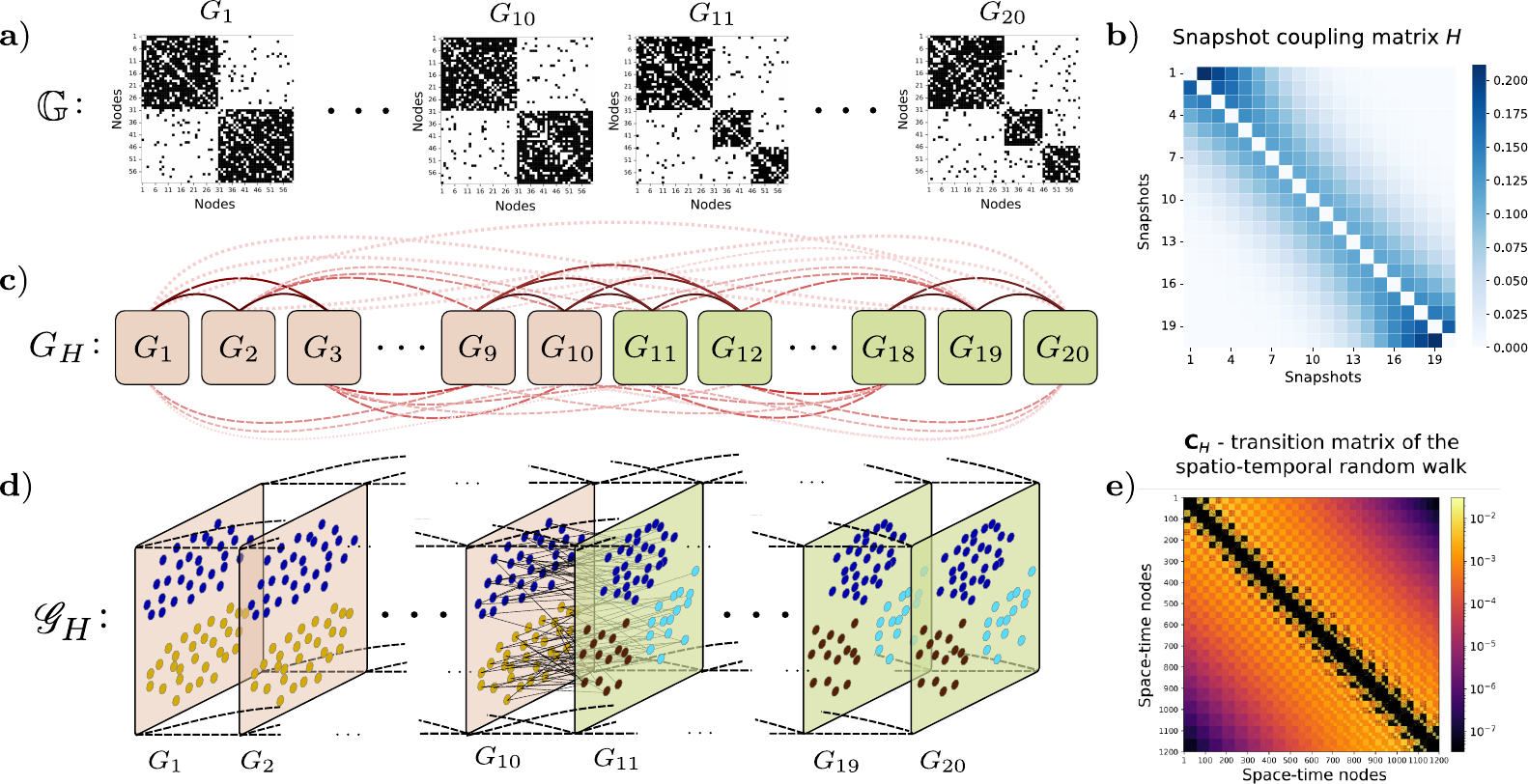}
    \caption{
    Illustration of the construction of a space--time network from a temporal network. 
    \textbf{a)} Adjacency matrices of selected snapshots of the temporal network $\mathbb{G}$. In the first 10 snapshots, the network consists of two equally sized communities (30 nodes each). At $t = 11$, one community splits into two, and this three-community structure persists until the end of the evolution. 
    \textbf{b)} Coupling matrix $H$, defined as the transition matrix of a random walk on the
    \textbf{c)} coupling network $G_H$. Periods during which $\mathbb G$ exhibits a stable community structure are highlighted in orange (snapshots 1–10) and green (snapshots 11–20).
    \textbf{d)} Space--time network $\mathscr G_H$ constructed from $\mathbb{G}$. For clarity of visualization, inter-snapshot edges are shown only between the 10th and 11th snapshots. Communities are color-coded: the dark blue community persists throughout the entire evolution, while the yellow community splits into a brown and a light blue community at $t = 11$.
    \textbf{e)} Transition matrix $\bC_H$ of the spatio-temporal random walk on $\mathscr G_H$.
    }
    \label{fig:guiding_example_part_1}
\end{figure}

\subsection{Spectral clustering of temporal networks}
\label{sec:spectral_clustering}

Motivated by well-established spectral clustering techniques for community detection in static networks \cite{von_Luxburg_2007, Klus_2024_1}, we analyze the spectral properties of the spatio-temporal transition operator $\bC_H$ to identify coherent sets of nodes that persist as densely intraconnected communities over extended time intervals.

In contrast to the purely static setting, eigenvectors of $\bC_H$ may encode two qualitatively different effects: those induced by the temporal coupling between snapshots, and those arising from the internal structure of the individual network snapshots. In \cite{Froyland_2024, Trower_2025}, these effects are analyzed by visually inspecting the leading eigenvectors of the random walk on the space–time network. Our aim is to provide a more principled theoretical understanding and to disentangle the two contributions. To this end, analogously to definition~\ref{def:observables}, we define the space--time observable space by
\[
\mathbb U^M=\{f: V \times \{1,\dots,M\} \to \mathbb{R}\}\mid f(\cdot,t)=f_t\in\mathbb U \text{ for all }1\le t\le M\}.
\]
The vector representation of $f\in\mathbb U^M$ is given by $\bbf = [\bbf_1^\top, \bbf_2^\top, \dots, \bbf_M^\top]^\top\in\mathbb R^{MN}$ and we refer to each observable $\bbf_t\in\mathbb R^N$ as the $t$th snapshot segment of $\bbf$. We now introduce a snapshot segment averaging operator $\bPi_{\mathrm{avg}} \colon \mathbb{U^M} \to \mathbb{U^M}$, which collapses node-level information within snapshot $G_t$ to its $\bmu_t$-weighted mean: 
\begin{equation}
\label{eq:Pi_avg}
\begin{split}
\bPi_{\mathrm{avg}} : (\bbf_1^\top,\dots,\bbf_M^\top)
\;\mapsto\;
\big(
\langle \bbf_1,\bone\rangle_{\mu_1}\bone^\top,\dots,
\langle \bbf_M,\bone\rangle_{\mu_M}\bone^\top
\big)
=
\big(
\mathbb{E}_{\mu_1}[\bbf_1]\bone^\top,\dots,
\mathbb{E}_{\mu_M}[\bbf_M]\bone^\top
\big).
\end{split}
\end{equation}
The matrix representation of $\bPi_{\mathrm{avg}}$ is given by 
\begin{equation}
\label{eq:Pi_avg_matrix}
\bPi_{\mathrm{avg}}=\diag(\bone\bmu_1^{\top},\dots,\bone\bmu_M^{\top}).
\end{equation}
The operator $\bPi_{\mathrm{avg}}$ averages snapshot segments of space--time observables with respect to $\bmu_t$, discarding all intra-snapshot variation.

\begin{lemma}
\label{lem:Pi_avg_projection}
    The operator $\bPi_{\mathrm{avg}}$ is the $\bnu$-orthogonal projection onto the subspace of observables that are constant within each snapshot, that is
    \begin{equation*}
    \label{eq:Pi_avg_image}
    \operatorname{Im}(\bPi_{\mathrm{avg}})
    =
    \operatorname{span}\{e_t\otimes\bone \mid 1\le t\le M\},
    \end{equation*}
    where $e_t$ denotes the $t$th standard basis vector of $\mathbb R^M$ and $\otimes$ denotes the Kronecker product. 
\end{lemma}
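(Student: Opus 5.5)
The plan is to verify the three defining properties of a $\bnu$-orthogonal projection for the block-diagonal matrix \eqref{eq:Pi_avg_matrix}. These are idempotency, the stated image, and self-adjointness with respect to $\langle\cdot,\cdot\rangle_{\nu}$. The last property is equivalent to symmetry of $D_\nu \bPi_{\mathrm{avg}}$. All three can be checked block by block, because both $\bPi_{\mathrm{avg}}$ and $D_\nu$ are block diagonal with $N\times N$ blocks.

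The one ingredient needing care is the explicit form of $\bnu$ from lemma~\ref{lem:spatio_temporal_time_reversibility}. I would first identify it as the stacked vector $\bnu = [\pi_1\bmu_1^\top,\dots,\pi_M\bmu_M^\top]^\top$, with $\bpi$ the reversible distribution of $H$ from lemma~\ref{lem:H_random_walk}. If the appendix proof states this, I cite it; otherwise I rederive it quickly from detailed balance.

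For $t<s$, the $(t,s)$ block of $D_\nu\bC_H$ is $\pi_t h_{ts} D_{\mu_t}K_{ts} = \pi_t h_{ts} D_{\mu_t}S_{ts}$. The $(s,t)$ block is $\pi_s h_{st} D_{\mu_s}T_{ts} = \pi_s h_{st} S_{ts}^\top D_{\mu_t}$, using \eqref{eq:reweighted_perron_frobenius_mat_rep}. These two blocks are transposes of each other because $\pi_t h_{ts}=\pi_s h_{st}$ by lemma~\ref{lem:H_random_walk}. Summing the entries of $\bnu$ gives $\sum_t\pi_t=1$, so $\bnu$ is a probability vector. I will also note that $\mu_t>0$ entrywise (needed anyway for $\bB^{-1}$ to exist) and $\pi_t>0$ whenever $r_t>0$, so $\langle\cdot,\cdot\rangle_\nu$ is a genuine inner product.

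With $\bnu$ in hand, the checks are short.
\begin{itemize}
\item Idempotency: $(\bone\bmu_t^\top)(\bone\bmu_t^\top)=\bone(\bmu_t^\top\bone)\bmu_t^\top=\bone\bmu_t^\top$, since $\bmu_t$ sums to one.
\item Image: each block $\bone\bmu_t^\top$ has range $\operatorname{span}\{\bone\}$, so $\operatorname{Im}(\bPi_{\mathrm{avg}})\subseteq\operatorname{span}\{e_t\otimes\bone\}$. Conversely, $\bPi_{\mathrm{avg}}(e_t\otimes\bone)=e_t\otimes\bone$, which gives equality.
\item Self-adjointness: the $t$th diagonal block of $D_\nu\bPi_{\mathrm{avg}}$ is $\pi_t D_{\mu_t}\bone\bmu_t^\top=\pi_t\bmu_t\bmu_t^\top$, which is symmetric. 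Hence $\langle \bbf,\bPi_{\mathrm{avg}}\bg\rangle_\nu=\langle\bPi_{\mathrm{avg}}\bbf,\bg\rangle_\nu$.
\end{itemize}
An idempotent, $\bnu$-self-adjoint operator is the $\bnu$-orthogonal projection onto its image, which completes the argument.

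The main obstacle is only the identification of $\bnu$. Everything else is a one-line block computation.
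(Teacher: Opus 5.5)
Your proposal is correct and follows the paper's proof essentially step for step. It uses idempotency from $\bmu_t^\top\bone=1$, $\bnu$-self-adjointness via the block form $\bnu=[\pi_1\bmu_1^\top,\dots,\pi_M\bmu_M^\top]^\top$, and reads the image off the definition; your symmetric blocks $\pi_t\bmu_t\bmu_t^\top$ of $D_\nu\bPi_{\mathrm{avg}}$ are the matrix version of the paper's identity $\langle\bPi_{\mathrm{avg}}\bbf,\bg\rangle_\nu=\sum_t\pi_t\langle\bone,\bbf_t\rangle_{\mu_t}\langle\bone,\bg_t\rangle_{\mu_t}$. The explicit form of $\bnu$ is given in the paper's proof of lemma~\ref{lem:spatio_temporal_time_reversibility}, so you can cite it there instead of rederiving it from detailed balance.
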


As we will show below, this operator reveals a fundamental structural separation of the spectrum of $\bC_H$, allowing its eigenvectors to be naturally classified into two distinct types:
\begin{itemize}
    \item \emph{temporal eigenvectors}, which are constant within each snapshot, lie in the range of the operator $\bPi_{\mathrm{avg}}$, and reflect solely the coupling scheme between snapshots;
    \item \emph{spatial eigenvectors}, which lie in the kernel of $\bPi_{\mathrm{avg}}$ so that their snapshot segments $\bbf_t$ have zero $\bmu_t$-mean and capture the internal organization of the individual snapshots.
\end{itemize}

\begin{theorem}
\label{thm:operator_decomposition}
Let $\bC_H$ be a $\bnu$-self-adjoint spatio-temporal transition operator defined as in \eqref{eq:k_pf_notation}, associated with the random walk process on the space--time network $\mathscr G_H$. Let $\bPi_{\mathrm{avg}}$ denote the snapshot segment-averaging operator defined in \eqref{eq:Pi_avg}. Then the operator $\bC_H$ admits the decomposition 
\begin{equation}
\label{eq:theorem_decomposition}
\bC_H
=
\bC_H^{\mathrm{temp}}
+
\bC_H^{\mathrm{spat}},
\end{equation}
where
\[
\bC_H^{\mathrm{temp}}
=
\bPi_{\mathrm{avg}} \bC_H \bPi_{\mathrm{avg}},
\qquad
\bC_H^{\mathrm{spat}}
=
\bPi_{\mathrm{avg}}^{\perp} \bC_H \bPi_{\mathrm{avg}}^{\perp}.
\]
The eigenpairs of $\bC_H$ coincide with the eigenpairs of $\bC_H^{\mathrm{temp}}$ and $\bC_H^{\mathrm{spat}}$. This partitions the eigenpairs of $\bC_H$ into two classes: 
\begin{itemize}
    \item[i)] The $M$ eigenpairs of $\bC_H$ that are obtained as the eigenpairs of $\bC_H^{\mathrm{temp}}$. Their eigenvalues coincide with the eigenvalues of the coupling matrix $H$ and are independent of the internal structure of the network snapshots. The corresponding eigenvectors are completely determined by the eigenvectors of $H$ and are constant within each snapshot.
    \item[ii)] The remaining $MN-M$ eigenpairs of $\bC_H$ that are obtained as the eigenpairs of $\bC_H^{\mathrm{spat}}$. The snapshot segments $\bbf_t$ of the corresponding eigenvectors have zero $\bmu_t$-mean and capture the community structure of the network snapshots. 
\end{itemize}
\end{theorem}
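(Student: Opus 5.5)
The plan is to show that the subspace of snapshot-wise constant observables, $\operatorname{Im}(\bPi_{\mathrm{avg}})=\operatorname{span}\{e_t\otimes\bone\}$, is invariant under $\bC_H$. Self-adjointness then makes its $\bnu$-orthogonal complement invariant as well, and the decomposition and the splitting of the spectrum follow by block-diagonalization.

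\textbf{Step 1 (invariance and the action on constants).} Take $\bbf=x\otimes\bone$ with $x\in\mathbb R^M$, so that $\bbf_s=x_s\bone$. Using the blockwise formula for $(\bC_H\bbf)_t$, I need two facts.
\begin{itemize}
\item[a)] $K_{ts}\bone=S_{ts}\bone=\bone$, because $S_{ts}$ is a product of row-stochastic matrices.
\item[b)] $T_{st}\bone=\bone$ for $s<t$. Here $T_{st}\bone=D_{\mu_t}^{-1}S_{st}^\top\bmu_s$, and by the evolution equation \eqref{eq:evolution_of_prob_distr} we have $S_{st}^\top\bmu_s=\bmu_t$.
\end{itemize}
Hence $(\bC_H\bbf)_t=\big(\sum_s h_{ts}x_s\big)\bone$, that is, $\bC_H(x\otimes\bone)=(Hx)\otimes\bone$. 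I expect this to be the only genuinely substantive point. It relies on the reference distributions $\mu_t$ being the actual propagated marginals with $\mu_t(v_i)>0$, so that $\bB$ and $D_{\mu_t}$ are invertible. I would state this assumption explicitly.

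\textbf{Step 2 (commutation and decomposition).} By Lemma~\ref{lem:Pi_avg_projection}, $\bPi_{\mathrm{avg}}$ is the $\bnu$-orthogonal projection onto this invariant subspace. By Lemma~\ref{lem:spatio_temporal_time_reversibility}, $\bC_H$ is $\bnu$-self-adjoint. For $g$ in the complement and $u$ in the image,
\[
\langle \bC_H g,u\rangle_\nu=\langle g,\bC_H u\rangle_\nu=0,
\]
so the complement is invariant too. Therefore $\bC_H\bPi_{\mathrm{avg}}=\bPi_{\mathrm{avg}}\bC_H$. Writing $\bC_H=(\bPi_{\mathrm{avg}}+\bPi_{\mathrm{avg}}^\perp)\bC_H(\bPi_{\mathrm{avg}}+\bPi_{\mathrm{avg}}^\perp)$, the cross terms $\bPi_{\mathrm{avg}}\bC_H\bPi_{\mathrm{avg}}^\perp=\bC_H\bPi_{\mathrm{avg}}\bPi_{\mathrm{avg}}^\perp$ and its counterpart both vanish. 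This gives \eqref{eq:theorem_decomposition}.

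\textbf{Step 3 (eigenpairs).} Since $\mathbb R^{MN}$ is the $\bnu$-orthogonal direct sum of two invariant subspaces, a $\bnu$-orthonormal eigenbasis of $\bC_H$ can be chosen from eigenvectors of the two restrictions. On the image, $\bC_H$ coincides with $\bC_H^{\mathrm{temp}}$. On the complement, it coincides with $\bC_H^{\mathrm{spat}}$.

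For class i), the image has dimension $M$, and by Step~1 the restriction is conjugate to $H$ via $x\mapsto x\otimes\bone$. Lemma~\ref{lem:H_random_walk} shows that $H$ is reversible, hence diagonalizable with real spectrum. The $M$ eigenpairs are therefore $(\lambda,x\otimes\bone)$ with $Hx=\lambda x$. These depend only on $H$ and are constant within each snapshot.

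For class ii), the complement is $\ker\bPi_{\mathrm{avg}}$, of dimension $MN-M$, so the remaining eigenvectors satisfy $\langle\bbf_t,\bone\rangle_{\mu_t}=0$ for every $t$.

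One caveat concerns the phrase ``coincide with the eigenpairs of $\bC_H^{\mathrm{temp}}$ and $\bC_H^{\mathrm{spat}}$''. It must be read as eigenpairs of the restrictions, since each summand also has the spurious eigenvalue $0$ on the other subspace. In addition, if an eigenvalue of $H$ also appears in the spatial part, the eigenvectors for that eigenvalue must be chosen adapted to the splitting rather than arbitrarily; I would remark on both points.
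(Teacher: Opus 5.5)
Your proposal is correct and follows essentially the same route as the paper. Both arguments show that $\operatorname{Im}(\bPi_{\mathrm{avg}})$ is invariant via $K_{ts}\bone=\bone$ and $T_{st}\bone=\bone$, kill the cross terms using $\bnu$-self-adjointness of $\bC_H$, and lift eigenvectors of $H$ through $\bu\mapsto\bu\otimes\bone$; the paper phrases the last step through a $\bpi$-orthonormal eigenbasis of $H$ and the spectral decomposition of $\bC_H$. Your caveats are valid refinements that the paper leaves implicit: the eigenpair statement should be read for the restrictions to the two invariant subspaces, and shared eigenvalues require an eigenbasis adapted to the splitting.
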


\begin{proof}
    Since $H$ is $\bpi$-self-adjoint (see lemma~\ref{lem:H_random_walk}), it admits a basis of $\bpi$-orthonormal eigenvectors. Denote these by $\bu^k=[u_1^k,\dots,u_M^k]^\top\in\mathbb R^M$, $1\le k\le M$, and let $\eta_k$ be the corresponding eigenvalues. We claim that the vectors $\bu^k\otimes\bone\in\mathbb R^{MN}$ are $\bnu$-orthonormal eigenvectors of $\bC_H$. Let us define 
    \begin{equation*}
    B_{ts}=
        \begin{cases}
           K_{ts}, & t<s,\\
           0, &t=s,\\
           T_{st}, &t>s.
        \end{cases}
    \end{equation*}
    Using the block structure of $\bC_H$ and the identity $B_{ts}\bone=\bone$ for all $1\le t,s\le M$ (see the proof of lemma~\ref{lem:row_stochasticity} in appendix~\ref{sec:appendix_1}), we obtain for every $1\le t\le M$
    \[
    (\bC_H(\bu^k\otimes\bone))_t=\sum_{s=1}^Mh_{ts}u^k_sB_{ts}\bone=\sum_{s=1}^Mh_{ts}u^k_s\bone=(H\bu^k)_t\bone=\eta_ku^k_t\bone.
    \]        
    Hence,
    $
    \bC_H(\bu^k\otimes\bone)=\eta_k(\bu^k\otimes\bone)
    $ and
    \begin{equation*}
    \la(\bu^k\otimes\bone),(\bu^l\otimes\bone)\ra_{\nu}=\sum_{t=1}^M\pi_t\la u^k_t\bone,u^l_t\bone\ra_{\mu_t}=\sum_{t=1}^M\pi_tu^k_tu^l_t=\la\bu^k,\bu^l\ra_{\pi}=
    \begin{cases}
        1, & k=l, \\
        0, & k\neq l.
    \end{cases}
    \end{equation*}
    Since $\bC_H$ is $\bnu$-self-adjoint, the spectral decomposition yields
    \[
    \bC_H=\sum_{k=1}^M\eta_k(\bu^k\otimes\bone)(\bu^k\otimes\bone)^{\top}+\sum_{k=1}^{MN-M}\theta_k\bbf^k(\bbf^k)^{\top}.
    \]
    Using the identity
    \[
    (a\otimes b)(c\otimes d)^\top = (ac^\top)\otimes(bd^\top)
    \]
    together with the spectral decomposition of $H$, we obtain
    \[
    \bC_H=H\otimes(\bone\bone^{\top})+\sum_{k=1}^{MN-M}\theta_k\bbf^k(\bbf^k)^{\top}.
    \]
    Let $\bPi_{\mathrm{avg}}^{\perp}=I-\bPi_{\mathrm{avg}}$ denote the $\bnu$-orthogonal projection onto the $\bnu$-orthogonal complement of $\operatorname{Im}(\bPi_{\mathrm{avg}})$, then
    \begin{equation}
    \label{eq:C_H_decomposition}
        \begin{split}
            \bC_H&=(\bPi_{\mathrm{avg}}+\bPi_{\mathrm{avg}}^{\perp})\bC_H(\bPi_{\mathrm{avg}}+\bPi_{\mathrm{avg}}^{\perp})\\&=\bPi_{\mathrm{avg}}\bC_H\bPi_{\mathrm{avg}}+\bPi_{\mathrm{avg}}^{\perp}\bC_H\bPi_{\mathrm{avg}}+\bPi_{\mathrm{avg}}\bC_H\bPi_{\mathrm{avg}}^{\perp}+\bPi_{\mathrm{avg}}^{\perp}\bC_H\bPi_{\mathrm{avg}}^{\perp}.
        \end{split}
    \end{equation}
    Since $\bC_H(\operatorname{Im(\bPi_{\mathrm{avg}})})\subset\operatorname{Im}(\bPi_{\mathrm{avg}})$, we have $\bPi_{\mathrm{avg}}^{\perp}\bC_H\bPi_{\mathrm{avg}}=0$. Then, since $\bPi_{\mathrm{avg}}$ and $\bC_H$ are $\bnu$-self-adjoint, $\bPi_{\mathrm{avg}}^{\perp}\bC_H\bPi_{\mathrm{avg}}$ and $\bPi_{\mathrm{avg}}\bC_H\bPi_{\mathrm{avg}}^{\perp}$ are $\bnu$-adjoint to one another. Hence, $\bPi_{\mathrm{avg}}\bC_H\bPi_{\mathrm{avg}}^{\perp}=0$ as well. Therefore, the mixed terms in \eqref{eq:C_H_decomposition} vanish and we get
    \begin{equation}
    \label{eq:C_H_decomposition_short}
    \bC_H=\bPi_{\mathrm{avg}}\bC_H\bPi_{\mathrm{avg}}+\bPi_{\mathrm{avg}}^{\perp}\bC_H\bPi_{\mathrm{avg}}^{\perp}.
    \end{equation}
    Substituting \eqref{eq:k_pf_notation} and \eqref{eq:Pi_avg_matrix}, we compute
    \[
    \bPi_{\mathrm{avg}}\bC_H\bPi_{\mathrm{avg}}=H\otimes(\bone\bone^{\top}).
    \]
    Combining this with \eqref{eq:C_H_decomposition_short} yields
    \[
    \bPi_{\mathrm{avg}}^{\perp}\bC_H\bPi_{\mathrm{avg}}^{\perp} = \sum_{k=1}^{MN-M}\theta_k\bbf^k(\bbf^k)^{\top}.
    \]
    We define
    \[
    \bC_H^{\mathrm{temp}}=\bPi_{\mathrm{avg}}\bC_H\bPi_{\mathrm{avg}}
    \quad
    \text{and}
    \quad
    \bC_H^{\mathrm{spat}}=\bPi_{\mathrm{avg}}^{\perp}\bC_H\bPi_{\mathrm{avg}}^{\perp}.
    \]
    This yields the decomposition \eqref{eq:theorem_decomposition} of the space--time operator $\bC_H$.  Furthermore, since $\bPi_{\mathrm{avg}}\bbf^k=0$ for all $1\le k\le MN-M$, each snapshot segment $\bbf_t^k$, $1\le t\le M$, of a spatial eigenvector $\bbf^k$ satisfies
    \[
    \e_{\mu_t}[\bbf_t^k]=0.
    \]
    
\end{proof}

Since $\bC_H^{\mathrm{temp}}$ depends only on the coupling matrix $H$ and not on the structure of the snapshots, we refer to it as the \emph{temporal component} and to the corresponding eigenpairs $(\eta_k,\bu^k\otimes\bone)$ as \emph{temporal eigenpairs}. Similarly, we refer to $\bC_H^{\mathrm{spat}}$ as the \emph{spatial component} and to the corresponding eigenpairs $(\theta_k,\bbf^k)$ as \emph{spatial eigenpairs}. Consequently, all eigenpairs of $\bC_H$ can be partitioned into temporal and spatial eigenpairs, obtained as eigenpairs of the temporal and spatial components, respectively. In the following, for consistency of notation, we denote the temporal eigenpairs by $(\lambda_k^{\mathrm{temp}},\bbf^{\mathrm{temp},k})$, for $1\le k\le M$, and the spatial eigenpairs by $(\lambda_k^{\mathrm{spat}},\bbf^{\mathrm{spat},k})$ for $1\le k\le MN-M$. Throughout, superscript indices are used to distinguish different space--time observables, while subscripts denote their snapshot segments.

In particular, by restricting our attention to $\bC_H^{\mathrm{spat}}$, the temporal eigenvectors are filtered out, allowing the leading eigenvectors $\bbf^{\mathrm{spat},k}$ of the spatial component $\bC_H^{\mathrm{spat}}$ to isolate the spatial structure of the temporal network. As a result, they provide a natural framework for analyzing node coherence and community structure. 

\begin{remark}
\label{rem:how_to_select_spatial_evec}
We emphasize that the spatial eigenvectors are not entirely independent of the snapshot coupling scheme. In particular, slowly decaying eigenvectors of the coupling matrix $H$ may significantly interfere with the spatial eigenvectors of $\bC_H$. This effect is not caused by the underlying network structure, but rather by the choice of snapshot coupling, and manifests itself as a modulation of the snapshot segments $\bbf_t^{\mathrm{spat},k}$ by the leading eigenvectors of $H$. Consequently, a careful selection and preprocessing of spatial eigenvectors is required to identify \emph{canonical} representatives that are informative for clustering. This phenomenon was recognized in related approaches~\cite{Froyland_2024, Trower_2025}, but the selection of informative spatial eigenvectors remained an open question. In this paper, we illustrate these effects in the guiding example~\ref{ex:guiding_example} and in the numerical examples in section~\ref{sec:numerical_examples}. Furthermore, we include an additional discussion of the heuristics used to select spatial eigenvectors that serve as good feature coordinates for space--time node clustering in appendix~\ref{sec:appendix_4}.
\end{remark}

In Algorithm~\ref{alg:algorithm}, we provide a complete overview of how our approach is applied in practice for community detection in a given temporal network.

\begin{algorithm}
\caption{Spectral clustering of temporal networks using spatio-temporal random walks}
\label{alg:algorithm}
	\begin{algorithmic}[1]
        \State \textbf{Input:}\\ $\texttt{adj}$ -- (weighted) adjacency matrices of the $M$ snapshots;\\ 
        $\bmu_1$ -- initial probability distribution (here we choose the uniform distribution); \\$w_{ts}$ 
        -- snapshot coupling weights.
            
        \For {$t=1,2,\dots,M$}
            \State Compute snapshot transition matrices $S_t$ according to \eqref{eq:transition_matrix}. 
            \If {$t>1$}
                \State Compute $\bmu_t^{\top}=\bmu_{t-1}^{\top}S_{t-1}$
            \EndIf
        \EndFor

        \State Compute $H$ as defined in \eqref{eq:compute_H}.

        \State Compute $\bC_H$ using \eqref{eq:koopman_mat_rep}, \eqref{eq:reweighted_perron_frobenius_mat_rep}, and \eqref{eq:k_pf_notation}.

        \State Compute $\bPi_{\mathrm{avg}}$ as given in \eqref{eq:Pi_avg_matrix} and set $\bC^{\mathrm{spat}}_H \gets (I-\bPi_{\mathrm{avg}})\bC_H(I-\bPi_{\mathrm{avg}})$.

        \State \textbf{Apply spectral clustering on} $\bC_H^{\mathrm{spat}}$:
        \State \hspace{2em} Compute leading eigenvalues $\lambda_k^{\mathrm{spat}}$ and eigenvectors $\bbf^{\mathrm{spat},k}$ of $\bC_H^{\mathrm{spat}}$.
        \State \hspace{2em} Among the leading spatial eigenvectors, select the informative ones $\bbf^{\mathrm{spat},i_1},\dots\bbf^{\mathrm{spat},i_l}$ (e.g., using the heuristics given in appendix~\ref{sec:appendix_4}). 
        \State \hspace{2em} Define $\mathbf F=[\bbf^{\mathrm{spat},i_1},\cdots,\bbf^{\mathrm{spat},i_l}]\in\mathbb R^{MN\times l}$ and let $\bbf_{j,:}\in\mathbb R^l, 1\le j\le MN$ denote the $j$th row of $\mathbf F$.
        \State \hspace{2em} Run a clustering algorithm (e.g., $k$-means, DBSCAN) on feature vectors $\{\mathbf F_{j,:}\}_{j=1}^{MN}$.
	\end{algorithmic}
\label{alg:lne}
\end{algorithm}

\paragraph{Continuation of example~\ref{ex:guiding_example}.} 
\begin{itshape}
In figure~\ref{fig:guiding_example_part_2} we illustrate the spectral clustering of the space--time network from example~\ref{ex:guiding_example}. The leading eigenvalues of $\bC_H$ are shown in figure~\ref{fig:guiding_example_part_2}\textbf{b}, with the spatial eigenvalues highlighted. The first four spatial eigenvectors of $\bC_H$ are displayed in figure~\ref{fig:guiding_example_part_2}\textbf{c}. For visualization purposes, each eigenvector $\bbf^{\mathrm{spat},k}$ is folded into its $M$ snapshot segments, which are displayed using a color gradient ranging from dark blue ($\bbf_1^{\mathrm{spat},k}$) to dark red ($\bbf_M^{\mathrm{spat},k}$). The sign structure of the first spatial eigenvector clearly separates nodes $1$--$30$ from nodes $31$--$60$, which shows that they are members of different communities throughout the entire network evolution. Additionally, the third spatial eigenvector distinguishes between nodes $31$--$45$ and $46$--$60$ during the second half of the evolution, indicating a split of the community around $t=10$. Applying $k$-means to these two feature vectors we successfully recover all communities in the temporal network $\mathbb G$ (figure~\ref{fig:guiding_example_part_2}\textbf{a}). This example also illustrates how slowly decaying modes of the coupling matrix $H$ may interfere with the spatial eigenvectors (see remark~\ref{rem:how_to_select_spatial_evec}). In particular, spatial eigenvectors 1, 2, and 4 in figure~\ref{fig:guiding_example_part_2}\textbf{c} can all be interpreted approximately as modulated versions of an observable separating the first 30 and last 30 nodes, where the modulation is induced by the first (blue), second (orange), and third (green) eigenvectors of $H$, respectively (figure~\ref{fig:guiding_example_part_2}\textbf{e}), which correspond to three largest eigenvalues of $H$ (figure~\ref{fig:guiding_example_part_2}\textbf{d}). This phenomenon complicates the direct application of standard spectral clustering methods, in which $k$-means is typically applied to the spatial eigenvectors corresponding to the dominant eigenvalues of $\bC_H$ preceding the spectral gap.
\end{itshape}

\begin{figure}
    \centering
    \includegraphics[width=0.9\linewidth]{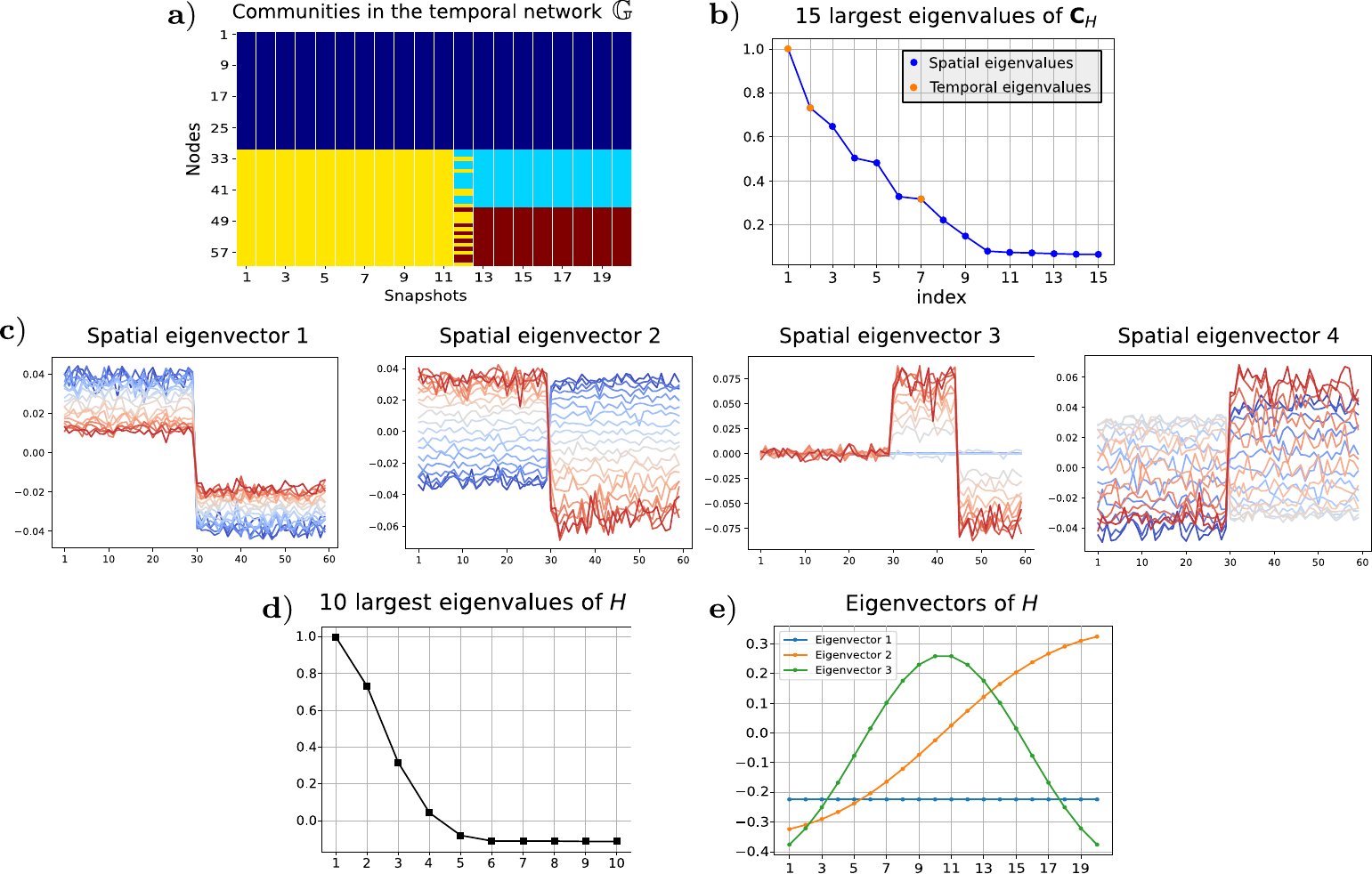}
    \caption{
    Illustration of the community detection via spectral clustering. 
    \textbf{a)} $k$-means clustering of space--time nodes based on the first and third spatial eigenvectors of $\bC_H$. 
    \textbf{b)} Leading eigenvalues of $\bC_H$.
    \textbf{c)} First four spatial eigenvectors of $\bC_H$ (i.e., the four leading eigenvectors of $\bC^{\mathrm{spat}}_H$). For visualization, each eigenvector $\bbf^{\mathrm{spat},k}$ is folded into its $M$ snapshot segments, which are displayed using a color gradient ranging from dark blue ($\bbf_1^{\mathrm{spat},k}$) to dark red ($\bbf_M^{\mathrm{spat},k}$). Each segment $\bbf_t$, $1\le t\le M$, encodes information about the community structure of snapshot $G_t$.
    \textbf{d)} 10 largest eigenvalues of $H$.
    \textbf{e)} Three dominant eigenvectors of $H$.}
    \label{fig:guiding_example_part_2}
\end{figure}

\subsection{Computational complexity}

The construction introduced above leads to a spatio-temporal transition matrix
$\bC_H \in \mathbb R^{MN \times MN}$,
where $N$ denotes the number of nodes and $M$ the number of snapshots. Consequently, the dimension of the associated eigenvalue problem scales with the product $MN$ and can become prohibitively large for long temporal sequences or large networks.

The computational cost depends on the chosen snapshot coupling scheme and on the sparsity of the underlying network snapshots. In the most general setting, the matrix $\bC_H$ consists of $M^2$ blocks of size $N\times N$, yielding a storage complexity of order $O(M^2N^2)$. Although practical implementations can exploit sparsity and the block structure of $\bC_H$, the computation of dominant eigenpairs remains the main computational bottleneck. 

These observations motivate the development of a reduced-order representation of the spatio-temporal random walk. In the next section, we derive a projection-based reduction that preserves the essential spectral properties of $\bC_H$ while significantly reducing the computational effort required for community detection. More details on the computational complexity of our algorithm and its reduced version are given in appendix~\ref{sec:appendix_3}.

\section{Model reduction}
\label{sec:model_reduction}
In this section, we develop a reduced model by restricting the search space for observables associated with each snapshot to suitably chosen low-dimensional subspaces. Since the snapshot observables that solve our objective~\eqref{eq:weighted_correlation} are expected to be approximately constant on communities, this reduction substantially lowers the memory requirements of the full model while preserving the essential community-level information (see remark~\ref{rem:computational_complexity}).

\subsection{Galerkin projection of covariance operators}
\label{section:covariance_operators}

Let $\cW_t = \mathrm{span}\{ w_1^t,\dots,w_{d_t}^t \} \subset \mathbb{U}$ be a $d_t$-dimensional subspace of observables associated with snapshot $G_t$, equipped with the $\bmu_t$-weighted inner product. Let $W_t = [\bw_1^t,\dots,\bw_{d_t}^t] \in \mathbb{R}^{N\times d_t}$ denote the matrix whose columns are the vector representations of the basis functions spanning $\cW_t$. We aim to solve~\eqref{eq:weighted_correlation} under the constraint that each observable $f_t$ lies in $\cW_t$. Let $\ba_t \in \mathbb{R}^{d_t}$ denote the coordinate vector of $\bbf_t$ with respect to the basis given by the columns of $W_t$, so that $\bbf_t = W_t \ba_t$. The covariance of $f_t$ can then be computed as
\[
\var(f_t)
= \langle f_t,\, \mathcal{C}_{tt} f_t \rangle
= \langle W_t \ba_t,\, C_{tt}(W_t \ba_t) \rangle
= \langle \ba_t,\, W_t^\top C_{tt} W_t \,\ba_t \rangle.
\]
Similarly, we obtain the cross-covariance of observables $f_t\in\cW_t$ and $f_{s}\in\cW_{s}$ for $t<s$ as   
\[
\cov(f_t,f_s)=\la f_t,\mathcal{C}_{ts}f_{s}\ra=\la W_t\ba_t,C_{ts}(W_{s}\ba_{s})\ra=\la \ba_t,W_t^\top C_{ts}W_{s}\ba_{s}\ra.
\] 
We define
\[
\widehat{C}_{tt}=W_t^\top C_{tt} W_t=W_t^{\top}D_{\mu_t}W_t
\]
and 
\[
\widehat{C}_{ts}=W_t^\top\, C_{ts}\, W_{s}=W_t^{\top}D_{\mu_t}S_{ts}W_s.
\]
Now, analogously as in section~\ref{sec:covariance_operators_and_weighted_correlation_optimization} we derive the generalized eigenvalue problem that solves \eqref{eq:maximization_problem} restricted to subspaces of observables $\cW_t$, i.e.,
\begin{equation}
\label{eq:generalized_eval_problem_projected}
    \widehat{\bA}_H\ba=\widehat{\lambda}\widehat{\bB}\ba,
\end{equation}
where
{
\small
\begin{equation*}
    \widehat{\textbf{A}}_H=
    \begin{bmatrix}
    0 & h_{12}\widehat{C}_{12} & \cdots & h_{1M}\widehat{C}_{1M} \\
    h_{21} \widehat{C}_{12}^{\top} & 0 & \ddots & \vdots \\
    \vdots & \ddots & \ddots & h_{(M-1)M}\widehat{C}_{(M-1)M} \\
    h_{M1}\widehat{C}_{1M}^{\top} & \cdots &  h_{M(M-1)}\widehat{C}_{(M-1)M}^{\top} & 0
    \end{bmatrix},
    \hspace{3mm}
    \widehat{\bB}=
    \begin{bmatrix}
    \widehat{C}_{11} & 0 & \cdots & 0 \\
    0 & \widehat{C}_{22} & \ddots & \vdots \\
    \vdots & \ddots & \ddots & 0 \\
    0 & \cdots & 0 & \widehat{C}_{MM}
    \end{bmatrix},
\end{equation*}
}
$\ba=[\ba_1^{\top},\ba_2^{\top},...,\ba_M^{\top}]^{\top}$ and $\ba_t$ is a coefficient vector containing coordinates of $f_t\in\mathcal W_t$ with respect to the basis $\{w_i^t\}_{i=1}^{d_t}$. Since all matrices $W_t$ have full column rank, the matrices $\widehat{C}_{tt}$ are invertible, and therefore so is $\widehat{\bB}$. Defining 
\begin{equation}
\label{eq:C_H_projected}
    \widehat{\bC}_H=\widehat{\bB}^{-1}\widehat{\bA}_H,
\end{equation}
we can equivalently rewrite \eqref{eq:generalized_eval_problem_projected} as a standard eigenvalue problem
\begin{equation}
\label{eq:eigenvalue_problem_reduced}
    \widehat{\bC}_H\ba=\widehat{\lambda}\ba.
\end{equation}

Solutions of \eqref{eq:eigenvalue_problem_reduced} therefore give, for each snapshot $t$, coefficient vectors $\ba_t\in\mathbb{R}^{d_t}$. To interpret them as snapshot observables, we lift each $\ba_t$ back to the full observable space $\mathcal W_t$ by reconstructing
\[
\bbf_t = W_t \ba_t, \qquad t=1,\ldots,M.
\]
The procedure is schematically illustrated in figure~\ref{fig:lifting_illustration}.

\begin{figure}
    \centering
    \includegraphics[width=1\linewidth]{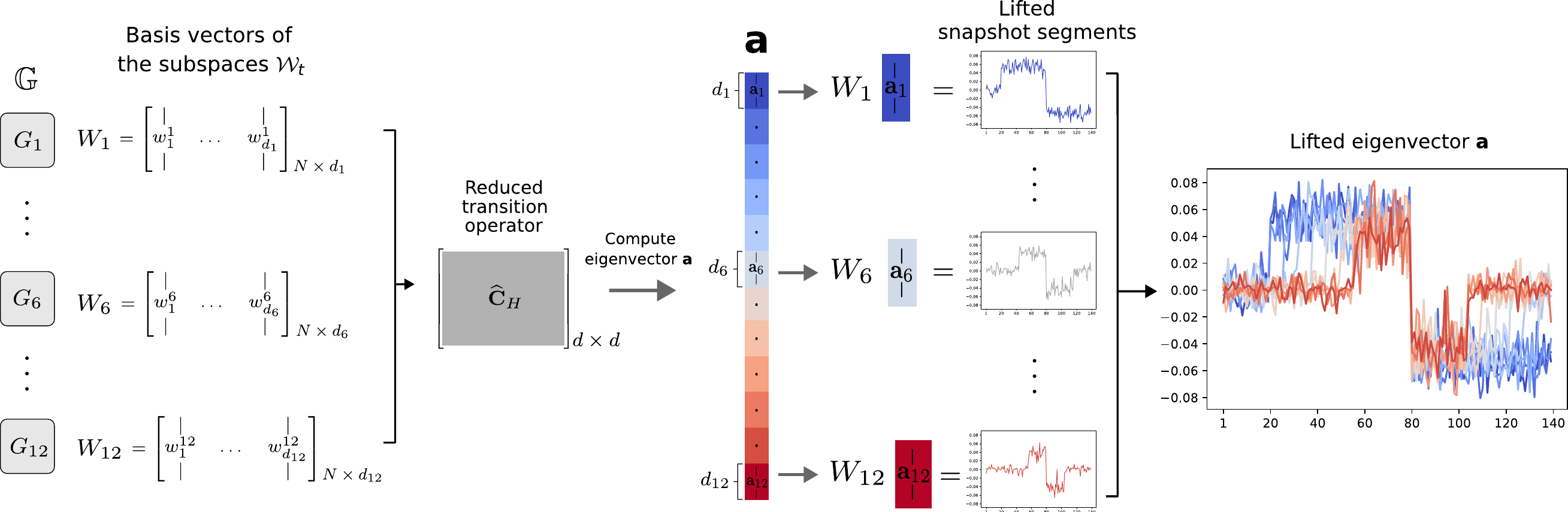}
    \caption{Illustration of the computation of network observables $\bbf$ that maximize the objective function such that their snapshot segments $\bbf_t$ belong to the subspaces $\cW_t$, $t=1,\ldots,M$.
    }
    \label{fig:lifting_illustration}
\end{figure}

If the subspaces $\cW_t$ are chosen so that they contain observables whose level sets reflect the community structure in each snapshot, then the resulting space--time observables $\bbf = [\bbf_1^{\top},\dots,\bbf_M^{\top}]^{\top}$ admit a clear interpretation in terms of how these communities evolve over time, as in the full (unprojected) model. A natural choice is, for example, to take $\mathcal W_t$ as the subspace spanned by the dominant eigenvectors of the transition matrix of a random walk on the static network $G_t$ (see example~\ref{sec:example_2}). In this case, eigenvectors $\ba$ of the reduced operator $\widehat{\bC}_H$ define, through lifting, space--time modes $\bbf$ with $\bbf_t\in \mathcal W_t$ that remain informative of the network community structure and can be directly compared to those of the full operator $\bC_H$, while being significantly cheaper to compute. 

For $1\le t<s\le M$, the building blocks of $\widehat{\bC}_H$ are given by
\begin{equation*}
    [\widehat{\bC}_H]_{ts}=h_{ts}\widehat{C}_{tt}^{-1}\widehat{C}_{ts},
    \quad\hspace{2mm}
    [\widehat{\bC}_H]_{tt}=0,
    \quad\hspace{2mm}
    [\widehat{\bC}_H]_{st}=h_{st}\widehat{C}_{ss}^{-1}\widehat{C}_{ts}^{\top}
\end{equation*}
where
\begin{equation}
\label{eq:Kts_projection}
\widehat{C}_{tt}^{-1}\widehat{C}_{ts}=(W_t^{\top}D_{\mu_t}W_t)^{-1}W_t^{\top}D_{\mu_t}K_{ts}W_{s}=\widehat{K}_{ts},
\end{equation}
and 
\begin{equation}
\label{eq:Tts_projection}
\widehat{C}_{ss}^{-1}\widehat{C}_{ts}^{\top}=(W_{s}^{\top}D_{\mu_{s}}W_{s})^{-1}W_{s}^{\top}D_{\mu_{s}}T_{ts}W_t=\widehat{T}_{ts}.
\end{equation}
Let
\[
\cW = \operatorname{colspan}(\bW)
\]
be the space spanned by the columns of the block diagonal matrix $\bW=\diag(W_1,\dots W_M)\in\mathbb R^{MN\times d}$ for $d=\sum_{t=1}^Md_t$.
More precisely, for each space--time obesrvable $\bbf$ where $\bbf_t=W_t\ba_t$ we can write 
\[
\bbf=\bW\ba.
\]
Then, we obtain
\begin{equation}
\label{eq:C_H_Galerkin}
    \widehat{\bC}_H=(\bW^{\top}D_{\nu}\bW)^{-1}\bW^{\top} D_{\nu}\bC_H\bW.
\end{equation}

The building blocks $\widehat{K}_{ts}\in\mathbb R^{d_t\times d_s}$ and $\widehat{T}_{ts}\in\mathbb R^{d_s\times d_t}$ of matrix $\widehat{\bC}_H$ given in \eqref{eq:Kts_projection} and \eqref{eq:Tts_projection} are exactly Petrov--Galerkin projections of the transfer operators $K_{ts}$ and $T_{ts}$, where trial and test spaces may differ across snapshots. The reduced operator $\widehat{\bC}_H\in\mathbb R^{d\times d}$ \eqref{eq:C_H_Galerkin} can be understood as a Galerkin projection of the full spatio-temporal operator $\bC_H$ onto $\cW$ with respect to the $\bnu$-weighted inner product. 

\begin{remark}
\label{rem:galerkin}
    Since $\bC_H$ acts as a transfer operator on the space--time network, this result is closely related to generalizations of Markov state models, which are obtained as Galerkin projections of transfer operators associated with dynamical processes onto suitably chosen low-dimensional subspaces~\cite{Djurdjevac_2010}. This shows that our reduced model can be interpreted as a coarse-grained version of the spatio--temporal random walk that preserves the dominant spectral properties and long-time dynamics of the original process.
\end{remark}

The operator $\widehat{\bC}_H$ inherits self-adjointness of $\bC_H$ as shown in the following lemma. Proofs of all lemmas stated in this section are provided in appendix~\ref{sec:appendix_1}. 

\begin{lemma}
\label{lem:projected_self_adjointness}
Let $\bG_{\cW} = \bW^{\top} D_{\nu}\bW$ denote the Gram matrix induced by the
$\bnu$-weighted inner product on the subspace $\cW$.
Then the projected operator $\widehat{\bC}_H$ is self-adjoint with respect to the $\bG_{\cW}$-weighted inner product, i.e.,
\[
\langle \widehat{\bC}_H\ba,\bb\rangle_{\bG_{\cW}}
=
\langle \ba,\widehat{\bC}_H\bb\rangle_{\bG_{\cW}}
\]
for all $ \ba,\bb\in\mathbb{R}^{d} $, where $\la\ba,\bb\ra_{\bG_{\cW}}=\ba^{\top}\bG_{\cW}\bb$.
\end{lemma}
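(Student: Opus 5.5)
The plan is to work directly from the Galerkin representation \eqref{eq:C_H_Galerkin}, $\widehat{\bC}_H=\bG_{\cW}^{-1}\bW^{\top}D_{\nu}\bC_H\bW$. The detailed balance relation of lemma~\ref{lem:spatio_temporal_time_reversibility}, $D_{\nu}\bC_H=\bC_H^{\top}D_{\nu}$, should then transfer to the reduced operator essentially for free.

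First I will confirm that $\bG_{\cW}$ is symmetric positive definite, so that $\la\cdot,\cdot\ra_{\bG_{\cW}}$ is an inner product and $\bG_{\cW}^{-1}$ exists. Symmetry is clear, since $D_\nu$ is diagonal. For positive definiteness, write $\bG_{\cW}=\bW^\top D_\nu\bW$. This is positive definite because $\bW=\diag(W_1,\dots,W_M)$ has full column rank (each $W_t$ does) and $\bnu$ has positive entries. Positivity of $\bnu$ holds whenever all $\mu_t$ are positive, which is already implicit in the invertibility of $\bB$ and $\widehat{C}_{tt}$ used before. It is also worth recording that $\bG_{\cW}$ is block diagonal with blocks $\pi_t\widehat{C}_{tt}$. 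This links \eqref{eq:C_H_Galerkin} to the blockwise formulas \eqref{eq:Kts_projection}--\eqref{eq:Tts_projection}, but I will use \eqref{eq:C_H_Galerkin} as given.

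The core computation is short. For $\ba,\bb\in\mathbb R^d$,
\[
\la\widehat{\bC}_H\ba,\bb\ra_{\bG_{\cW}}=\ba^\top\widehat{\bC}_H^{\top}\bG_{\cW}\bb=\ba^\top\bW^\top\bC_H^\top D_\nu\bW\bG_{\cW}^{-1}\bG_{\cW}\bb=\ba^\top\bW^\top\bC_H^\top D_\nu\bW\bb,
\]
where I use $(\bG_{\cW}^{-1})^\top=\bG_{\cW}^{-1}$. By detailed balance, $\bC_H^\top D_\nu=D_\nu\bC_H$, so this equals $\ba^\top\bW^\top D_\nu\bC_H\bW\bb$. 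On the other side, $\la\ba,\widehat{\bC}_H\bb\ra_{\bG_{\cW}}=\ba^\top\bG_{\cW}\bG_{\cW}^{-1}\bW^\top D_\nu\bC_H\bW\bb$, which is the same expression. Equivalently, $\bG_{\cW}\widehat{\bC}_H=\bW^\top D_\nu\bC_H\bW$ is a symmetric matrix.

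The only real obstacle is justifying \eqref{eq:C_H_Galerkin} itself, if it is not taken as established. I would do this blockwise. With $D_\nu=\diag(\pi_tD_{\mu_t})$, block $(t,s)$ of $\bW^\top D_\nu\bC_H\bW$ is $\pi_th_{ts}W_t^\top D_{\mu_t}B_{ts}W_s$, and block $t$ of $\bG_{\cW}$ is $\pi_t\widehat{C}_{tt}$. The factors $\pi_t$ cancel, which recovers $h_{ts}\widehat{K}_{ts}$ and $h_{ts}\widehat{T}_{st}$. This is the same computation by which the stationary distribution takes the form $\bnu_t=\pi_t\bmu_t$ in lemma~\ref{lem:spatio_temporal_time_reversibility}.
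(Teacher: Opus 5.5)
Your proposal is correct and follows essentially the same route as the paper. Both arguments expand $\la\widehat{\bC}_H\ba,\bb\ra_{\bG_{\cW}}$ using $\widehat{\bC}_H=\bG_{\cW}^{-1}\bW^{\top}D_{\nu}\bC_H\bW$, cancel $\bG_{\cW}^{-1}\bG_{\cW}$, apply detailed balance $\bC_H^{\top}D_{\nu}=D_{\nu}\bC_H$, and reinsert $\bG_{\cW}\bG_{\cW}^{-1}$; your extra checks (positive definiteness of $\bG_{\cW}$, and the blockwise cancellation of the factors $\pi_t$) are correct and make explicit what the paper takes for granted.
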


\begin{remark}
    Defining $\{w_1^t,\ldots,w_{d_t}^t\}$ to be the standard basis $\{e_1,\ldots,e_N\}$ of $\mathbb{R}^N$ for all $1\leq t\leq M$ recovers the full formulation of our approach, since each observable space $\cW_t$ coincides with the full space $\mathbb{U}$.
\end{remark}

Spectral features of $\bC_H$ associated with slowly decaying space--time observables are well approximated by the spectrum of $\widehat{\bC}_H$, provided that the subspaces $\cW_t$ are appropriately chosen. We will later show that, under mild assumptions on the subspaces $\cW_t$, this connection can be made more precise. We begin with a general result about the spectrum of the projected operator $\widehat{\bC}_H$.

\begin{lemma}
\label{lem:coarse_grained_spectral_radius_thm}
    Reduced operator $\widehat{\bC}_H$ defined in \eqref{eq:C_H_Galerkin} has a real spectrum. For arbitrary subspaces $\cW_t$, its spectral radius satisfies
    \[
    \rho(\widehat{\bC}_H) \leq 1.
    \]
\end{lemma}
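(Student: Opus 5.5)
The plan is to treat $\widehat{\bC}_H$ as the matrix, in the basis given by the columns of $\bW$, of the compressed operator $\mathcal Q\bC_H|_{\cW}$. Here $\mathcal Q = \bW(\bW^\top D_\nu\bW)^{-1}\bW^\top D_\nu$ denotes the $\bnu$-orthogonal projection onto $\cW=\operatorname{colspan}(\bW)$. Both claims then follow from self-adjointness together with a Rayleigh-quotient bound.

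\emph{Real spectrum.} By lemma~\ref{lem:projected_self_adjointness}, $\widehat{\bC}_H$ is self-adjoint with respect to $\langle\cdot,\cdot\rangle_{\bG_{\cW}}$. The Gram matrix $\bG_{\cW}=\bW^\top D_\nu\bW$ is symmetric positive definite, because $\bW=\diag(W_1,\dots,W_M)$ has full column rank and $D_\nu$ has positive diagonal. I would assume $\bnu>0$; this should follow from $\mu_t>0$, which is needed anyway for $T_{ts}$ to be defined, and from $\bpi>0$. Self-adjointness means $\bG_{\cW}\widehat{\bC}_H=\widehat{\bC}_H^\top\bG_{\cW}$. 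Hence $\bG_{\cW}^{1/2}\widehat{\bC}_H\bG_{\cW}^{-1/2}$ is a symmetric real matrix similar to $\widehat{\bC}_H$, so the spectrum is real and $\widehat{\bC}_H$ is diagonalizable with a $\bG_{\cW}$-orthonormal eigenbasis.

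\emph{Spectral radius.} Let $\widehat\lambda$ be an eigenvalue with real eigenvector $\ba\neq 0$, and set $\bbf=\bW\ba\neq0$. From $\widehat{\bC}_H\ba=\widehat\lambda\ba$ and \eqref{eq:C_H_Galerkin} I get $\bW^\top D_\nu\bC_H\bW\ba=\widehat\lambda\,\bG_{\cW}\ba$. Multiplying on the left by $\ba^\top$ gives
\[
\widehat\lambda=\frac{\langle \bbf,\bC_H\bbf\rangle_{\nu}}{\langle\bbf,\bbf\rangle_{\nu}} .
\]
By lemma~\ref{lem:spatio_temporal_time_reversibility}, $\bC_H$ is $\bnu$-self-adjoint with all eigenvalues in $[-1,1]$. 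Expanding $\bbf$ in a $\bnu$-orthonormal eigenbasis of $\bC_H$ bounds this Rayleigh quotient between the smallest and largest eigenvalues of $\bC_H$. Therefore $|\widehat\lambda|\le 1$, and so $\rho(\widehat{\bC}_H)\le1$.

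The only delicate point is positivity of the weights, which is needed so that $\bG_{\cW}$ is invertible and defines an inner product. I would handle it by recalling the explicit form of $\bnu$ from the proof of lemma~\ref{lem:spatio_temporal_time_reversibility}, which I expect to be $\bnu=(\pi_t\bmu_t)_t$, the same weights used in the proof of theorem~\ref{thm:operator_decomposition}. Positivity then reduces to $\pi_t>0$ and $\mu_t>0$. Everything else is a standard min–max/Rayleigh argument.
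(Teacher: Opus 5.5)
Your proof is correct. It differs from the paper's only in how you obtain $|\widehat\lambda|\le 1$.

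\textbf{The paper's route.} It bounds the operator norm directly. From $\bW\widehat{\bC}_H\ba=\bPi_{\cW}\bC_H\bW\ba$, contractivity of the $\bnu$-orthogonal projector, and $\|\bC_H\|_\nu\le 1$, it gets $\|\widehat{\bC}_H\|_{\bG_{\cW}}\le 1$. It then passes to the spectral radius via self-adjointness of $\widehat{\bC}_H$, although the generic inequality $\rho\le\|\cdot\|$ would already suffice.

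\textbf{Your route.} You write each eigenvalue as the $\bnu$-Rayleigh quotient of $\bC_H$ at the lifted vector $\bW\ba$. In other words, you use that eigenvalues of a Galerkin compression lie in the numerical range of the full operator. This buys two things:
\begin{itemize}
\item It gives the sharper inclusion $\sigma(\widehat{\bC}_H)\subset[\lambda_{\min}(\bC_H),\lambda_{\max}(\bC_H)]$ rather than merely $[-1,1]$.
\item The same identity $\bW^\top D_\nu\bC_H\bW\ba=\widehat\lambda\,\bG_{\cW}\ba$, tested against $\ba^*$ for a possibly complex $\ba$, also yields reality of the spectrum directly. This works because $\bW^\top D_\nu\bC_H\bW$ is symmetric by detailed balance and $\bG_{\cW}$ is positive definite, so the similarity transform by $\bG_{\cW}^{1/2}$ is optional.
\end{itemize}

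\textbf{What the paper's route buys.} It uses only the contraction property $\|\bC_H\|_\nu\le 1$, not the eigen-expansion of $\bC_H$. It phrases the conclusion as non-expansiveness of the reduced operator in the $\bG_{\cW}$-norm, which for self-adjoint $\widehat{\bC}_H$ is equivalent to your bound.

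Your attention to positivity of $\bnu=(\pi_t\bmu_t)_t$ is justified. The paper uses invertibility of $\bG_{\cW}$ tacitly. It does follow from $\pi_t=r_t/R>0$ and $\mu_t>0$, and the latter is already needed for $T_{ts}$ to be defined.
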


Since $\bW$ is a block-diagonal matrix, the $\bnu$-orthogonal projector $\bPi_{\cW}$ is given by 
\begin{equation}
\label{eq:Pi_W_diag}
\bPi_{\cW}=\bW(\bW^\top D_{\nu}\bW)^{-1}\bW^\top D_{\nu}=\diag(\bPi_{\cW_1}^{\mu_1},\dots,\bPi_{\cW_M}^{\mu_M}),
\end{equation}
where $\bPi_{\cW_t}^{\mu_t}$ denotes the $\bmu_t$-orthogonal projector onto $\cW_t$. We note that the operators $\widehat{\bC}_H$ and $\bPi_{\cW}\bC_H|_{\cW}$ are equivalent in the sense that they represent the same transformation expressed in different bases: the basis given by the columns of $\bW$ and the standard basis $\{e_n\}_{n=1}^N$, respectively. 

Analogously to theorem~\ref{thm:operator_decomposition}, we use the projected averaging operator $\widehat{\bPi}_{\mathrm{avg}}$ to split the reduced operator $\widehat{\bC}_H$ into temporal and spatial components. Let us define 
\begin{equation}
\label{eq:Pavg_galerkin}
\widehat{\bPi}_{\mathrm{avg}}
=
(\bW^\top D_{\nu}\bW)^{-1}\bW^\top D_{\nu}\,\bPi_{\mathrm{avg}}\,\bW.
\end{equation}
Using \eqref{eq:Pi_avg_matrix} we compute $\widehat{\bPi}_{\mathrm{avg}}$ as block diagonal-matrix with digaonal blocks given as 
\begin{equation*}
    (\widehat{\bPi}_{\mathrm{avg}})_{tt}=(W_t^{\top}D_{\mu_t}W_t)^{-1}W_t^{\top}\mu_t(W_t^{\top}\mu_t)^{\top} \quad \text{ for } 1\le t\le M.
\end{equation*}

In the remainder of the paper, we will additionally impose the mild assumption that $\bone \in \cW_t$ for all $1 \le t \le M$, unless stated otherwise. In the following theorem we show that $\widehat{\bC}_H$ retains the fundamental spectral properties of $\bC_H$ and, analogously to the full operator, admits a decomposition into temporal and spatial components. Moreover, in theorem~\ref{thm:error_bound} we derive an error bound for the leading eigenvalues. This establishes the projected formulation as a computationally efficient but nevertheless accurate surrogate of the full operator. 

\begin{theorem}
\label{thm:C_H_decomposition_projected}
    Let $\widehat{\bC}_H$ be a Galerkin projection of the spatio-temporal transition operator defined in \eqref{eq:C_H_projected} 
    onto a subspace spanned by the columns of $\bW=\diag(W_1,\dots W_M)$. We assume that $\bone\in\operatorname{colspan}(W_t)$ for every $1\le t\le M$. Then the operator $\widehat{\bC}_H$ admits the decomposition
    \[
    \widehat{\bC}_H=\widehat{\bC}_H^{\mathrm{temp}}+\widehat{\bC}_H^{\mathrm{spat}},
    \]
    where $\widehat{\bC}_H^{\mathrm{temp}}$ and $\widehat{\bC}_H^{\mathrm{spat}}$ denote the compressions of $\bC_H^{\mathrm{temp}}$ and $\bC_H^{\mathrm{spat}}$ to the subspace $\cW$, that is, their projections restricted to $\cW$ and represented in the basis given by the columns of $\bW$. Moreover, it holds
    \[
    \widehat{\bC}_H^{\mathrm{temp}}=\widehat{\bPi}_{\mathrm{avg}}\,\widehat{\bC}_H\,\widehat{\bPi}_{\mathrm{avg}},
    \quad
    \widehat{\bC}_H^{\mathrm{spat}}=\widehat{\bPi}_{\mathrm{avg}}^{\perp}\,\widehat{\bC}_H\,\widehat{\bPi}_{\mathrm{avg}}^{\perp}.
    \]
    Furthermore, the spectral radius of the projected transition operator is $\rho(\widehat{\bC}_H)=1$. The eigenpairs of $\widehat{\bC}_H$ coincide with the eigenpairs of $\widehat{\bC}_H^{\mathrm{temp}}$ and $\widehat{\bC}_H^{\mathrm{spat}}$ thereby partitioning the spectrum of $\widehat{\bC}_H$ into two classes:
\begin{itemize}
    \item[i)] The $M$ eigenpairs of $\widehat{\bC}_H$ are obtained from the eigenpairs of $\widehat{\bC}_H^{\mathrm{temp}}$. The corresponding eigenvalues coincide with the temporal eigenvalues of $\bC_H$ (and therefore with the eigenvalues of the coupling matrix $H$) and are independent of the internal structure of the network snapshots. The corresponding eigenvectors are coefficient vectors in the column space basis $\bW$ of the temporal eigenvectors of $\bC_H$.
    
    \item[ii)] The remaining $d - M$ eigenpairs of $\widehat{\bC}_H$ are obtained from the eigenpairs of $\widehat{\bC}_H^{\mathrm{spat}}$. The eigenpairs of $\widehat{\bC}_H$ from this class provide approximations of the leading spatial eigenpairs of the original operator $\bC_H$.
\end{itemize}
\end{theorem}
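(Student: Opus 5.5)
Throughout, I work with the equivalent representation $\bPi_{\cW}\bC_H|_{\cW}$, where $\bPi_{\cW}$ is the block-diagonal $\bnu$-orthogonal projector from \eqref{eq:Pi_W_diag}. The plan is to show that this projected operator inherits the invariant splitting of theorem~\ref{thm:operator_decomposition}. In the $\bW$-basis this is the same operator as $\widehat{\bC}_H$, so statements about its spectrum transfer directly.

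The first step is to show that $\bPi_{\cW}$ commutes with $\bPi_{\mathrm{avg}}$. Both are block diagonal, so it suffices to check each block $t$. Since $\bone\in\cW_t$, we have $\bPi^{\mu_t}_{\cW_t}\bone=\bone$, and therefore $\bPi^{\mu_t}_{\cW_t}(\bone\bmu_t^\top)=\bone\bmu_t^\top$. Both operators are $\bmu_t$-self-adjoint, so taking adjoints also gives $(\bone\bmu_t^\top)\bPi^{\mu_t}_{\cW_t}=\bone\bmu_t^\top$. Hence the blocks commute, and $\operatorname{Im}(\bPi_{\mathrm{avg}})\subset\cW$.

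It follows that $\cW=\operatorname{Im}(\bPi_{\mathrm{avg}})\oplus(\cW\cap\ker\bPi_{\mathrm{avg}})$, and this decomposition is $\bnu$-orthogonal. By theorem~\ref{thm:operator_decomposition}, $\bC_H$ leaves both $\operatorname{Im}\bPi_{\mathrm{avg}}$ and $\ker\bPi_{\mathrm{avg}}$ invariant. Because $\bPi_{\cW}$ commutes with $\bPi_{\mathrm{avg}}$, the compression $\bPi_{\cW}\bC_H\bPi_{\cW}$ also leaves both pieces invariant, and its cross terms vanish. This gives the decomposition
\[
\bPi_{\cW}\bC_H\bPi_{\cW}=\bPi_{\cW}\bC_H^{\mathrm{temp}}\bPi_{\cW}+\bPi_{\cW}\bC_H^{\mathrm{spat}}\bPi_{\cW}.
\]
Translating to coefficients with $(\bW^\top D_\nu\bW)^{-1}\bW^\top D_\nu(\cdot)\bW$, and using \eqref{eq:Pavg_galerkin} together with the commutation, we get $\widehat{\bPi}_{\mathrm{avg}}\widehat{\bC}_H\widehat{\bPi}_{\mathrm{avg}}=\widehat{\bC}_H^{\mathrm{temp}}$, and likewise for the spatial part with $\widehat{\bPi}_{\mathrm{avg}}^\perp=I-\widehat{\bPi}_{\mathrm{avg}}$. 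The key fact here is that $\widehat{\bPi}_{\mathrm{avg}}$ is exactly the coefficient representation of the restriction of $\bPi_{\mathrm{avg}}$ to $\cW$. This holds because $\bPi_{\mathrm{avg}}$ maps $\cW$ into $\cW$, so no extra projection error appears.

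For the spectral claims, I use that the temporal eigenvectors $\bu^k\otimes\bone$ lie in $\cW$ and are exact eigenvectors of $\bC_H$. The projection therefore fixes them: $\bPi_{\cW}\bC_H(\bu^k\otimes\bone)=\eta_k(\bu^k\otimes\bone)$. Their coefficient vectors $\ba^k$, which satisfy $\bW\ba^k=\bu^k\otimes\bone$, are then eigenvectors of $\widehat{\bC}_H$ with the eigenvalues $\eta_k$ of $H$. These $M$ vectors span the image of $\widehat{\bPi}_{\mathrm{avg}}$, which gives class i).

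The remaining $d-M$ eigenpairs come from $\widehat{\bC}_H$ restricted to the complementary invariant subspace. By lemma~\ref{lem:projected_self_adjointness} these are real eigenpairs, and they are eigenpairs of $\widehat{\bC}_H^{\mathrm{spat}}$. They are Galerkin (Rayleigh–Ritz) approximations of the spatial part, which is class ii). The approximation claim is qualitative here, and the quantitative version is deferred to theorem~\ref{thm:error_bound}.

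Finally, I obtain $\rho(\widehat{\bC}_H)=1$ by combining lemma~\ref{lem:coarse_grained_spectral_radius_thm}, which gives $\rho\le1$, with the fact that $\eta_1=1$ is an eigenvalue of $H$ ($H$ is row-stochastic). By class i), this eigenvalue belongs to the spectrum of $\widehat{\bC}_H$.

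I expect the main obstacle to be the bookkeeping that identifies $\widehat{\bPi}_{\mathrm{avg}}$ with a genuine projector and the products $\widehat{\bPi}_{\mathrm{avg}}\widehat{\bC}_H\widehat{\bPi}_{\mathrm{avg}}$ with compressions. In general, a Galerkin projection of a product is not the product of the Galerkin projections. This works here only because of the commutation $\bPi_{\cW}\bPi_{\mathrm{avg}}=\bPi_{\mathrm{avg}}\bPi_{\cW}=\bPi_{\mathrm{avg}}$, which relies on the assumption $\bone\in\cW_t$. So I will establish this identity carefully first and reduce every later step to it.
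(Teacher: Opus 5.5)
Your proposal is correct and follows essentially the same route as the paper. Both arguments rest on three steps: the identity $\bPi_{\cW}\bPi_{\mathrm{avg}}=\bPi_{\mathrm{avg}}\bPi_{\cW}=\bPi_{\mathrm{avg}}$ obtained from $\bone\in\cW_t$; compressing the full decomposition $\bC_H=\bC_H^{\mathrm{temp}}+\bC_H^{\mathrm{spat}}$ to $\cW$; and getting $\rho(\widehat{\bC}_H)=1$ from lemma~\ref{lem:coarse_grained_spectral_radius_thm} together with $1\in\sigma(H)$. Your adjoint argument for the commutation and your direct construction of the temporal eigenvectors via $\bW\ba^k=\bu^k\otimes\bone$ just spell out steps the paper treats tersely.
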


\begin{proof}
First, we recall that the averaging operator $\bPi_{\mathrm{avg}}$ is the orthogonal projection onto the subspace
\[
\operatorname{Im}(\bPi_{\mathrm{avg}})=\spanning\{\,e_t\otimes\bone \mid 1\le t\le M\,\}.
\]
Since $\bone\in\mathcal W_t$ for every snapshot, we have $\operatorname{Im}(\bPi_{\mathrm{avg}}) \subseteq\mathcal W$, and therefore
\begin{equation}
\label{eq:PwPavg}
\bPi_{\mathcal W}\bPi_{\mathrm{avg}}\bPi_{\mathcal W}=
\bPi_{\mathrm{avg}}\bPi_{\mathcal W}
=
\bPi_{\mathcal W}\bPi_{\mathrm{avg}}
=
\bPi_{\mathrm{avg}}.
\end{equation}
Then, using the decomposition from theorem~\ref{thm:operator_decomposition}, we get
\[
\bPi_{\cW}\bC_H|_{\cW}=\bPi_{\cW}(\bC_H^{\mathrm{temp}} + \bC_H^{\mathrm{spat}})|_{\cW}=\bPi_{\cW}\bC_H^{\mathrm{temp}}|_{\cW} + \bPi_{\cW}\bC_H^{\mathrm{spat}}|_{\cW},
\]
that is,
\[
\widehat{\bC}_H=\widehat{\bC}_H^{\mathrm{temp}}+\widehat{\bC}_H^{\mathrm{spat}}.
\]
Substituting \eqref{eq:C_H_Galerkin} and \eqref{eq:Pavg_galerkin} into $\widehat{\bPi}_{\mathrm{avg}}\,\widehat{\bC}_H\,\widehat{\bPi}_{\mathrm{avg}}$ and $\widehat{\bPi}_{\mathrm{avg}}^{\perp}\,\widehat{\bC}_H\,\widehat{\bPi}_{\mathrm{avg}}^{\perp}$ and using \eqref{eq:PwPavg}, we compute
\[
\widehat{\bPi}_{\mathrm{avg}}\,\widehat{\bC}_H\,\widehat{\bPi}_{\mathrm{avg}}=\widehat{\bC}_H^{\mathrm{temp}}
\quad \text{and} \quad
\widehat{\bPi}_{\mathrm{avg}}^{\perp}\,\widehat{\bC}_H\,\widehat{\bPi}_{\mathrm{avg}}^{\perp}=\widehat{\bC}_H^{\mathrm{spat}},
\]
which proves the claim. Furthermore, we have 
\[
\bPi_{\cW}\bC_H^{\mathrm{temp}}|_{\cW}= \bPi_{\cW}\bPi_{\mathrm{avg}}\bC_H\bPi_{\mathrm{avg}}|_{\cW}=\bPi_{\mathrm{avg}}\bC_H\bPi_{\mathrm{avg}}|_{\cW}=\bC_H^{\mathrm{temp}}|_{\cW}
\]
so the nonzero eigenvalues of $\bC_H^{\mathrm{temp}}$ and $\widehat{\bC}_H^{\mathrm{temp}}$ coincide, and therefore $1\in\sigma(\widehat{\bC}_H^{\mathrm{temp}})\subset\sigma(\widehat{\bC}_H)$. Using lemma~\ref{lem:coarse_grained_spectral_radius_thm}, we  conclude that $\rho(\widehat{\bC}_H)=1$.
\end{proof}

\begin{remark}[Computational complexity]
\label{rem:computational_complexity}
Assuming that the temporal network is sparse and that snapshots separated by at most $r$ time steps are coupled, the construction of the full operator $\bC_H$ has computational complexity and memory requirements of order $\mathcal O(MrN^2)$. In contrast, the reduced operator $\widehat{\bC}_H$ can be constructed in $\mathcal O(MrNd_{\max}^2)$ time and requires $\mathcal O(MrNd_{\max})$ memory, where
\[
d_{\max}=\max\{d_t\mid1\le t\le M\}.
\]
Typically, $d_{\max}\ll N$, and the reduced formulation scales linearly with the number of nodes $N$ so it provides substantial computational savings. Details of these estimates are given in appendix~\ref{sec:appendix_3}.
\end{remark}

\subsection{Error bound}

Using the results from \cite{Knyazev_2010, Djurdjevac_2012}, we can give error bounds on how accurately the projected operator $\widehat{\bC}_H$ approximates the spatial eigenvalues of $\bC_H$.

\begin{theorem}[Error bound via projection onto reduced subspaces]
\label{thm:error_bound}
Let $1 \le m \le d$, and let
\[
\lambda_1^{\mathrm{spat}} \ge \cdots \ge \lambda_m^{\mathrm{spat}}
\]
be the leading spatial eigenvalues of the spatio-temporal transition matrix $\bC_H$, with corresponding $\bnu$-orthonormal eigenvectors
\[
\bbf^{\mathrm{spat},1}, \dots, \bbf^{\mathrm{spat},m}.
\]
Let $\cW_t$ be projection subspaces such that $\bone\in\cW_t$ for all $1\le t\le M$. Let $\widehat{\lambda}_m^{\mathrm{spat}}$ be $m$th spatial eigenvalue of the reduced operator $\widehat{\bC}_H$ defined in \eqref{eq:C_H_Galerkin}. The eigenvalue error
\[
E_m = \bigl|\lambda_m^{\mathrm{spat}} - \widehat{\lambda}_m^{\mathrm{spat}}\bigr|
\]
satisfies
\[
E_m \;\le\; (\lambda_1^{\mathrm{spat}} + 1)
\sum_{i=1}^m \| \bPi_{\mathcal W}^\perp \bbf^{\mathrm{spat},i} \|_\nu^2,
\]
and equivalently,
\begin{equation}
\label{eq:extended_error}
E_m \;\le\; (\lambda_1^{\mathrm{spat}} + 1)
\sum_{i=1}^m \sum_{t=1}^M \pi_t\,
\| (\bPi_{\mathcal W_t}^{\mu_t})^\perp \bbf_t^{\mathrm{spat},i} \|_{\mu_t}^2.
\end{equation}
\end{theorem}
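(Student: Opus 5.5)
We reduce the statement to a Rayleigh--Ritz error estimate for a self-adjoint operator on a restricted subspace. We work in the Hilbert space $\mathbb R^{MN}$ with the $\bnu$-weighted inner product, where $\bC_H$ is self-adjoint by lemma~\ref{lem:spatio_temporal_time_reversibility}. By theorem~\ref{thm:operator_decomposition}, the subspace $\mathcal S=\ker\bPi_{\mathrm{avg}}$ is invariant under $\bC_H$, and $\bC_H|_{\mathcal S}=\bC_H^{\mathrm{spat}}|_{\mathcal S}$ has eigenvalues $\lambda_k^{\mathrm{spat}}$, all contained in $[-1,1]$.

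Since $\bone\in\cW_t$, \eqref{eq:PwPavg} gives that $\bPi_{\cW}$ commutes with $\bPi_{\mathrm{avg}}$. Hence $\cW=\operatorname{Im}\bPi_{\mathrm{avg}}\oplus(\cW\cap\mathcal S)$ is a $\bnu$-orthogonal splitting, and $\bPi_{\cW}$ maps $\mathcal S$ into $\widehat{\mathcal S}:=\cW\cap\mathcal S$. By theorem~\ref{thm:C_H_decomposition_projected}, the spatial eigenvalues $\widehat\lambda_k^{\mathrm{spat}}$ of $\widehat{\bC}_H$ are exactly the Ritz values of $\bC_H|_{\mathcal S}$ on the trial subspace $\widehat{\mathcal S}$. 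The Ritz operator is $\bPi_{\widehat{\mathcal S}}\bC_H|_{\widehat{\mathcal S}}$, and on $\widehat{\mathcal S}$ it coincides with $\bPi_{\cW}\bC_H|_{\widehat{\mathcal S}}$.

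The proof then has three steps.

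\textbf{Step 1 (sign of the error).} By Cauchy interlacing (Courant--Fischer applied to the compression), $\widehat\lambda_m^{\mathrm{spat}}\le\lambda_m^{\mathrm{spat}}$. Hence $E_m=\lambda_m^{\mathrm{spat}}-\widehat\lambda_m^{\mathrm{spat}}$.

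\textbf{Step 2 (trial space and Rayleigh quotient).} We shift so that the operator is positive semidefinite: set $\mathcal A=\bC_H|_{\mathcal S}+I\succeq0$. Let $F=\operatorname{span}\{\bbf^{\mathrm{spat},1},\dots,\bbf^{\mathrm{spat},m}\}$ and use the trial space $\bPi_{\cW}F\subset\widehat{\mathcal S}$. First we need $\dim\bPi_{\cW}F=m$. If $\sum_i\|\bPi_{\cW}^\perp\bbf^{\mathrm{spat},i}\|_\nu^2<1$, then $\bPi_{\cW}$ is injective on $F$. Otherwise $E_m\le2\le\lambda_1^{\mathrm{spat}}+1$ times a number $\ge1$, and the bound holds trivially; this needs $\lambda_1^{\mathrm{spat}}\ge0$, or else one uses $|E_m|\le 2$ more carefully. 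By min--max, $\widehat\lambda_m^{\mathrm{spat}}+1\ge\min_{x\in F\setminus\{0\}}\langle\mathcal A\bPi_{\cW}x,\bPi_{\cW}x\rangle_\nu/\|\bPi_{\cW}x\|_\nu^2$. For $x\in F$ with $\|x\|_\nu=1$, write $x=\bPi_{\cW}x+e$ with $e=\bPi_{\cW}^\perp x$. Then
\[
\langle\mathcal A\bPi_{\cW}x,\bPi_{\cW}x\rangle_\nu=\langle\mathcal A x,x\rangle_\nu-2\langle\mathcal A x,e\rangle_\nu+\langle\mathcal A e,e\rangle_\nu .
\]
Because $F$ is $\mathcal A$-invariant, $\mathcal A x\in F$. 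We combine this with $\langle x,e\rangle_\nu=\|e\|_\nu^2$ and $\|\bPi_{\cW}x\|_\nu^2=1-\|e\|_\nu^2$. This is the standard argument of \cite{Knyazev_2010, Djurdjevac_2012}, and it yields
\[
\lambda_m^{\mathrm{spat}}+1-(\widehat\lambda_m^{\mathrm{spat}}+1)\le(\lambda_1^{\mathrm{spat}}+1)\max_{x}\|e\|_\nu^2\Big/\bigl(1-\|e\|_\nu^2\bigr)\cdot(\ldots).
\]
In the cleaner form of \cite{Djurdjevac_2012}, this becomes $\lambda_m+1-(\widehat\lambda_m+1)\le(\lambda_1+1)\,\|\bPi_{\cW}^\perp|_F\|^2$.

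\textbf{Step 3 (Frobenius bound and localization).} We bound the operator norm squared by the Hilbert--Schmidt norm over the orthonormal basis of $F$, that is, $\|\bPi_{\cW}^\perp|_F\|^2\le\sum_{i=1}^m\|\bPi_{\cW}^\perp\bbf^{\mathrm{spat},i}\|_\nu^2$. This gives the first inequality. The equivalent form \eqref{eq:extended_error} follows from the block-diagonal structure \eqref{eq:Pi_W_diag} together with $\bnu=(\pi_t\bmu_t)_t$, since then $\|\mathbf g\|_\nu^2=\sum_t\pi_t\|\mathbf g_t\|_{\mu_t}^2$. The form of $\bnu$ is available from the proof of lemma~\ref{lem:spatio_temporal_time_reversibility} and is used in the orthonormality computation of theorem~\ref{thm:operator_decomposition}.

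\textbf{Main obstacle.} The hard part is Step 2: obtaining exactly the constant $(\lambda_1^{\mathrm{spat}}+1)$ without a denominator $1-\|e\|^2$. Our first attempt is to cite the Knyazev--Argentati majorization bound in the form used in \cite{Djurdjevac_2012} for Markov state models, applied to the shifted nonnegative operator $\mathcal A$ restricted to the invariant subspace $\mathcal S$. To apply it, we need to verify that its hypotheses (self-adjointness, and the trial subspace being the range of an orthogonal projector commuting with the spatial splitting) are met by the facts collected in the first paragraph.
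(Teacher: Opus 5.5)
Your reduction to $\mathcal S=\ker\bPi_{\mathrm{avg}}$ is correct, and so are Steps 1 and 3. This route differs from the paper's in a useful way.

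The paper applies the Knyazev--Argentati majorization bound to $\bC_H^{\mathrm{spat}}$ on all of $\mathbb R^{MN}$ with trial space $\cW$. It bounds $E_m$ by the sum of all $m$ errors and evaluates $\sum_i\sin^2\theta_i$ as $\operatorname{tr}(\bPi\bPi_\cW^\perp\bPi)$. You work instead with the trial space $\widehat{\mathcal S}=\cW\cap\mathcal S$. This buys three things:
\begin{itemize}
\item the $\widehat\lambda_k^{\mathrm{spat}}$ are exactly the ordered Ritz values of $\bC_H|_{\mathcal S}$;
\item the $M$ zero eigenvalues that $\bC_H^{\mathrm{spat}}$ and its compression carry on $\operatorname{Im}\bPi_{\mathrm{avg}}$ never enter the ordering;
\item interlacing gives the sign of the error.
\end{itemize}
Your Step 3 ($\sin^2\theta_{\max}$ bounded by the squared Hilbert--Schmidt norm) is the one-angle counterpart of the paper's trace identity. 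The paper's route controls the sum of all $m$ errors; yours controls only the $m$th error, which is what is claimed.

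The gap is Step 2. It carries the whole estimate and is not actually proved.

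Expanding $\bPi_\cW x=x-e$ for $x\in F$ cannot produce the constant $\lambda_1^{\mathrm{spat}}+1$. The cross term $\la\mathcal Ax,e\ra_\nu$ equals $\mu\|e\|_\nu^2$ only when $x$ is an eigenvector. For general $x\in F$, the best available estimate is $|\la\bPi_\cW^\perp\mathcal Ax,\bPi_\cW^\perp x\ra_\nu|\le(\lambda_1^{\mathrm{spat}}+1)\delta^2$, where $\delta=\|\bPi_\cW^\perp|_F\|_\nu$. That gives $E_m\le 2(\lambda_1^{\mathrm{spat}}+1)\delta^2$. Keeping the normalization instead leaves a $\tan^2$-type bound with denominator $1-\delta^2$. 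Pointing to a cleaner form in the literature does not close this.

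The fix is to decompose the trial vector relative to $F$, rather than $x$ relative to $\cW$.
\begin{itemize}
\item Assume $\delta<1$, so that $U=\bPi_\cW F\subset\widehat{\mathcal S}$ has dimension $m$.
\item For $y\in U$ with $\|y\|_\nu=1$, write $y=\alpha u+\beta z$ with unit vectors $u\in F$ and $z\in\mathcal S\ominus F$, and $\alpha^2+\beta^2=1$.
\item Since $F$ and $\mathcal S\ominus F$ are both $\bC_H$-invariant, the cross terms vanish, and
\[
\la\bC_Hy,y\ra_\nu=\alpha^2\la\bC_Hu,u\ra_\nu+\beta^2\la\bC_Hz,z\ra_\nu\ge\lambda_m^{\mathrm{spat}}-\beta^2\bigl(\lambda_m^{\mathrm{spat}}+1\bigr).
\]
\item Because $\bPi_U=\bPi_\cW$ on $F$ and $\dim U=\dim F$, you have $\beta^2\le\sin^2\theta_{\max}(F,U)=\delta^2$.
\item Courant--Fischer on $U$ then gives $E_m\le(\lambda_m^{\mathrm{spat}}+1)\delta^2$, and your Step 3 turns this into the claim.
\end{itemize}

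The degenerate case also needs correcting. If $\delta=1$, use $E_m\le\lambda_m^{\mathrm{spat}}+1\le\lambda_1^{\mathrm{spat}}+1$, which holds because all Ritz values lie in $[-1,1]$. The right-hand side is then at least $\lambda_1^{\mathrm{spat}}+1$. Your version, $E_m\le 2$, does not imply the bound when $\lambda_1^{\mathrm{spat}}<1$.
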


\begin{proof}
Since $\bPi_{\mathrm{avg}}^{\perp}$ is a projection, the operator
\[
\bC_H^{\mathrm{spat}} = \bPi_{\mathrm{avg}}^{\perp}\bC_H\bPi_{\mathrm{avg}}^{\perp}
\]
is self-adjoint with respect to the $\bnu$-weighted inner product. Let $\mathcal F = \mathrm{span}(\bbf^{\mathrm{spat},1},\dots,\bbf^{\mathrm{spat},m})$. 
As $\mathcal F$ is invariant under $\bC_H^{\mathrm{spat}}$, we may apply theorem 2.4 from~\cite{Knyazev_2010}. 
Let $\theta_1 \ge \dots \ge \theta_m$ be the principal angles between $\mathcal F$ and $\mathcal W$. Let $\lambda_{\min(\mathcal F+\mathcal W)}$ denote the smallest eigenvalue of the projection of $\bC_H^{\mathrm{spat}}$ onto the space $\mathcal F+\mathcal W$. Then 
\begin{equation}
\label{eq:knyazev_clean}
\begin{aligned}
|\lambda_m^{\mathrm{spat}} - \widehat{\lambda}_m^{\mathrm{spat}}|
&\le \sum_{i=1}^m |\lambda_i^{\mathrm{spat}} - \widehat{\lambda}_i^{\mathrm{spat}}| \\
&\le \sum_{i=1}^m (\lambda_i^{\mathrm{spat}} - \lambda_{\min(\mathcal F+\mathcal W)}) \sin^2 \theta_i \\
&\le (\lambda_1^{\mathrm{spat}} - \lambda_{\min(\mathcal F+\mathcal W)}) \sum_{i=1}^m \sin^2 \theta_i.
\end{aligned}
\end{equation}
It remains to express $\sum_{i=1}^m \sin^2 \theta_i$ in a convenient form. Let $\bPi = F F^\top D_\nu$ denote the $\bnu$-orthogonal projection onto $\mathcal F$, where 
$F = [\bbf^{\mathrm{spat},1},\dots,\bbf^{\mathrm{spat},m}]\in\mathbb R^{MN\times m}$. 
Let $\sigma_i(A)$, $\Lambda_i(A)$, and $A^*$ denote the $i$th singular value, the $i$th eigenvalue, and the Hermitian transpose of an operator $A$, respectively. Then the principal angles between the subspaces $\mathcal F$ and $\mathcal W$ are determined by the $m$ largest singular values of $\bPi \bPi_{\mathcal W}$
\[
\cos^2(\theta_{m+1-i}) = \sigma_i^2(\bPi \bPi_{\mathcal W})
= \Lambda_i(\bPi \bPi_{\mathcal W}(\bPi \bPi_{\mathcal W})^*)
= \Lambda_i(\bPi \bPi_{\mathcal W} \bPi)
\quad
\text{ for }1\le i\le m.
\]
Hence,
\[
\sin^2(\theta_{m+1-i}) = 1-\Lambda_i(\bPi \bPi_{\mathcal W} \bPi)= \Lambda_i(\bPi - \bPi \bPi_{\mathcal W} \bPi)
= \Lambda_i(\bPi \bPi_{\mathcal W}^{\perp} \bPi)
\quad
\text{ for }1\le i\le m.
\]
Since $\bPi \bPi_{\mathcal W}^{\perp} \bPi$ has at most rank $m$, we obtain
\begin{equation*}
\sum_{i=1}^m \sin^2(\theta_i)
= \mathrm{tr}(\bPi \bPi_{\mathcal W}^{\perp} \bPi).
\end{equation*}
Substituting $\bPi = FF^\top D_\nu$ and using the identity
$\operatorname{tr}(AB)=\operatorname{tr}(BA)$ together with
$F^\top D_\nu F = I$, we obtain
\[
\operatorname{tr}(\bPi \bPi_{\mathcal W}^{\perp} \bPi)
=
\operatorname{tr}(F^\top D_\nu \bPi_{\mathcal W}^{\perp} F).
\]

Using that $\bPi_{\mathcal W}^{\perp}$ is $\bnu$-self-adjoint, we obtain
\[
F^\top D_\nu \bPi_{\mathcal W}^{\perp} F
= (\bPi_{\mathcal W}^{\perp} F)^\top D_\nu (\bPi_{\mathcal W}^{\perp} F),
\]
and therefore
\begin{equation}
\label{eq:energy_clean}
\sum_{i=1}^m \sin^2(\theta_i)
= \sum_{i=1}^m \|\bPi_{\mathcal W}^{\perp} \bbf^{\mathrm{spat},i}\|_\nu^2.
\end{equation}
Since all eigenvalues of $\bC_H^{\mathrm{spat}}$ are contained in the interval $[-1,1]$ (see lemma~\ref{lem:spatio_temporal_time_reversibility}), we have  $\lambda_{\min(\mathcal F+\mathcal W)} \ge -1$ as well. Combining this with \eqref{eq:knyazev_clean} and \eqref{eq:energy_clean} we conclude
\[
|\lambda_m^{\mathrm{spat}} - \widehat{\lambda}_m^{\mathrm{spat}}|
\le (\lambda_1^{\mathrm{spat}} + 1)
\sum_{i=1}^m \|\bPi_{\mathcal W}^{\perp} \bbf^{\mathrm{spat},i}\|_\nu^2.
\]

Finally, inserting the block structure of $\bnu$ and using \eqref{eq:Pi_W_diag} yields \eqref{eq:extended_error}.
\end{proof}

\section{Numerical examples}
\label{sec:numerical_examples}

In this section, we illustrate different aspects of our framework with the aid of three numerical examples. The first example demonstrates how the choice of snapshot coupling influences the detection of evolving and recurring communities. The second example compares the full and reduced formulations of the method and highlights the effectiveness of the proposed model reduction strategy. Finally, the third example applies our method to a temporal network generated from an opinion dynamics model. This example is based on an interacting-agent dynamical system designed to capture real-world collective behavior and demonstrates the ability of our approach to identify and track evolving communities in complex dynamical systems. In all examples, we use the heuristics described in appendix~\ref{sec:appendix_4} to select the spatial eigenvectors used as feature coordinates for clustering.

\subsection{Example 1 -- recurrent communities}
\label{sec:example_1}
In this example we consider a temporal network $\mathbb{G} = (G_1,\dots, G_{32})$ with 120 nodes consisting of 32 snapshots generated independently using a stochastic block model with intra-community probability $p = 0.7$ and inter-community probability $q = 0.05$. In snapshots $G_1$--$G_8$, the network exhibits two stable communities $C_1 = \{v_1,\dots,v_{60}\}$, $C_2 = \{v_{61},\dots, v_{120}\}$. At snapshot $G_9$, both communities $C_1$ and $C_2$ start gradually splitting into two smaller communities of equal size, $C_{11} = \{v_1,\dots, v_{30}\}$, $C_{12} = \{v_{31},\dots, v_{60}\}$, and $C_{21} = \{v_{61},\dots, v_{90}\}$, $C_{22} = \{v_{91},\dots, v_{120}\}$, respectively. This splitting process is completed by snapshot $G_{12}$ and the resulting four-community structure then persists over snapshots $G_{13}$--$G_{20}$. Lastly, communities $C_{11}$ and $C_{12}$ gradually merge again during the period $G_{21}$--$G_{24}$ and after snapshot $G_{25}$ the network has three stable communities $C_1$, $C_{21}$, and $C_{22}$, which persist until the end of the evolution. We refer to these three stable regimes of the network evolution as phases~1, 2, and~3. The adjacency matrices of this temporal network are shown in figure~\ref{fig:numerical_example_1}\textbf{a}.

In this example, we employ the snapshot coupling scheme defined in~\eqref{eq:decay_coupling}. Since the stable periods and structural changes occur on a longer time scale than in example~\ref{ex:guiding_example}, we choose a smaller decay parameter $\alpha = 0.01$ in order to strengthen the coupling between snapshots over longer temporal distances. We consider two coupling matrices $H_1$ and $H_2$: a cyclic one (figure~\ref{fig:numerical_example_1}\textbf{b}), where the last snapshots are coupled to the first ones (in this case we set $w_{ts}=\exp(-\alpha\min\{|t-s|,M-|t-s|\}^2)$ for $1\le t,s\le M$), and a non-cyclic one (figure~\ref{fig:numerical_example_1}\textbf{c}), where no such coupling is imposed.

By introducing cyclic coupling, our approach is able to correctly identify the reappearance of community $C_1$ from phase~1 in phase~3 (figure~\ref{fig:numerical_example_1}\textbf{g}). In particular, the first spatial eigenvector in the cyclic coupling case separates the space--time nodes across all three phases into two large groups (figure~\ref{fig:numerical_example_1}\textbf{e}), while spatial eigenvectors~3 and~4 provide additional information about the splitting of communities $C_1$ and $C_2$ and the merging of communities $C_{11}$ amd $C_{12}$. This allows our approach to recover community $C_1$ in phase~3. In contrast, without cyclic coupling, this would not be possible (figures~\ref{fig:numerical_example_1}\textbf{h},\textbf{i}). The space--time observables describing the separation of community $C_1$ from the rest of the network in phases~1 and~3 appear as two different spatial eigenvectors (eigenvectors~1 and~2 in figure~\ref{fig:numerical_example_1}\textbf{f}), due to the weak coupling between snapshots from these phases, while eigenvectors 3 and 6 describe the rest of structural changes. Although nodes $v_1,\dots,v_{60}$ form a community in both phases, there is no information contained within leading eigenvectors of $\bC_H$ in this case indicating that these two communities are the same. Depending on whether we want to identify a densely connected group of nodes as the same community, even though it might have undergone structural changes between two periods of stability, we can, depending on the nature of the data, employ cyclic (or more generally long-range) couplings to detect such patterns.

\begin{figure}
    \centering
    \includegraphics[width=0.83\linewidth]{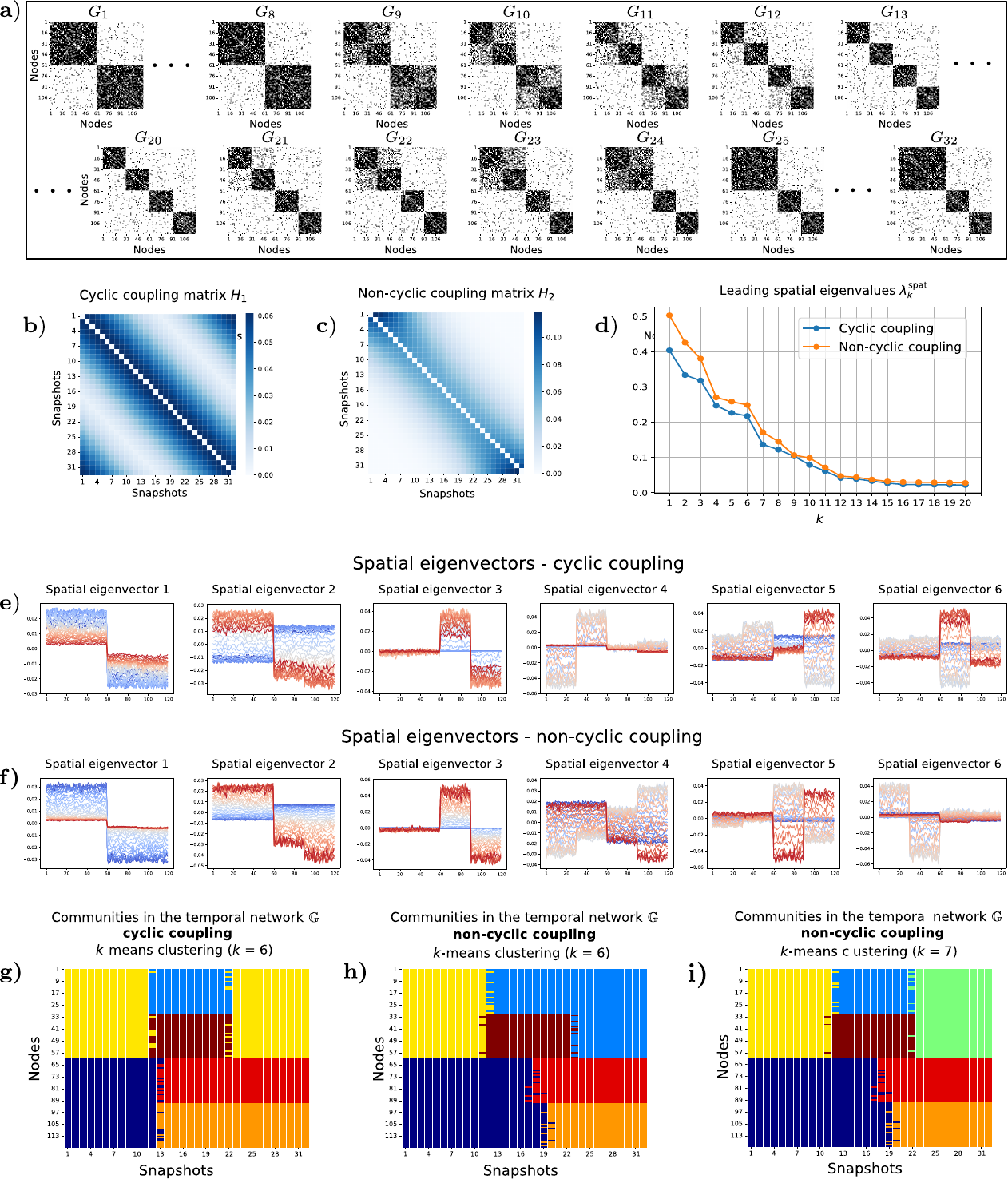}
    \caption{Illustration of our approach on the temporal network $\mathbb{G} = (G_1,\dots, G_{32})$. 
\textbf{a)} Adjacency matrices of the snapshots $G_1$--$G_{32}$ with 120 nodes. The network exhibits three stability phases: in the first phase ($G_1$--$G_8$), the network consists of two stable communities. Between $G_9$ and $G_{12}$, both communities gradually split, leading to a second stability phase with four communities ($G_{13}$--$G_{20}$). During $G_{21}$--$G_{24}$, two of these communities merge again, resulting in a final stable phase with three communities ($G_{25}$--$G_{32}$). 
\textbf{b)} Cyclic coupling matrix $H_1$. 
\textbf{c)} Non-cyclic coupling matrix $H_2$. 
\textbf{d)} Leading spatial eigenvalues of $\bC_{H_i}, i\in\{1,2\}$, for cyclic and non-cyclic couplings. 
\textbf{e)} Leading spatial eigenvectors of $\bC_{H_1}$ for cyclic coupling. 
\textbf{f)} Leading spatial eigenvectors of $\bC_{H_2}$ for non-cyclic coupling. 
\textbf{g)}~$k$-means ($k=6$) clustering of space--time nodes using spatial eigenvectors~1, 3, and~4 as feature coordinates in the cyclic case. 
\textbf{h)} $k$-means ($k=6$) clustering of space--time nodes using spatial eigenvectors~1, 2, 3, and~6 as feature coordinates in the non-cyclic case. 
\textbf{i)} $k$-means ($k=7$) clustering of space--time nodes using spatial eigenvectors~1, 2, 3, and~6 as feature coordinates in the non-cyclic case.}
    \label{fig:numerical_example_1}
\end{figure}

\subsection{Example 2 -- model reduction}
\label{sec:example_2}

In this example we perform a comparative analysis of the full and reduced approaches on a temporal network with 140 nodes. We choose again the coupling scheme defined in \eqref{eq:decay_coupling} with $\alpha=0.03$ (figure~\ref{fig:numerical_example_2}\textbf{b}). Since the community structure of each snapshot is encoded in the dominant eigenvectors of the transition matrix $S_t$ of a random walk defined on it, we use them to define the search spaces $\mathcal W_t$. To further reduce computational requirements, instead of computing the full eigendecomposition of $S_t$, we use an approximation of the dominant eigenvectors. In this example, we employ the Nyström method~\cite{Fowlkes_2004} to obtain good approximations by computing the eigendecomposition of a submatrix defined by a chosen node sample and extending the resulting eigenvectors to the full dimension. As shown in~\cite{Fowlkes_2004}, relatively small node samples are often sufficient to obtain good approximations. In this experiment, we use samples obtained by selecting half of the nodes of the temporal network uniformly at random.

We therefore define the spaces $\mathcal W_t$ to be spanned by the constant vector $\bone \in \mathbb{R}^N$ together with the approximated dominant eigenvectors of $S_t$. In figure~\ref{fig:numerical_example_2}\textbf{e}, we compare the spatial eigenvalues of $\bC_H$ in the full model (blue) with those obtained from projections onto the subspaces $\mathcal W_t$ (green). For comparison, we also include the spatial eigenvalues obtained when $\mathcal W_t$ are spanned by the exact dominant eigenvectors of $S_t$ (orange). In both cases, the spatial eigenvalues of the reduced models are close to those of the full model. In figure~\ref{fig:numerical_example_2}\textbf{c}, we show the leading spatial eigenvectors of $\bC_H$ and in figure~\ref{fig:numerical_example_2}\textbf{d} the network observables obtained by lifting eigenvectors of $\widehat{\bC}_H$. We observe that these network observables closely resemble the spatial eigenvectors of $\bC_H$, while being computed at a significantly lower computational cost.

Guided by the heuristics described in section~\ref{sec:example_1}, we select spatial eigenvectors~1 and~2 as feature vectors for the space--time node clustering. In~figures~\ref{fig:numerical_example_2}\textbf{f},\textbf{h}, we show the corresponding two-dimensional embeddings of space--time nodes, where each node is colored according to the snapshot it belongs to. Applying the $k$-means algorithm with $k=3$, we see in figures~\ref{fig:numerical_example_2}\textbf{g},\textbf{i} that the community structure of the temporal network is correctly identified in both cases, which demonstrates the effectiveness of the proposed reduced model.  

Lastly, to further demonstrate the robustness of our method, we compare our snapshot coupling scheme with the method proposed in~\cite{Trower_2025} which can be understood as a special case of our approach, where only the average correlation between successive snapshots is maximized (see the coupling matrix $H_s$ in figure~\ref{fig:numerical_example_2}\textbf{j}). Since this coupling strategy only incorporates very local temporal interactions, it fails to adequately capture long-range temporal persistence of communities across snapshots. As a result, important temporal information is lost in the eigenvectors of $\bC_{H_s}$. In figure~\ref{fig:numerical_example_2}\textbf{j}, we show that the $k$-means clustering of the space--time nodes with feature coordinates given by the leading spatial eigenvectors of $\bC_{H_s}$ fails to correctly identify the evolving community structure of the temporal network.

\begin{figure}
    \centering
    \includegraphics[width=0.78\linewidth]{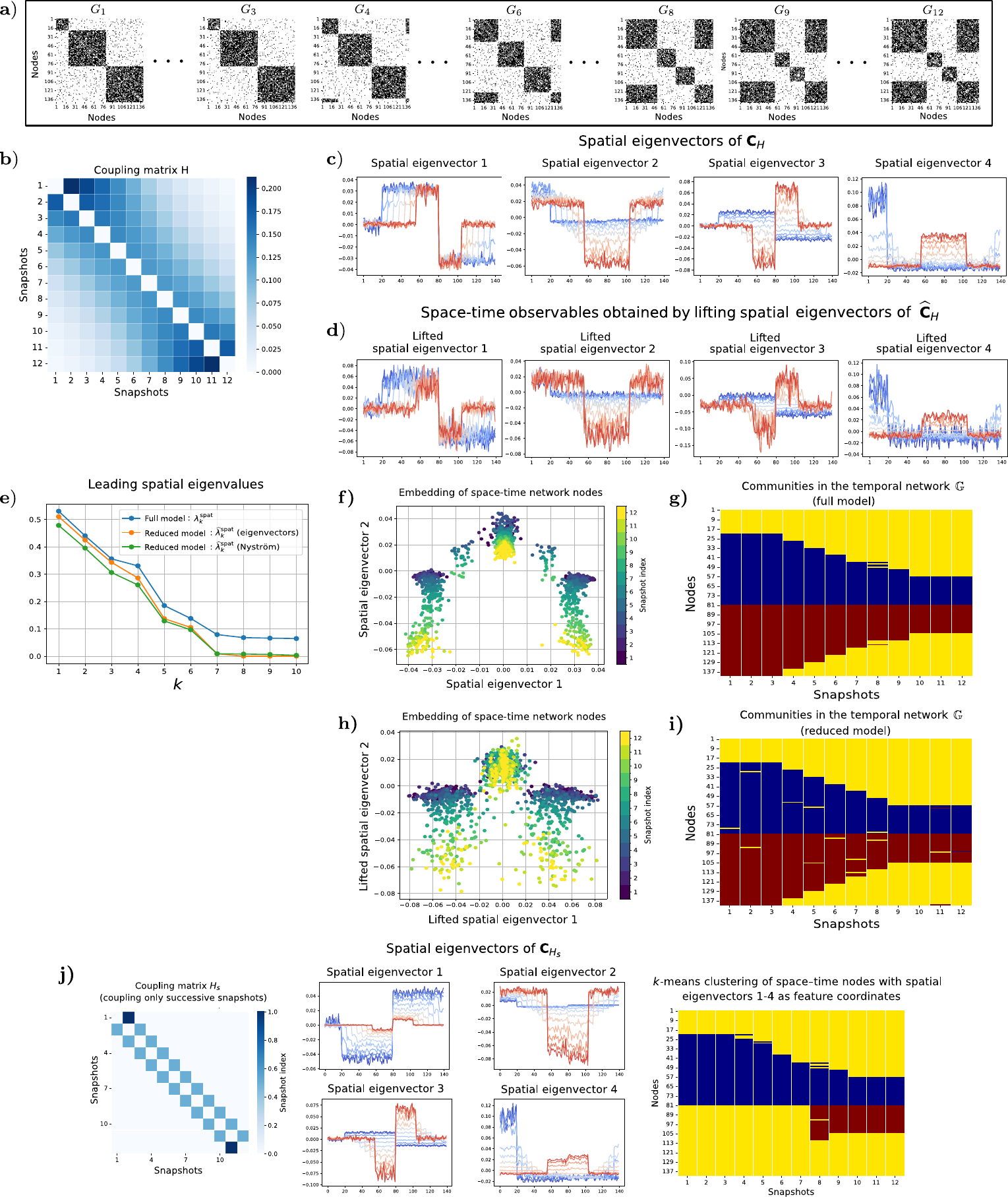}
    \caption{Illustration of our approach on the temporal network $\mathbb G=(G_1,\dots,G_{12})$ with 140 nodes. 
\textbf{a)} Adjacency matrices of snapshots $G_1$--$G_{12}$. In snapshots $G_1$--$G_3$, the network consists of three stable communities $C_1,C_2,C_3\subset V$ with $|C_1|<|C_2|=|C_3|$. Between snapshots $G_4$ and $G_8$, a fixed number of nodes from communities $C_2$ and $C_3$ gradually reassign their membership to $C_1$ at each step, leading to a new stable community structure that persists from $G_9$ to $G_{12}$. \textbf{b)} Coupling matrix. 
\textbf{c)} Leading spatial eigenvectors of $\bC_H$. 
\textbf{d)} Space--time observables obtained by lifting the leading spatial eigenvectors of $\widehat{\bC}_H$. 
\textbf{e)} Leading spatial eigenvalues of the full model together with two reduced models defined on subspaces $\mathcal W_t$ spanned by eigenvectors of the transition matrices $S_t$ and by their Nystr\"om approximations. 
\textbf{f)} and \textbf{h)} Two-dimensional embeddings of the space--time nodes obtained using the first two spatial eigenvectors of $\bC_H$ and the lifted first two spatial eigenvectors of $\widehat{\bC}_H$, respectively. 
\textbf{g)} and \textbf{i)} Community structure of the temporal network obtained by applying $k$-means clustering with $k=3$ to the embeddings of the space--time nodes obtained from the full and reduced model respectively. 
\textbf{j)} Performance of the special case of our approach in which only successive snapshots are coupled through the coupling matrix $H_s$. Leading spatial eigenvectors of $\bC_{H_s}$ together with the $k$-means clustering of the space--time nodes, using spatial eigenvectors 1--4 as feature coordinates.}
    \label{fig:numerical_example_2}
\end{figure}

\subsection{Example 3 -- opinion dynamics}
As a final example, we illustrate the applicability of our framework to clustering phenomena arising in interacting-agent systems, complementing transfer-operator based model reduction approaches~\cite{Wehlitz_2026}. We consider a modified version of the one-dimensional agent-based Hegselmann--Krause opinion dynamics model introduced in~\cite{Cahill_2025}, which extends the classical Hegselmann--Krause model~\cite{Hegselmann_2002} by incorporating interactions between voters and political parties. We generate $N=500$ agents, where the opinion of the $i$th agent at time $t$ is denoted by $x_i(t)\in[0,1]$. In addition to the voters, we assume there exist $N_p=3$ political parties represented by highly influential agents with positions $y_\alpha$ for $\alpha\in\{1,2,3\}$ in the opinion space.

In this setting, voters form their opinions both through interactions with other voters and through interactions with political parties. As in the classical model, voters tend to align with nearby opinions in the opinion space, but they are also attracted toward parties whose political stance is close to their own. Simultaneously, political parties adapt their positions in response to the surrounding voter opinions in order to attract support, while also maintaining sufficient separation from competing parties in order to preserve their political identity. The resulting coupled voter--party dynamics are given by
\begin{equation}
\label{eq:hegselmann_krause}
\begin{split}
\mathrm{d}x_i&=\gamma_{xx}F_{xx}(x_i)\mathrm{d}t+\gamma_{yx}F_{yx}(x_i)\mathrm{d}t+\sigma_x \mathrm{d}W_t^i,\\
\mathrm{d}y_{\alpha}&=\gamma_{xy}F_{xy}(y_{\alpha})\mathrm{d}t-\gamma_{yy}F_{yy}(y_{\alpha})\mathrm{d}t+\sigma_y \mathrm{d}W_t^{\alpha}.
\end{split}
\end{equation}
Here, $W^i, i=1,\dots,N$, and $W^\alpha, \alpha=1,\dots,N_p$, are independent Brownian motions representing random external influences on voters and parties. The coefficients $\sigma_x$ and $\sigma_y$ determine the strength of the stochastic noise in the dynamics. The interaction forces are defined by
\begin{equation*}
\begin{split}
F_{xx}(x_i)&=\frac{1}{N}\sum_{j=1}^{N}\mathds{1}_{R_{xx}}(x_j-x_i)(x_j-x_i),\\
F_{yx}(x_i)&=\frac{1}{N_p}\sum_{\beta=1}^{N_p}\mathds{1}_{R_{yx}}(y_{\beta}-x_i)(y_{\beta}-x_i),\\
F_{xy}(y_{\alpha})&=\frac{1}{N}\sum_{j=1}^{N}\mathds{1}_{R_{xy}}(x_j-y_{\alpha})(x_j-y_{\alpha}),\\
F_{yy}(y_{\alpha})&=\frac{1}{N_p}\sum_{\beta=1}^{N_p}\mathds{1}_{R_{yy}}(y_{\beta}-y_{\alpha})(y_{\beta}-y_{\alpha}),
\end{split}
\end{equation*}
where $\mathds{1}_R(x)$, $R>0$ denotes the interaction kernel  
\begin{equation*}
\mathds{1}_R(x)=
\begin{cases}
    1, & |x|\le R, \\
    0, & |x|> R.
\end{cases}
\end{equation*}
The parameters $R_{xx},R_{yx},R_{xy}$, and $R_{yy}$ denote the interaction radii governing voter--voter, voter--party, party--voter, and party--party interactions, respectively. The constants $\gamma_{xx},\gamma_{xy},\gamma_{yx},\gamma_{yy}\ge0$ determine the strength of the contribution of the corresponding interaction forces to the overall dynamics. A detailed discussion of these parameters and their sociopolitical interpretation can be found in~\cite{Cahill_2025}.

We simulate the system~\eqref{eq:hegselmann_krause} with parameters $R_{xx}=0.04, R_{yx}=0.1,R_{xy}=0.5, R_{yy}=0.05$ and $\gamma_{xx}=0.5,\gamma_{yx}=0.9,\gamma_{xy}=0.01,\gamma_{yy}=0.015$ for 390 time steps using time step size 0.05. The noise coefficients are set to $\sigma_x=0.025, \sigma_y=0.002$. Initially, voters' opinions are sampled uniformly from the interval $[0,1]$, while the initial party positions are given by $y_1(0)=0.15, y_2(0)=0.35$ and $y_3(0)=0.9$. The trajectories of voters in the opinion space are shown by blue lines in figure~\ref{fig:numerical_example_3}\textbf{a}, while the trajectories of the three political parties are shown in red.

Under this parameter regime (see section~4 in~\cite{Cahill_2025}), the system eventually reaches consensus, with both voters and parties converging toward a common opinion. However, the evolution exhibits several clearly distinguishable phases. Initially, three separate opinion communities emerge around the three political parties. Since parties $y_1$ and $y_2$ begin with relatively similar political positions, the corresponding voter groups gradually approach one another and eventually merge as illustrated in figure~\ref{fig:numerical_example_3}\textbf{a}. In contrast, the community associated with party $y_3$, which initially occupies the opposite side of the opinion spectrum, remains separated for a substantially longer period. Finally, the remaining two communities merge, leading to near-consensus across the population.

From this simulation, we construct a weighted temporal network on the set of voters consisting of 40 snapshots of the system taken at times $t=0,10,20,\dots,390$, indicated by dashed lines in figure~\ref{fig:numerical_example_3}\textbf{a}. For each snapshot, edge weights between nodes are determined according to the distance between the corresponding opinions, so that voters with similar opinions are connected by edges of larger weight
\[
w(x_i,x_j)=e^{-\frac{(x_i-x_j)^2}{2\varepsilon^2}},
\]
where we set $\varepsilon=0.05$. The resulting weighted adjacency matrices are then used as input for our framework.

We apply the reduced version of our algorithm with coupling parameter $\alpha=0.01$ in the matrix $H$. The subspaces $\cW_t$ are chosen to be spanned by the dominant eigenvectors of the transition matrices associated with random walks on the network snapshots. We cluster space--time nodes using spatial eigenvectors 1 and 3 shown in figure~\ref{fig:numerical_example_3}\textbf{c} as feature coordinates. The resulting communities are shown in figure~\ref{fig:numerical_example_3}\textbf{b}. During the initial phase, the algorithm identifies three distinct communities (light blue, orange, and red). Around $t=15$, the orange and red communities merge into a larger dark blue community. Later, around $t=25$, the remaining light blue community merges with the dark blue one. Since the dark blue community already contains a substantially larger number of nodes, the algorithm identifies this event not as a symmetric merge but rather as the absorption of the smaller light blue community into the dominant dark blue community, which then persists until the end of the simulation.

\begin{figure}
    \centering
    \includegraphics[width=0.9\linewidth]{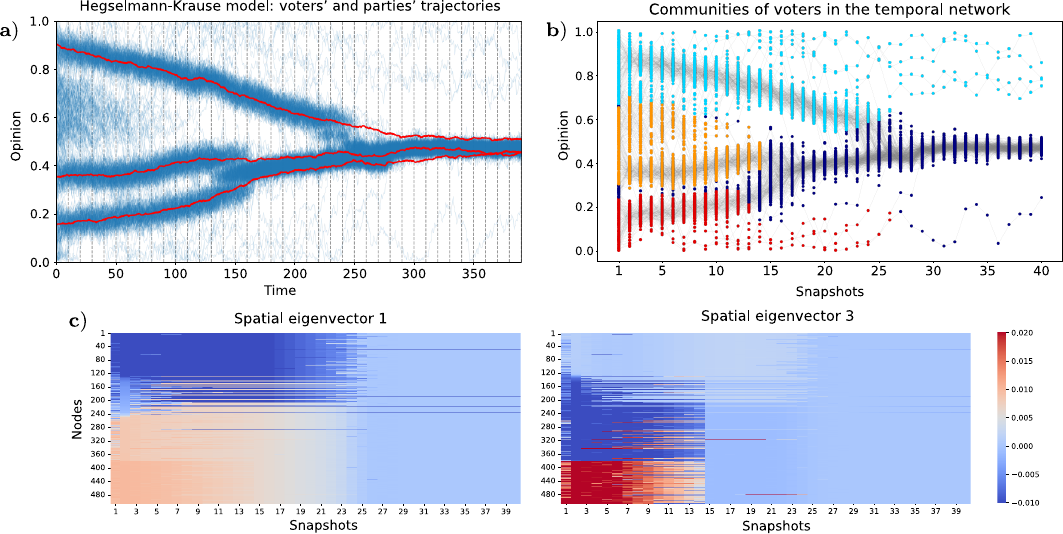}
    \caption{Application of our spectral clustering approach to a temporal network generated from the simulation of the Hegselmann--Krause opinion dynamics model. \textbf{a)} Trajectories of voters (blue) and political parties (red) in the opinion space over time. \textbf{b)} Voters at snapshots $t=0,10,\dots,390$, colored according to the communities identified by our approach. \textbf{c)} Spatial eigenvectors of $\bC_H$ used for clustering in this example. For clarity, the snapshot segments are shown as columns of the heatmaps and the nodes are ordered according to the values of the second snapshot segment of the first eigenvector.}
    \label{fig:numerical_example_3}
\end{figure}

\section{Conclusion}
\label{sec:conclusion}

In this work, we developed a transfer operator-based framework for community detection in temporal networks. Building on multi-view canonical correlation analysis, we introduced a general snapshot coupling scheme that allows correlations between snapshots to be incorporated into a unified objective function. This flexibility enables the method to adapt to different application scenarios while reducing the influence of noisy fluctuations and spurious temporal effects.

A central contribution of this work is the interpretation of the resulting multi-view CCA formulation through the lens of a spatio-temporal random walk on an augmented static network. We showed that the optimization problem reduces to a generalized eigenvalue problem whose spectral properties can be analyzed using known techniques. This perspective provides a natural dynamical interpretation of temporal communities as metastable structures of the associated spatio-temporal process and establishes a direct connection between temporal community detection and well-understood concepts from the analysis of static networks.

Furthermore, we investigated how temporal effects induced by the snapshot coupling scheme interact with spatial effects arising from the network structure. In particular, we have shown that not all dominant eigenvectors necessarily contain meaningful information about community structure and proposed heuristics for identifying informative spatial modes. Both our theoretical and computational analysis contribute to a better understanding of the spectral signatures of evolving communities and highlights the importance of separating temporal artifacts from genuine structural changes.

To improve scalability, we derived a reduced-order formulation based on projections onto low-dimensional subspaces. We showed that the key spectral properties of the full model carry over to the reduced setting and established error bounds that quantify the approximation quality. Numerical experiments show that the reduced model is capable of recovering evolving community structures while significantly reducing computational costs.

Several interesting directions remain for future research. On the theoretical side, a deeper understanding of the spectral properties of the spatio-temporal operator and more principled criteria for selecting informative spatial eigenvectors would be desirable. It would also be interesting to investigate adaptive and data-driven choices of the snapshot coupling matrix, as well as connections to other formulations of temporal and multilayer networks. Finally, applying the proposed methodology to empirical temporal networks from social, biological, and technological domains may provide additional insights into the structure and dynamics of complex systems.

\section*{Funding}
This work has been partially funded by the Deutsche Forschungsgemeinschaft (DFG, German Research Foundation) under Germany's Excellence Strategy – The Berlin Mathematics Research Center MATH+ (EXC-2046/1, EXC-2046/2, project ID: 390685689) and by the Federal Ministry of Research, Technology and Space (BMFTR) project EPISERVE (funding ID: 031L0324A), member of the German Modeling network for severe infectious diseases, MONID)

\section*{Additional Information}
The authors declare no competing interests. 

\bibliographystyle{abbrv} 
\bibliography{references}

\begin{thebibliography}{10}

\bibitem{Aslak_2018}
U.~Aslak, M.~Rosvall, and S.~Lehmann.
\newblock Constrained information flows in temporal networks reveal intermittent communities.
\newblock {\em Phys. Rev. E}, 97:062312, Jun 2018.

\bibitem{Atkinson_1972}
F.~V. Atkinson.
\newblock {\em Multiparameter Eigenvalue Problems. Volume I: Matrices and Compact Operators}, volume~82 of {\em Mathematics in Science and Engineering}.
\newblock Academic Press, New York and London, 1972.

\bibitem{Aynaud_2013}
T.~Aynaud, E.~Fleury, J.-L. Guillaume, and Q.~Wang.
\newblock Communities in evolving networks: Definitions, detection, and analysis techniques.
\newblock In A.~Mukherjee, M.~Choudhury, F.~Peruani, N.~Ganguly, and B.~Mitra, editors, {\em Dynamics On and Of Complex Networks, Volume 2}, Modeling and Simulation in Science, Engineering and Technology, pages 159--200. Birkh{\"a}user, New York, NY, 2013.

\bibitem{Aynaud_2010}
T.~Aynaud and J.-L. Guillaume.
\newblock {Static community detection algorithms for evolving networks}.
\newblock In {\em {WiOpt'10: Modeling and Optimization in Mobile, Ad Hoc, and Wireless Networks}}, pages 508--514, Avignon, France, May 2010.

\bibitem{Bazzi_2016}
M.~Bazzi, M.~A. Porter, S.~Williams, M.~McDonald, D.~J. Fenn, and S.~D. Howison.
\newblock Community detection in temporal multilayer networks, with an application to correlation networks.
\newblock {\em Multiscale Modeling \& Simulation}, 14(1):1--41, 2016.

\bibitem{Blaskovic_2025}
F.~Bla{\v{s}}kovi{\'c}, T.~O.~F. Conrad, S.~Klus, and N.~Djurdjevac~Conrad.
\newblock Random walk based snapshot clustering for detecting community dynamics in temporal networks.
\newblock {\em Scientific Reports}, 15(1):24414, 2025.

\bibitem{Budisic_2012}
M.~Budišić and I.~Mezić.
\newblock Geometry of the ergodic quotient reveals coherent structures in flows.
\newblock {\em Physica D: Nonlinear Phenomena}, 241(15):1255--1269, 2012.

\bibitem{Busiello_2017}
D.~M. Busiello, S.~Suweis, J.~Hidalgo, and A.~Maritan.
\newblock Explorability and the origin of network sparsity in living systems.
\newblock {\em Scientific Reports}, 7(1):12323, 2017.

\bibitem{Cahill_2025}
P.~Cahill and G.~A. Gottwald.
\newblock A modified {Hegselmann}–{Krause} model for interacting voters and political parties.
\newblock {\em Physica A: Statistical Mechanics and its Applications}, 665:130490, 2025.

\bibitem{De_Domenico_2013}
M.~De~Domenico, A.~Sol\'e-Ribalta, E.~Cozzo, M.~Kivel\"a, Y.~Moreno, M.~A. Porter, S.~G\'omez, and A.~Arenas.
\newblock Mathematical formulation of multilayer networks.
\newblock {\em Phys. Rev. X}, 3:041022, Dec 2013.

\bibitem{Demaine_2019}
E.~D. Demaine, F.~Reidl, P.~Rossmanith, F.~{Sánchez Villaamil}, S.~Sikdar, and B.~D. Sullivan.
\newblock Structural sparsity of complex networks: Bounded expansion in random models and real-world graphs.
\newblock {\em Journal of Computer and System Sciences}, 105:199--241, 2019.

\bibitem{Djurdjevac_2012_phd}
N.~Djurdjevac.
\newblock {\em Methods for analyzing complex networks using random walker approaches}.
\newblock Dissertation, Freie Universität Berlin, 2012.

\bibitem{Djurdjevac_2012}
N.~Djurdjevac, M.~Sarich, and C.~Sch\"{u}tte.
\newblock Estimating the eigenvalue error of {M}arkov state models.
\newblock {\em Multiscale Modeling \& Simulation}, 10(1):61--81, 2012.

\bibitem{Djurdjevac_2010}
N.~Djurdjevac, M.~Sarich, and C.~Schütte.
\newblock {\em On Markov State Models for Metastable Processes}, pages 3105--3131.
\newblock Proceedings of the International Congress of Mathematicians, 2010.

\bibitem{Eisenmann_2025}
H.~Eisenmann.
\newblock A {Newton} method for solving locally definite multiparameter eigenvalue problems by multi-index.
\newblock {\em SIAM Journal on Matrix Analysis and Applications}, 46(2):906--933, 2025.

\bibitem{Fackeldey_2019}
K.~Fackeldey, P.~Koltai, P.~Névir, H.~Rust, A.~Schild, and M.~Weber.
\newblock From metastable to coherent sets— time-discretization schemes.
\newblock {\em Chaos: An Interdisciplinary Journal of Nonlinear Science}, 29(1):012101, 01 2019.

\bibitem{Fowlkes_2004}
C.~Fowlkes, S.~Belongie, F.~Chung, and J.~Malik.
\newblock Spectral grouping using the {Nystr\" om} method.
\newblock {\em IEEE Transactions on Pattern Analysis and Machine Intelligence}, 26(2):214--225, 2004.

\bibitem{Froyland_2024}
G.~Froyland, M.~Kalia, and P.~Koltai.
\newblock Spectral clustering of time-evolving networks using the inflated dynamic {Laplacian} for graphs, 2024.
\newblock Available at: \url{https://arxiv.org/abs/2409.11984}.

\bibitem{Froyland_2010}
G.~Froyland, N.~Santitissadeekorn, and A.~Monahan.
\newblock Transport in time-dependent dynamical systems: Finite-time coherent sets.
\newblock {\em Chaos: An Interdisciplinary Journal of Nonlinear Science}, 20(4):043116, 11 2010.

\bibitem{Golub_2013}
G.~H. Golub and C.~F. Van~Loan.
\newblock {\em Matrix Computations - 4th Edition}.
\newblock Johns Hopkins University Press, Philadelphia, PA, 2013.

\bibitem{Hegselmann_2002}
R.~Hegselmann and U.~Krause.
\newblock Opinion dynamics and bounded confidence models, analysis and simulation.
\newblock {\em Journal of Artificial Societies and Social Simulation}, 5(3):1--2, None 2002.

\bibitem{Holme_2012}
P.~Holme and J.~Saramäki.
\newblock Temporal networks.
\newblock {\em Physics Reports}, 519(3):97--125, 2012.
\newblock Temporal Networks.

\bibitem{Hotelling_1936}
H.~Hotelling.
\newblock Relations between two sets of variates.
\newblock {\em Biometrika}, 28(3/4):321--377, 1936.

\bibitem{Huisinga_2006}
W.~Huisinga and B.~Schmidt.
\newblock {\em Metastability and Dominant Eigenvalues of Transfer Operators}, pages 167--182.
\newblock Springer Berlin Heidelberg, Berlin, Heidelberg, 2006.

\bibitem{Kivela_2014}
M.~Kivelä, A.~Arenas, M.~Barthelemy, J.~P. Gleeson, Y.~Moreno, and M.~A. Porter.
\newblock Multilayer networks.
\newblock {\em Journal of Complex Networks}, 2(3):203--271, 07 2014.

\bibitem{Klus_2023}
S.~Klus and N.~Djurdjevac~Conrad.
\newblock Koopman-based spectral clustering of directed and time-evolving graphs.
\newblock {\em Journal of Nonlinear Science}, 33(8), 2023.

\bibitem{Klus_2024_1}
S.~Klus and N.~Djurdjevac~Conrad.
\newblock Dynamical systems and complex networks: a {K}oopman operator perspective.
\newblock {\em Journal of Physics: Complexity}, 5(4):041001, dec 2024.

\bibitem{Klus_2016}
S.~Klus, P.~Koltai, and C.~Schütte.
\newblock On the numerical approximation of the {P}erron--{F}robenius and {K}oopman operator.
\newblock {\em Journal of Computational Dynamics}, 3(1):51--79, 2016.

\bibitem{Klus_2024}
S.~Klus and M.~Trower.
\newblock Transfer operators on graphs: spectral clustering and beyond.
\newblock {\em Journal of Physics: Complexity}, 5(1):015014, feb 2024.

\bibitem{Knyazev_2010}
A.~V. Knyazev and M.~E. Argentati.
\newblock Rayleigh–{R}itz majorization error bounds with applications to {FEM}.
\newblock {\em SIAM Journal on Matrix Analysis and Applications}, 31(3):1521--1537, 2010.

\bibitem{Mancastroppa_2020}
M.~Mancastroppa, R.~Burioni, V.~Colizza, and A.~Vezzani.
\newblock Active and inactive quarantine in epidemic spreading on adaptive activity-driven networks.
\newblock {\em Phys. Rev. E}, 102:020301(R), Aug 2020.

\bibitem{Masuda_2013}
N.~Masuda and P.~Holme.
\newblock Predicting and controlling infectious disease epidemics using temporal networks.
\newblock {\em F1000Prime Reports}, 5:6, 2013.

\bibitem{Otto_2021}
S.~E. Otto and C.~W. Rowley.
\newblock Koopman operators for estimation and control of dynamical systems.
\newblock {\em Annual Review of Control, Robotics, and Autonomous Systems}, 4:59--87, 2021.

\bibitem{Padberg-Gehle_2017}
K.~Padberg-Gehle and C.~Schneide.
\newblock Network-based study of {Lagrangian} transport and mixing.
\newblock {\em Nonlinear Processes in Geophysics}, 24(4):661--671, 2017.

\bibitem{Rossetti_2019}
G.~Rossetti and R.~Cazabet.
\newblock Community discovery in dynamic networks: A survey.
\newblock {\em ACM Comput. Surv.}, 51(2), Feb. 2018.

\bibitem{Sarich_2014}
M.~Sarich, N.~Djurdjevac~Conrad, S.~Bruckner, T.~O.~F. Conrad, and C.~Schütte.
\newblock Modularity revisited: A novel dynamics-based concept for decomposing complex networks.
\newblock {\em Journal of Computational Dynamics}, 1(1):191--212, 2014.

\bibitem{Schuette_2013}
C.~Sch{\"u}tte and M.~Sarich.
\newblock {\em Metastability and Markov State Models in Molecular Dynamics: Modeling, Analysis, Algorithmic Approaches}, volume~24 of {\em Courant Lecture Notes in Mathematics}.
\newblock American Mathematical Society, Providence, RI, 2013.

\bibitem{Shawe-Taylor_2004}
J.~Shawe-Taylor and N.~Cristianini.
\newblock {\em Kernel Methods for Pattern Analysis}.
\newblock Cambridge University Press, Cambridge, United Kingdom, 2004.

\bibitem{Taylor_2017}
D.~Taylor, S.~A. Myers, A.~Clauset, M.~A. Porter, and P.~J. Mucha.
\newblock Eigenvector-based centrality measures for temporal networks.
\newblock {\em Multiscale Modeling \& Simulation}, 15(1):537--574, 2017.

\bibitem{Trower_2025}
M.~Trower, N.~Djurdjevac~Conrad, and S.~Klus.
\newblock Clustering time-evolving networks using the spatiotemporal graph {Laplacian}.
\newblock {\em Chaos: An Interdisciplinary Journal of Nonlinear Science}, 35(1):013126, 01 2025.

\bibitem{von_Luxburg_2007}
U.~von Luxburg.
\newblock A tutorial on spectral clustering.
\newblock {\em Statistics and Computing}, 17(4):395--416, 2007.

\bibitem{Wehlitz_2026}
N.~Wehlitz, G.~A. Pavliotis, C.~Schütte, and S.~Winkelmann.
\newblock Data-driven reduction of transfer operators for particle clustering dynamics, 2026.
\newblock Available at: \url{https://arxiv.org/abs/2601.02932}.

\end{thebibliography}

\appendix

\section{Proofs of lemmas from sections~\ref{sec:Spatio-temporal_random_walk} and~\ref{section:covariance_operators}}
\label{sec:appendix_1}

\begin{proof}[Proof of lemma~\ref{lem:H_random_walk}]
Define $\bpi = [\pi_1, \dots, \pi_M]^\top \in \mathbb{R}^M$ by
\[
\pi_t = \frac{r_t}{R}>0, \qquad R = \sum_{t=1}^M r_t.
\]
Then $\sum_{t=1}^M \pi_t = 1$, so $\bpi$ is a probability distribution. Moreover, for all $t,s$, we have
\[
\pi_t h_{ts}
= \frac{r_t}{R} \cdot \frac{w_{ts}}{r_t}
= \frac{r_s}{R} \cdot \frac{w_{st}}{r_s}
= \pi_s h_{st},
\]
which proves the detailed balance condition. The weighted degree of node $t$ in $G_H$ is
\[
\deg_H(t) = \sum_{s=1}^M (W_H)_{ts} = r_t,
\]
so that the transition matrix of a random walk on $G_H$ is indeed
\[
D_H^{-1} W_H = H,
\]
where $D_H = \diag(\deg_H(1), \dots, \deg_H(M))$. This random walk is reversible by the detailed balance condition and the uniqueness of $\bpi$ follows if $G_H$ is connected.
\end{proof}

\begin{proof}[Proof of lemma~\ref{lem:row_stochasticity}]
Since all $S_t$ are row-stochastic, we have $S_t\bone=\bone$ for all $1\le t\le M$. Consequently, for $t<s$, we obtain $K_{ts}\bone=(S_t\cdots S_{s-1})\bone=\bone$ and by the definition of probability distributions $\mu_t$, it holds that $(K_{ts})^{\top}\bmu_t=\bmu_s$. Hence, $T_{ts}\bone=D_{\mu_s}^{-1}(K_{ts})^{\top}D_{\mu_t}\bone=D_{\mu_s}^{-1}(K_{ts})^{\top}\bmu_t=D_{\mu_s}^{-1}\bmu_{s}=\bone$. Finally, since the coupling matrix $H$ is row-stochastic, we obtain for $\bone_{MN}\in\mathbb R^{MN}$ and $\bone_N\in\mathbb R^N$ that $\bC_H\bone_{MN}=[(\sum_{s=1}^Mh_{1s})\bone_N^{\top},\dots,(\sum_{s=1}^Mh_{Ms})\bone_N^{\top}]^{\top}=\bone_{MN}$. Therefore, $\bC_H$ is row-stochastic as well. The claim $\rho(\bC_H)=1$ then follows from standard properties of row-stochastic matrices and the largest eigenvalue is $\lambda=1$.
\end{proof}

\begin{lemma}
\label{lem:spatial_temporal_part}
    Let $(X_\tau^{(\bC_H)})_{\tau=1}^\infty$ denote the spatio-temporal random walk on $V\times\{1,\dots,M\}$ given by the transition matrix $\bC_H$. Its dynamics in the temporal dimension is governed by the transition matrix $H$. More precisely, let
    \[
    \bar{X}_{\tau}^{(\bC_H)}=\operatorname{pr}_2(X_{\tau}^{(\bC_H)})
    \]
    where $\operatorname{pr}_2$ denotes the projection onto the second coordinate. Then, $(\bar{X}_{\tau}^{(\bC_H)})_{\tau=1}^{\infty}$ is a well-defined random walk and
    \[
    p\Big(\bar{X}_{\tau+1}^{(\bC_H)}=s|\bar{X}_{\tau}^{(\bC_H)}=t\Big)=h_{ts}.
    \] 
\end{lemma}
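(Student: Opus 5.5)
The plan is to show that the projected process is a Markov chain by checking that the conditional probability of the next snapshot index, given the whole past of the full chain, depends only on the current snapshot index. This is the standard lumpability criterion (Kemeny--Snell): a function of a Markov chain is Markov with kernel $\bar P$ whenever, for every state $x=(v_i,t)$ and every block $s$, the total transition probability from $x$ into the block $V\times\{s\}$ equals $\bar P(t,s)$, independently of $v_i$. So the whole proof reduces to computing block row sums of $\bC_H$.

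Fix a space--time node $(v_i,t)$ and a target snapshot $s$. By the block structure \eqref{eq:k_pf_notation}, the transition probabilities from $(v_i,t)$ into $V\times\{s\}$ form row $i$ of the block $h_{ts}B_{ts}$. Here $B_{ts}$ equals $K_{ts}$ if $t<s$, $0$ if $t=s$, and $T_{st}$ if $t>s$, exactly as in the proof of theorem~\ref{thm:operator_decomposition}. Hence
\[
\mathbb P\big(X_{\tau+1}^{(\bC_H)}\in V\times\{s\}\mid X_\tau^{(\bC_H)}=(v_i,t)\big)=h_{ts}\,(B_{ts}\bone)_i .
\]
The key input is the identity $B_{ts}\bone=\bone$ for $t\neq s$, established in the proof of lemma~\ref{lem:row_stochasticity}. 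For $K_{ts}$ it follows from row-stochasticity of the $S_t$. For $T_{st}$ it follows from $K_{st}^\top\bmu_s=\bmu_t$. The case $t=s$ is consistent because $h_{tt}=w_{tt}/r_t=0$. So the block row sum equals $h_{ts}$ for every $i$.

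Then I would write the finite-dimensional conditional probability $\mathbb P(\bar X_{\tau+1}=s\mid \bar X_\tau=t,\bar X_{\tau-1}=t_{\tau-1},\dots)$ as a sum over the full histories consistent with these indices. Applying the Markov property of $X^{(\bC_H)}$ and the computation above to each summand, every summand contributes the same factor $h_{ts}$, which then factors out of the sum. This gives the Markov property of $\bar X$ and the stated transition probability for any initial distribution. The only points needing care are routine: conditioning only on events of positive probability, and including the initial step $\tau=1$. I don't expect a real obstacle; the substantive step is the row-sum identity, and it is already available from lemma~\ref{lem:row_stochasticity}.
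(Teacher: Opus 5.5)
Your proposal is correct and follows essentially the same route as the paper, which also writes the transition probability from $(v_i,t)$ to $(v_j,s)$ as $h_{ts}(B_{ts})_{ij}$ and uses $B_{ts}\bone=\bone$ to get a block row sum $h_{ts}$ independent of $i$. Your explicit strong-lumpability argument over full histories makes rigorous the Markov property of $\bar X_\tau^{(\bC_H)}$, which the paper leaves implicit in the phrase ``well-defined.''
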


\begin{proof}
    Let
    \begin{equation*}
    B_{ts}=
        \begin{cases}
           K_{ts}, & t<s,\\
           0, &t=s,\\
           T_{st}, &t>s.
        \end{cases}
    \end{equation*}
    Then
    \[
    p\Big(X_{\tau+1}^{(\bC_H)}=(v_j,s)\mid X_{\tau}^{(\bC_H)}=(v_i,t)\Big)=h_{ts}(B_{ts})_{ij}.
    \]
    Thus, the expression
    \[
    \sum_{j=1}^Np\Big(X_{\tau+1}^{(\bC_H)}=(v_j,s)\mid X_{\tau}^{(\bC_H)}=(v_i,t)\Big)=\sum_{j=1}^Nh_{ts}(B_{ts})_{ij}=h_{ts}\sum_{j=1}^N(B_{ts})_{ij}=h_{ts}
    \]
    is independent of $i$, and therefore the following is well-defined 
    \begin{equation*}
    \begin{split}
        h_{ts}&=\sum_{j=1}^Np\Big(X_{\tau+1}^{(\bC_H)}=(v_j,s)\mid X_{\tau}^{(\bC_H)}=(v_i,t)\Big)=\sum_{j=1}^Np\Big(X_{\tau+1}^{(\bC_H)}=(v_j,s)\mid \bar{X}_{\tau}^{(\bC_H)}=t\Big)\\&=p\Big(\bar{X}_{\tau+1}^{(\bC_H)}=s|\bar{X}_{\tau}^{(\bC_H)}=t\Big). \qedhere
    \end{split}
    \end{equation*}
\end{proof}

\begin{proof}[Proof of lemma~\ref{lem:spatio_temporal_time_reversibility}]
Let $\bpi=[\pi_1,\dots,\pi_M]^{\top}\in\mathbb R^M$ be defined as in lemma~\ref{lem:H_random_walk}. Let
\[
\bnu
=
\big[\pi_1 \bmu_1^\top ,\,
      \pi_2 \bmu_2^\top ,\,
      \dots,\,
      \pi_M \bmu_M^\top \big]^\top 
\in \mathbb{R}^{MN}.
\]
In the proof we use the following identities: For $t<s$, it holds that
\[
\bmu_{s}^\top  T_{ts} = \bmu_sD_{\mu_s}^{-1}(K_{ts})^{\top}D_{\mu_t}=\bone^{\top}(K_{ts})^{\top}D_{\mu_t}=\bone^{\top} D_{\mu_t}=\bmu_t^\top
\]
and
\[
\bmu_t^\top  K_{ts} = \bmu_t^{\top}(S_t\cdots S_{s-1})=\bmu_{t+1}^{\top}(S_{t+1}\dots S_{s-1})=\cdots=\bmu_{s-1}^{\top}S_{s-1}=\bmu_{s}^\top  .
\]
Let $(\bnu^\top  \bC_H)_t \in \mathbb{R}^N$ and $\bnu_t$ denote the $t$th segments (of length $N$) of the vectors
$\bnu^\top  \bC_H$ and $\bnu$, respectively. For every $1 \le t \le M$, we have
\begin{align*}
(\bnu^\top  \bC_H)_t
=
\sum_{k=1}^{t-1} h_{kt}\,\pi_k\,\bmu_k^\top  K_{kt}
\;+\;
\sum_{k=t+1}^{M} h_{kt}\,\pi_k\,\bmu_k^\top  T_{tk}
=
\sum_{k=1}^{M} h_{kt}\,\pi_k \,\bmu_t^\top .
\end{align*}
As shown in lemma~\ref{lem:H_random_walk}, $\pi_t h_{tk}=\pi_k h_{kt}$, which, with the row-stochasticity of $H$, gives
\[
\Big( \sum_{k=1}^{M} \pi_k h_{kt} \Big)\bmu_t^\top 
=
\Big( \sum_{k=1}^{M} h_{tk} \Big)\pi_t \bmu_t^\top 
=
\pi_t \bmu_t^\top 
=
\bnu_t^\top  .
\]
Hence, $\bnu^\top  \bC_H = \bnu^\top $, and $\bnu$ is a stationary distribution of $\bC_H$. Since $T_{ts}^{\top}D_{\mu_{s}}=D_{\mu_t}K_{ts}$ and $D_{\mu_{s}}T_{ts}=K_{ts}^{\top}D_{\mu_t}$, the detailed balance condition 
\[
D_{\nu}\bC_H=\bC_H^\top D_{\nu}
\]
holds so the random walk induced by $\bC_H$ is time-reversible with respect to $\bnu$. Consequently, the matrix $\bC_H$ is then self-adjoint with respect to $\bnu$-weighted inner product. Therefore, its spectrum is real. Since its spectral radius additionally satisfies $\rho(\bC_H)=1$ (see lemma~\ref{lem:row_stochasticity}) all eigenvalues are contained in the interval $[-1,1]$.
\end{proof}

\begin{proof}[Proof of lemma~\ref{lem:Pi_avg_projection}]
    From \eqref{eq:Pi_avg} it is easy to see that $\bPi_{\mathrm{avg}}^2=\bPi_{\mathrm{avg}}$. Furthermore, $\bPi_{\mathrm{avg}}$ is self-adjoint with respect to the $\bnu$-weighted inner product. Indeed, using \eqref{eq:Pi_avg_matrix}, for any two observables $\bbf, \bg$, we obtain
    \begin{equation*}
        \la\bPi_{\mathrm{avg}}\bbf,\bg \ra_{\nu}=
        \sum_{t=1}^M\pi_t\la\bone\bmu_t^{\top}\bbf_t,\bg_t\ra_{\mu_t}=\sum_{t=1}^M\pi_t\la\bone,\bbf_t\ra_{\mu_t}\la\bone,\bg_t\ra_{\mu_t}=\sum_{t=1}^M\pi_t\la\bbf_t,\bone\mu_t^{\top}\bg_t\ra_{\mu_t}=\la\bbf,\bPi_{\mathrm{avg}}\bg\ra_{\nu}.
    \end{equation*}
    Since $\bPi_{\mathrm{avg}}$ is idempotent and $\bnu$-self-adjoint, it is a $\bnu$-orthogonal projection. Finally, 
    \begin{equation*}
    \operatorname{Im}(\bPi_{\mathrm{avg}})
    =
    \operatorname{span}\{e_t\otimes\bone \mid 1\le t\le M\},
    \end{equation*}
    follows directly from the definition of $\bPi_{\mathrm{avg}}$.
\end{proof}

\begin{proof}[Proof of lemma~\ref{lem:projected_self_adjointness}]
We have
\begin{equation*}
\begin{split}
    \la\widehat{\bC}_H\ba,\bb\ra_{\bG_{\cW}}&=\ba^{\top}\widehat{\bC}_H^{\top}\bW^{\top}D_{\nu}\bW\bb \\
    &=\ba^{\top}\bW^{\top}\bC_H^{\top}D_{\nu}\bW(\bW^{\top}D_{\nu}\bW)^{-{\top}}\bW^{\top} D_{\nu}\bW\bb\\&=\ba^{\top}\bW^{\top} D_{\nu}\bC_H\bW\bb\\
    &=\ba^{\top}\bW^{\top} D_{\nu}\bW(\bW^{\top} D_{\nu}\bW)^{-1}\bW^{\top} D_{\nu}\bC_H\bW\bb \\
    &=\la\ba,\widehat{\bC}_H\bb\ra_{\bG_{\cW}},
\end{split}
\end{equation*}
where the third equality follows from the detailed balance condition of $\bC_H$ with respect to $\nu$ (see lemma~\ref{lem:spatio_temporal_time_reversibility}), and in the fourth equality we insert the identity $I = \bW^{\top} D_{\nu}\bW \, (\bW^{\top} D_{\nu}\bW)^{-1}$.
\end{proof}

\begin{proof}[Proof of lemma~\ref{lem:coarse_grained_spectral_radius_thm}]
Since $\widehat{\bC}_H$ is self-adjoint with respect to the $\bG_{\mathcal W}$-weighted inner product, its spectrum is real-valued. For any $\ba\in\mathbb R^d$, we have
\[
\|\bW\ba\|_\nu^2 = \ba^{\top} \bW^{\top} D_\nu \bW \ba = \|\ba\|_{\bG_{\mathcal W}}^2 .
\]
Let $\bPi_{\mathcal W}$ denote the $\bnu$-orthogonal projector onto
$\mathcal W$. Using the definition of the Galerkin projection,
\[
\bW\widehat{\bC}_H \ba
= \bPi_{\mathcal W}\bC_H \bW\ba.
\]
Hence,
\[
\|\widehat{\bC}_H \ba\|_{\bG_{\mathcal W}}
= \|\bW\widehat{\bC}_H \ba\|_\nu 
= \|\bPi_{\mathcal W}\bC_H \bW\ba\|_\nu 
\le \|\bC_H \bW\ba\|_\nu
\le \|\bC_H\|_\nu\,\|\bW\ba\|_\nu,
\]
where the first inequality follows from the contractivity of the
$\bnu$-orthogonal projector $\bPi_{\mathcal W}$. Since $\|\bC_H\|_\nu\le 1$, we conclude that
\[
\|\widehat{\bC}_H \ba\|_{\bG_{\mathcal W}}
\le \|\bW\ba\|_\nu
= \|\ba\|_{\bG_{\mathcal W}}.
\]
Therefore,
\[
\|\widehat{\bC}_H\|_{\bG_{\mathcal W}} \le 1 .
\]
Finally, self-adjointness of $\widehat{\bC}_H$ with respect to the $\bG_{\cW}$-weighted inner product implies that its induced norm coincides with
its spectral radius. Hence
\[
\rho(\widehat{\bC}_H) = \|\widehat{\bC}_H\|_{\bG_{\mathcal W}} \le 1. \qedhere
\]
\end{proof}

\section{Formal description of a space--time network}
\label{sec:appendix_2}

We formalize the construction of the space--time network $\mathscr G_H=(\mathscr V, \mathscr E, \omega)$ as follows. The bijection
\[
\zeta : V \times \{1,\dots, M\} \rightarrow \{1,\dots, MN\}, 
\qquad
(v_i,t) \mapsto (t-1)N + i,
\]
flattens the space--time nodes and allows us to define a new state space
\[
\mathscr V 
= 
\zeta\big(V \times \{1,\dots,M\}\big)
=
\{1,\dots ,MN\}.
\]
Thus, we can interpret $\bC_H$ as the transition matrix of a random walk on a weighted undirected network $\mathscr G_H$, where the edge weights are given by the function $\omega$ as summarized in lemma~\ref{lem:multilayer_network}.

\begin{lemma}
\label{lem:multilayer_network}
    Let $t<s$. Within the network $\mathscr G_H=(\mathscr V, \mathscr E, \omega)$ associated to a temporal network $\mathbb G$ with a coupling matrix $H$, an edge
    \[
    e_{ij}^{ts}
    = \big(\zeta(v_i,t),\,\zeta(v_j,s)\big) \in \mathscr E\subset \mathscr V\times\mathscr V
    \]
    exists if and only if snapshots $G_t$ and $G_s$ are coupled with nonzero weight $h_{ts}$ and a temporal path of length $s-t$ exists from
    $v_i \in G_t$ to $v_j \in G_s$, that is, if and only if
    \[
    h_{ts}(K_{ts})_{ij} \neq 0.
    \]
    The weight of edge $e_{ij}^{ts}$ is defined as
    \[
    \omega(e_{ij}^{ts})
    = \nu(\zeta(v_i,t))(\bC_H)_{\zeta(v_i,t),\,\zeta(v_j,s)}=\pi_t\mu_t(v_i)(\bC_H)_{\zeta(v_i,t),\,\zeta(v_j,s)}.
    \]
    where $\mu_t$ is defined in \eqref{eq:evolution_of_prob_distr}, and $\bpi$ and $\bnu$ are as defined in lemmas~\ref{lem:H_random_walk} and~\ref{lem:spatio_temporal_time_reversibility}, respectively. Then, the transition matrix of a random walk on $\mathscr G$ is given by $\bC_H$.
\end{lemma}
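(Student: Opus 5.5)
The statement to prove is Lemma~\ref{lem:multilayer_network}: with the edge weights $\omega(e^{ts}_{ij})=\nu(\zeta(v_i,t))(\bC_H)_{\zeta(v_i,t),\zeta(v_j,s)}$, the random walk on $\mathscr G_H$ has transition matrix $\bC_H$. The plan is to treat $\omega$ as the flow matrix $\Omega = D_\nu \bC_H$ indexed by $\mathscr V$ and check two things. First, $\Omega$ is a legitimate weight function of an undirected network, i.e.\ symmetric and nonnegative. Second, normalising its rows returns $\bC_H$.

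\textbf{Symmetry and nonnegativity.} Lemma~\ref{lem:spatio_temporal_time_reversibility} gives detailed balance, $D_\nu\bC_H=\bC_H^\top D_\nu$, so $\Omega=\Omega^\top$. Hence $\omega(e^{ts}_{ij})=\omega(e^{st}_{ji})$ and the network is undirected. Nonnegativity follows because all entries of $K_{ts}$, $T_{ts}$, $h_{ts}$ and $\nu$ are nonnegative.

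\textbf{Edge set.} For $t<s$ the $(t,s)$ block of $\bC_H$ is $h_{ts}K_{ts}$. Assuming $\nu>0$, an edge exists iff $h_{ts}(K_{ts})_{ij}\neq0$. The entry $(K_{ts})_{ij}=(S_t\cdots S_{s-1})_{ij}$ is a sum over walks $v_i\to\dots\to v_j$ of products of one-step probabilities, each using an edge of the corresponding snapshot. So it is positive iff a temporal path of length $s-t$ exists. For the reverse direction $s>t$, the block $h_{st}T_{ts}$ has entries proportional to $(K_{ts})_{ij}\mu_t(v_i)/\mu_s(v_j)$, which is consistent with the symmetry already established.

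\textbf{Row normalisation.} The weighted degree of node $\zeta(v_i,t)$ is $\sum_{k}\Omega_{\zeta(v_i,t),k}=\nu(\zeta(v_i,t))\sum_k(\bC_H)_{\zeta(v_i,t),k}=\nu(\zeta(v_i,t))$, by row-stochasticity (Lemma~\ref{lem:row_stochasticity}). Dividing each row of $\Omega$ by its degree therefore recovers exactly $\bC_H$.

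\textbf{Main obstacle.} The delicate point is positivity of $\nu=(\pi_t\mu_t)$. We have $\pi_t=r_t/R>0$ provided every snapshot has some coupling weight. However, $\mu_t$ could vanish at a node if $\mu_1$ is not fully supported, in which case $T_{ts}$ is not even defined. I would first resolve this by adopting the standing assumption implicit in \eqref{eq:reweighted_perron_frobenius_mat_rep}, namely that $D_{\mu_t}$ is invertible. This holds, for example, for the uniform initial distribution chosen in Algorithm~\ref{alg:algorithm}, provided every node has positive degree in each snapshot, since then each $\mu_{t+1}=S_t^\top\mu_t$ stays positive. Under that assumption the three steps above are routine.
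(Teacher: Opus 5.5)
Your proof is correct and follows the paper's argument. You get symmetry of $\omega$ from detailed balance, invoking $D_{\nu}\bC_H=\bC_H^\top D_{\nu}$ from Lemma~\ref{lem:spatio_temporal_time_reversibility} directly, where the paper re-derives it entrywise from $D_\pi H = H^\top D_\pi$ and $D_{\mu_t}K_{ts}=T_{ts}^\top D_{\mu_s}$; the edge criterion then follows from positivity of $\bnu$, and row-stochasticity gives $\deg(\zeta(v_i,t))=\pi_t\mu_t(v_i)$, so that normalising recovers $\bC_H$. Your explicit handling of the positivity of $\bnu$, and of why $(K_{ts})_{ij}>0$ encodes a temporal path, fills in two points the paper simply asserts.
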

\begin{proof}

First, let us show that the edge weight is well defined. Let $t<s$. Then
\[
\omega(e_{ij}^{ts})
=
\pi_t\mu_t(v_i)h_{ts}(K_{ts})_{ij}
=
\pi_sh_{st}\mu_t(v_i)(K_{ts})_{ij}
=
\pi_sh_{st}\mu_s(v_j)(T_{ts})^{\top}_{ij}
=
\pi_s\mu_s(v_j)h_{st}(T_{ts})_{ji}
=
\omega(e_{ji}^{st}),
\]
where the second equality follows from the detailed balance condition
$D_{\pi} H = H^{\top} D_{\pi}$, and the third equality follows from the relation $D_{\mu_t} K_{ts} = T_{ts}^{\top} D_{\mu_s}$.

An edge $e_{ij}^{ts}$ exists if and only if its weight is nonzero. Since
\[
\omega(e_{ij}^{ts})
=
\pi_t \mu_t(v_i)\,
(\bC_H)_{\zeta(v_i,t),\,\zeta(v_j,s)},
\]
and $\pi_t \mu_t(v_i) > 0$, we have
\[
\omega(e_{ij}^{ts}) \neq 0
\;\Longleftrightarrow\;
(\bC_H)_{e_{ij}^{ts}} \neq 0
\;\Longleftrightarrow\;
h_{ts}(K_{ts})_{ij} \neq 0.
\]

Lastly, for a vertex $\zeta(v_i,t)$ in $\mathscr V$ we have
\[
\deg(\zeta(v_i,t))=\sum_{s=1}^M\sum_{j=1}^N\omega(e_{ij}^{ts})=\sum_{s=1}^M\sum_{j=1}^N\pi_t\mu_t(v_i)(\bC_H)_{\zeta(v_i,t),\,\zeta(v_j,s)}=\pi_t\mu_t(v_i)\sum_{s=1}^Mh_{ts}=\pi_t\mu_t(v_i).
\]
Let the degree matrix $D_{\mathscr G_H} \in \mathbb R^{MN \times MN}$ be defined as
\[
D_{\mathscr G_H}
=
\diag(\deg(\zeta(v_1,1)),\dots,\deg(\zeta(v_N,M)))
\]
and let $\mathscr W\in\mathbb R^{MN\times MN}$ be a weighted adjacency matrix such that $\mathscr W_{\zeta(v_i,t),\,\zeta(v_j,s)}=\omega(e_{ij}^{ts})$. Then the transition matrix of the random walk on $\mathscr G_H$ is
\[
D_{\mathscr G_H}^{-1} \mathscr W
=
\bC_H. \qedhere
\]
\end{proof}

In this way, by connecting nodes across different network snapshots, we obtain a multilayer static network representation of the temporal network $\mathbb G$ that preserves both the temporal coupling of snapshots and their internal structural properties. 

\section{Computational complexity analysis}
\label{sec:appendix_3}
Here, we discuss and compare the computational complexity and memory requirements of the full and reduced formulations of our algorithm. Throughout this discussion, we assume that the temporal network is sparse, i.e., that the transition matrices $S_t$ are sparse. This is a natural assumption in many real-world applications \cite{Busiello_2017, Demaine_2019}.

\paragraph{Full model.}
The main computational bottleneck in both the full and reduced models is the construction of the matrices $\bC_H$ and $\widehat{\bC}_H$, respectively. The computational complexity depends on the sparsity pattern of the snapshot coupling matrix $H$. Suppose that snapshots separated by at most $r$ time steps are coupled and define
\[
r=\max\{|t-s|:h_{ts}\neq0\}.
\]
Then $H$ contains $\mathcal O(Mr)$ nonzero entries. Consequently, for each $1\le t\le M$, at most $r$ block matrices $K_{ts}$ must be computed. These building blocks of $\bC_H$ can be computed recursively via $K_{t,s+1}=K_{ts}S_s$, so that already computed matrices can be reused. Hence, for each block row of $\bC_H$, at most $r$ matrix multiplications are required. Although the matrices $S_t$ are sparse, their products $S_{ts}$ generally become dense for larger temporal distances. Therefore, each recursive multiplication has computational complexity $\mathcal O(N^2)$, resulting in a total cost of $\mathcal O(rN^2)$ per block row and $\mathcal O(MrN^2)$ for computing all blocks $K_{ts}$. Since the blocks $T_{ts}$ are obtained from $K_{ts}$ by transposition and diagonal rescaling, their computation has the same asymptotic complexity. Consequently, the total complexity of constructing $\bC_H$ scales quadratically with the number of nodes $N$ as $\mathcal O(MrN^2)$. The dominant eigenvectors of the sparse matrix $\bC_H$ can be computed efficiently using e.g. iterative eigensolvers such as the Lanczos algorithm~\cite{Golub_2013} or the Nyström method~\cite{Fowlkes_2004}. Regarding memory requirements, the main bottleneck is storing the matrix $\bC_H$. Since $\bC_H$ contains $\mathcal O(Mr)$ nonzero blocks of size $N\times N$, the total memory requirements are $\mathcal O(MrN^2)$.

\paragraph{Reduced model.}
Let
\[
d_{\mathrm{max}}=\max\{d_t:1\le t\le M\}.
\]
Using similar arguments as above together with the recursive relation $K_{ts}W_s=S_t(K_{t+1,s}W_s)$, the \(\mathcal O(Mr)\) projected covariance blocks \(\widehat{C}_{ts}\) can be computed in $\mathcal O(MrNd_{\mathrm{max}}^2)$
time. Furthermore, computing the matrices \(\widehat{C}_{tt}^{-1}\) requires $\mathcal O(MNd_{\mathrm{max}}^2+Md_{\mathrm{max}}^3)$ operations. Once these matrices are available, assembling all projected transfer operators $\widehat{K}_{ts}$ and $\widehat{T}_{ts}$ requires $\mathcal O(Mrd_{\mathrm{max}}^3)$ operations. Since typically $d_{\mathrm{max}}\ll N$, summing everything up we obtain the overall computational complexity for computing $\widehat{\bC}_H$ to be $\mathcal O(MrNd_{\mathrm{max}}^2)$ which scales linearly with $N$. Furthermore, the reduced formulation provides substantial memory savings. Indeed, storing the sparse matrices $S_t, D_{\mu_t}$, the basis vectors that define the subspaces $\mathcal W_t$, and the reduced operator $\widehat{\bC}_H$ in total require $\mathcal O(MrNd_{\mathrm{max}})$ space which also scales only linearly with $N$.

\section{Heuristics for selecting spatial eigenvectors for clustering}
\label{sec:appendix_4}

Since the spatio-temporal random walk can be decomposed into temporal and spatial components (see lemma~\ref{lem:spatial_temporal_part}), the spectral properties of the transition matrix $\bC_H$ are influenced by both the spectral properties of the coupling matrix $H$ and the multi-step transfer operators. In particular, slowly decaying non-constant eigenvectors of $H$ that do not correspond to metastable behavior of the temporal component of the spatio-temporal random walk can still influence some of the leading spatial eigenvectors of $\bC_H$. As a consequence, not every leading spatial eigenvector can be interpreted as describing coherent sets of the spatio-temporal random walk.

Such effects can be recognized in spatial eigenvectors whose snapshot segments are approximately given by scalar multiples of a base observable carrying information about the spatial organization of nodes. In figure~\ref{fig:heuristics}\textbf{d} we show the leading spatial eigenvectors of the example 1 (see section~\ref{sec:example_1}). The segments of the first and second spatial eigenvectors can both be interpreted approximately as scalar multiples of observables separating the first and last 60 nodes. While the first spatial eigenvector is informative, in the sense that it indicates that these two groups remain coherent and separated from each other throughout the whole evolution, this information is lost in the second spatial eigenvector. More precisely, the corresponding base observable is modulated in this case by a non-constant, slowly decaying eigenvector of the coupling matrix $H$ shown in figure~\ref{fig:heuristics}\textbf{a}. This eigenvector, plotted in green in figure~\ref{fig:heuristics}\textbf{c}, corresponds to the third largest eigenvalue of $H$ (figure~\ref{fig:heuristics}\textbf{b}) and does not reflect metastability of the temporal component of the spatio-temporal random walk. Consequently, the second spatial eigenvector is not useful as a feature coordinate for clustering.

On the other hand, some informative spatial eigenvectors may correspond to relatively small eigenvalues of $\bC_H$. This occurs when they capture structural changes only during specific phases of the evolution, while carrying no spatial information during the remaining time intervals. For example, the fourth spatial eigenvector in figure~\ref{fig:heuristics}\textbf{d} captures spatial organization of the nodes only during phase~2, which accounts for approximately one third of the total time. Due to this interplay between temporal and spatial effects, there is typically no clear spectral gap among the leading eigenvalues of $\bC_H$ (see figure~\ref{fig:numerical_example_1}\textbf{d}) as it is expected in community detection algorithms for static networks. Consequently, standard approaches based on selecting spatial eigenvectors corresponding to the slowest time scales before a spectral gap are not directly applicable. Although a definitive solution to this problem remains open and will be addressed in future work, we provide heuristic guidelines for the coupling schemes considered in this work.

To detect these spurious metastability effects among spatial eigenvectors, we consider the matrices 
\[
F_k=
\begin{bmatrix}
(\bbf^{\mathrm{spat},k}_1)^\top\\
\vdots\\
(\bbf^{\mathrm{spat},k}_M)^\top
\end{bmatrix}
\in\mathbb{R}^{M\times N},
\]
whose rows are the snapshot segments of the spatial eigenvectors $\bbf^{\mathrm{spat},k}$. We now look at its singular value decomposition 
\[
F_k=\Phi_k\Sigma_k\Psi_k^{\top}=
\sum_{i=1}^{\min(M,N)}
\varsigma_i^k\cdot\phi_i^k\otimes_{\mathrm{out}}\psi_i^k,
\]
where the columns of $\Phi_k=[\phi_1^k,\dots,\phi_M^k]\in\mathbb{R}^{M\times M}$ and $\Psi_k=[\psi_1^k,\dots,\psi_N^k]\in\mathbb{R}^{N\times N}$ are the left and right singular vectors of $F_k$, respectively, $\Sigma_k\in\mathbb{R}^{M\times N}$ is the rectangular diagonal matrix containing the singular values
\[
\varsigma_1^k\geq \varsigma_2^k\geq \dots \geq \varsigma_{\min(M,N)}^k
\]
and $\otimes_{\mathrm{out}}$ denotes the outer product of vectors.

Spatial eigenvectors can now be interpreted as modulations of observables given by the dominant right singular vectors $\psi_1^k$ (figure~\ref{fig:heuristics}\textbf{g}), while the type of the modulation by slowly decaying eigenvectors of $H$ is reflected in the corresponding left singular vectors $\phi_1^k$ (figure~\ref{fig:heuristics}\textbf{f}).

In this example we consider the first six spatial eigenvectors. Since
\[
\varsigma_1^k \gg \varsigma_2^k \geq \dots \geq \varsigma_{32}^k,
\]
as shown in figure~\ref{fig:heuristics}\textbf{e}, we consider the approximations
\[
F_k\approx \varsigma_1^k\cdot\phi_1^k\otimes_{\mathrm{out}}\psi_1^k,
\]
for $1\leq k\leq 6$ (compare figure~\ref{fig:heuristics}\textbf{d} and figure~\ref{fig:heuristics}\textbf{h}). By inspecting the dominant left singular vectors (figure~\ref{fig:heuristics}\textbf{f}), we observe that, as shown in figure~\ref{fig:heuristics}\textbf{h}, the second spatial eigenvector is approximately obtained by modulating a base observable (figure~\ref{fig:heuristics}\textbf{g}) with the third eigenvector of $H$ (plotted in green in figure~\ref{fig:heuristics}\textbf{c}), while the fifth and sixth spatial eigenvectors are approximately obtained by modulating base observables with the second eigenvector of $H$ (plotted in orange in figure~\ref{fig:heuristics}\textbf{c}). We therefore exclude these spatial eigenvectors from the feature coordinates used for clustering and apply a clustering algorithm to spatial eigenvectors 1, 3 and 4.

Furthermore, by examining the support of the dominant left singular vectors $\phi_1^3$ and $\phi_1^4$, we observe that the third spatial eigenvector carries structural information during the second and third phases of the evolution, whereas the fourth spatial eigenvector carries structural information only during the second phase.

\begin{figure}
    \centering
    \includegraphics[width=1\linewidth]{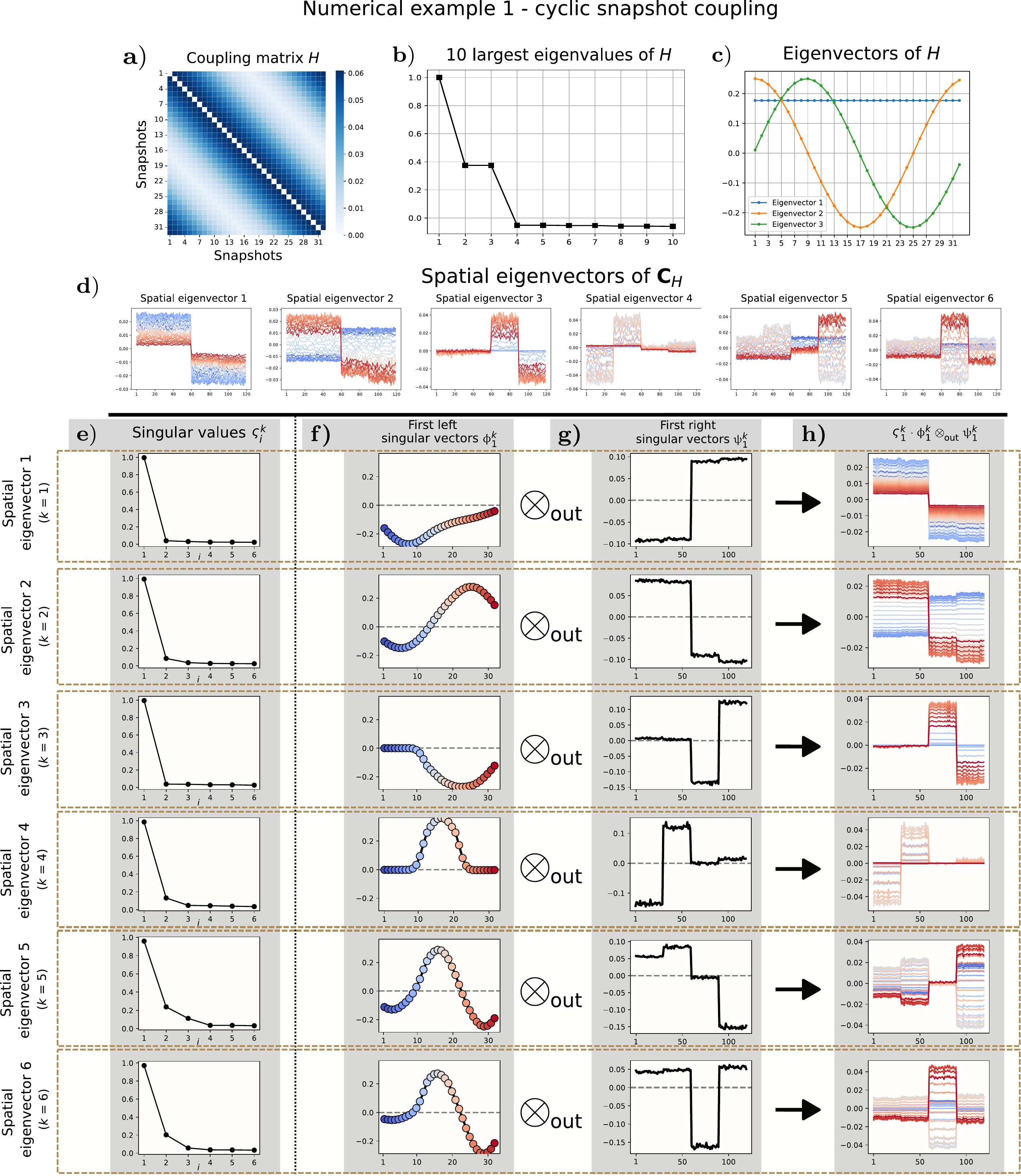}
    \caption{Illustration of the heuristics used for the selection of spatial eigenvectors as feature coordinates for the clustering algorithm in example~1.
    \textbf{a)} Coupling matrix $H$.
    \textbf{b)} Ten largest eigenvalues of $H$.
    \textbf{c)} Dominant eigenvectors of $H$.
    \textbf{d)} Six leading spatial eigenvectors of $\bC_H$ from  example~1.
    \textbf{e)} Singular values of the matrices $F_k$ for $1\leq k\leq 6$.
    \textbf{f)} Dominant left singular vectors $\phi_1^k$ of the matrices $F_k$ for $1\leq k\leq 6$.
    \textbf{g)} Dominant right singular vectors $\psi_1^k$ of the matrices $F_k$ for $1\leq k\leq 6$.
    \textbf{h)} Rows of the rank-one approximations $\varsigma_1^k\cdot\phi_1^k\otimes_{\mathrm{out}}\psi_1^k$.}
    \label{fig:heuristics}
\end{figure}

\end{document}